\documentclass{article}
\usepackage{graphicx} % Required for inserting images
\usepackage{todonotes}
\usepackage{geometry}

\usepackage{amsmath,amsfonts,amssymb,amsthm,physics,mathtools,times}
 \mathtoolsset{showonlyrefs=true}
\usepackage{ifthen}
\usepackage{tikz-3dplot}
\usepackage{authblk}

\usetikzlibrary{shapes}
\usetikzlibrary{shapes.geometric}
\usetikzlibrary{backgrounds}
\usetikzlibrary{plotmarks}
\usetikzlibrary{patterns}
\usetikzlibrary{arrows.meta}

\usepackage{hyperref,placeins}
\usepackage[most]{tcolorbox} 
\newtcolorbox{todobox}{ colback=orange!10, colframe=orange!70!black, title=Strategy (draft), fonttitle=\bfseries, breakable }
\usepackage{booktabs}

\newcommand{\dist}[1]{\operatorname{dist}(#1)}
\newcommand{\checkset}{\mathcal{S}}
\newcommand{\dir}{a}

\newcommand{\sumtauP}[1]{\Sigma_{#1}}

\newcommand{\sumtauPup}[2]{\Sigma_{#1}^{#2}}

\usepackage[
  backend=biber,
  style=numeric-comp,
  sorting=none,
  url=false,
  maxbibnames=999
]{biblatex}
\definecolor{teal}{HTML}{00C2B0}
\definecolor{violet}{HTML}{9F00FF}
\definecolor{yellow}{HTML}{F0CE00}

\makeatletter

\newlength{\uvw@unit}
\newlength{\uvw@axis}
\def\uvw@setaxis{\settoheight{\uvw@axis}{\ensuremath{\vcenter{\hbox{}}}}}

\tikzset{
  uvw box/.style={line width=0.9pt},
  uvw label/.style={font=\normalsize},
  uvw dim/.style={stealth-stealth, line width=0.5pt},
  uvw dim label/.style={font=\footnotesize, inner sep=2pt},
  uvw guide/.style={line width=0.3pt, gray!70},
  uvw span/.style={line width=0.6pt},
  uvw span label/.style={font=\small, inner sep=2pt},
  uvw axes/.style={-stealth, line width=0.6pt},
  uvw axis label/.style={font=\footnotesize, inner sep=1.5pt},
  uvw note/.style={font=\footnotesize, align=center, inner sep=2pt},
}

\def\uvw@ifempty#1{\if\relax\detokenize\expandafter{#1}\relax
  \expandafter\@firstoftwo\else\expandafter\@secondoftwo\fi}
\def\uvw@ifaxes#1{\uvw@ifempty#1\@secondoftwo
  {\ifthenelse{\equal{#1}{none}}\@secondoftwo\@firstoftwo}}
\newsavebox{\uvw@nbox}
\def\uvw@noteheight#1#2{%
  \uvw@ifempty#1{\def#2{0}}{%
    \sbox\uvw@nbox{\footnotesize\begin{tabular}{@{}c@{}}#1\end{tabular}}%
    \dimen@=\ht\uvw@nbox \advance\dimen@\dp\uvw@nbox \advance\dimen@ 4pt
    \pgfmathsetmacro#2{\dimen@/\uvw@unit}}}

\def\uvw@defaults{\pgfkeys{/uvw/.cd,
  unit=9mm,
  u width=3, v width=1.6, w width=3, height=3,
  u label=U, v label=V, w label=W,
  u length={\Theta(L)}, v length={L^{2/3}}, w length={\Theta(L)},
  total length={\Theta(L)}, height label={},
  v note={}, note side=above,
  u color=teal,   u fill=teal!30,
  v color=yellow, v fill=yellow!30,
  w color=violet, w fill=violet!30,
  axes=none, tikz={}}}
\pgfkeys{/uvw/.cd,
  v note/.store in=\uvw@nv,      note side/.store in=\uvw@ns,
  axes length/.code={\def\uvwaxislength{#1}},
  axes gap/.code={\def\uvwaxisgap{#1}},
  unit/.code={\setlength{\uvw@unit}{#1}},
  u width/.store in=\uvw@wu,     v width/.store in=\uvw@wv,
  w width/.store in=\uvw@ww,     height/.store in=\uvw@h,
  u label/.store in=\uvw@lu,     v label/.store in=\uvw@lv,
  w label/.store in=\uvw@lw,
  u length/.store in=\uvw@du,    v length/.store in=\uvw@dv,
  w length/.store in=\uvw@dw,    total length/.store in=\uvw@dt,
  height label/.store in=\uvw@dh,
  u color/.store in=\uvw@cu,     u fill/.store in=\uvw@fu,
  v color/.store in=\uvw@cv,     v fill/.store in=\uvw@fv,
  w color/.store in=\uvw@cw,     w fill/.store in=\uvw@fw,
  axes/.store in=\uvw@axspec,    tikz/.store in=\uvw@tikz,
  measures/.code={\ifthenelse{\equal{#1}{false}}{%
      \def\uvw@du{}\def\uvw@dv{}\def\uvw@dw{}\def\uvw@dt{}\def\uvw@dh{}}{}},
}

\newcommand{\uvwaxislength}{5mm}
\newcommand{\uvwaxisgap}{2mm}
\def\uvw@axsplit#1/#2\@nil{\def\uvw@axh{#1}\def\uvw@axv{#2}}
\def\uvw@axinfo#1#2\@nil{%
  \if-#1\def\uvw@axd{-1}\def\uvw@axl{#2}\else\def\uvw@axd{1}\def\uvw@axl{#1#2}\fi}
\def\uvw@drawaxes#1#2#3#4{% #1 = <h>/<v>, #2/#3 = left/right x, #4 = bottom y
  \def\uvw@axtmp{#1}\expandafter\uvw@axsplit\uvw@axtmp\@nil
  \expandafter\uvw@axinfo\uvw@axh\@nil \let\uvw@axhd\uvw@axd \let\uvw@axhl\uvw@axl
  \expandafter\uvw@axinfo\uvw@axv\@nil \let\uvw@axvd\uvw@axd \let\uvw@axvl\uvw@axl
  \ifnum\uvw@axhd>0 \def\uvw@axha{west}\def\uvw@axbx{#2}\def\uvw@axox{-\uvwaxisgap}%
  \else             \def\uvw@axha{east}\def\uvw@axbx{#3}\def\uvw@axox{\uvwaxisgap}\fi
  \ifnum\uvw@axvd>0 \def\uvw@axva{south}\else\def\uvw@axva{north}\fi
  \begin{scope}[shift={({\uvw@axbx},{#4})}, xshift={\uvw@axox}, yshift={-\uvwaxisgap}]
    \draw[uvw axes] (0pt,0pt) -- ({\uvw@axhd*\uvwaxislength},0pt)
      node[uvw axis label, anchor=\uvw@axha] {$\uvw@axhl$};
    \draw[uvw axes] (0pt,0pt) -- (0pt,{\uvw@axvd*\uvwaxislength})
      node[uvw axis label, anchor=\uvw@axva] {$\uvw@axvl$};
  \end{scope}}

\def\uvw@dimh#1#2#3#4#5{% x1, x2, y of the line, y of the box edge, text
  \draw[uvw guide] ({#1},{#4}) -- ({#1},{#3});
  \draw[uvw guide] ({#2},{#4}) -- ({#2},{#3});
  \draw[uvw dim] ({#1},{#3}) -- ({#2},{#3})
    node[uvw dim label, midway, \uvw@dimside] {$#5$};}

\def\uvw@vdots#1#2{% centre x, centre y
  \foreach \uvw@dd in {-1,0,1}
    {\fill ({#1},{#2}) ++(0pt,{\uvw@dd*2.8pt}) circle[radius=0.55pt];}}

\newcommand{\uvw}[1][]{%
  \uvw@setaxis
  \begingroup
  \uvw@defaults \pgfkeys{/uvw/.cd,#1}%
  \pgfmathsetmacro{\uvw@xv}{\uvw@wu}%              left edge of V
  \pgfmathsetmacro{\uvw@xw}{\uvw@wu+\uvw@wv}%      left edge of W
  \pgfmathsetmacro{\uvw@tot}{\uvw@wu+\uvw@wv+\uvw@ww}%
  \pgfmathsetmacro{\uvw@yb}{-\uvw@h-0.85}%         measures below
  \uvw@noteheight\uvw@nv\uvw@nh
  \pgfmathsetmacro{\uvw@yt}{max(0.85,0.4+\uvw@nh)}% measure above
  \ifthenelse{\equal{\uvw@ns}{below}}{%
    \def\uvw@na{north}\def\uvw@nb{0}%
    \uvw@ifempty\uvw@du{}{\def\uvw@nb{1}}%
    \uvw@ifempty\uvw@dv{}{\def\uvw@nb{1}}%
    \uvw@ifempty\uvw@dw{}{\def\uvw@nb{1}}%
    \ifnum\uvw@nb=1 \pgfmathsetmacro{\uvw@ny}{\uvw@yb-14pt/\uvw@unit}%
    \else           \pgfmathsetmacro{\uvw@ny}{-\uvw@h-0.1}\fi
  }{\def\uvw@na{south}\def\uvw@ny{0.1}}%
  \begin{tikzpicture}[x=\uvw@unit, y=\uvw@unit,
      baseline={([yshift=-\uvw@axis]uvw@c)}]
    \expandafter\tikzset\expandafter{\uvw@tikz}%
    \draw[uvw box, draw=\uvw@cv, fill=\uvw@fv]
      ({\uvw@xv},0) rectangle ({\uvw@xw},{-\uvw@h});
    \draw[uvw box, draw=\uvw@cu, fill=\uvw@fu]
      (0,0) rectangle ({\uvw@wu},{-\uvw@h});
    \draw[uvw box, draw=\uvw@cw, fill=\uvw@fw]
      ({\uvw@xw},0) rectangle ({\uvw@tot},{-\uvw@h});
    \coordinate (uvw@c) at ({\uvw@tot/2},{-\uvw@h/2});
    \if\relax\detokenize\expandafter{\uvw@lu}\relax\else
      \node[uvw label, text=black] at ({\uvw@wu/2},{-\uvw@h/2}) {$\uvw@lu$};\fi
    \if\relax\detokenize\expandafter{\uvw@lv}\relax\else
      \node[uvw label, text=black] at ({(\uvw@xv+\uvw@xw)/2},{-\uvw@h/2}) {$\uvw@lv$};\fi
    \if\relax\detokenize\expandafter{\uvw@lw}\relax\else
      \node[uvw label, text=black] at ({(\uvw@xw+\uvw@tot)/2},{-\uvw@h/2}) {$\uvw@lw$};\fi
    \def\uvw@dimside{below}%
    \if\relax\detokenize\expandafter{\uvw@du}\relax\else
      \uvw@dimh{0}{\uvw@wu}{\uvw@yb}{-\uvw@h}{\uvw@du}\fi
    \if\relax\detokenize\expandafter{\uvw@dv}\relax\else
      \uvw@dimh{\uvw@xv}{\uvw@xw}{\uvw@yb}{-\uvw@h}{\uvw@dv}\fi
    \if\relax\detokenize\expandafter{\uvw@dw}\relax\else
      \uvw@dimh{\uvw@xw}{\uvw@tot}{\uvw@yb}{-\uvw@h}{\uvw@dw}\fi
    \def\uvw@dimside{above}%
    \if\relax\detokenize\expandafter{\uvw@dt}\relax\else
      \uvw@dimh{0}{\uvw@tot}{\uvw@yt}{0}{\uvw@dt}\fi
    \if\relax\detokenize\expandafter{\uvw@dh}\relax\else
      \draw[uvw guide] (0,0) -- (-0.95,0);
      \draw[uvw guide] (0,{-\uvw@h}) -- (-0.95,{-\uvw@h});
      \draw[uvw dim] (-0.85,0) -- (-0.85,{-\uvw@h})
        node[uvw dim label, midway, left] {$\uvw@dh$};\fi
    \uvw@ifempty\uvw@nv{}{%
      \node[uvw note, fill=white, anchor=\uvw@na]
        at ({(\uvw@xv+\uvw@xw)/2},{\uvw@ny}) {\uvw@nv};}%
    \if\relax\detokenize\expandafter{\uvw@axspec}\relax\else
      \ifthenelse{\equal{\uvw@axspec}{none}}{}{%
        \expandafter\uvw@drawaxes\expandafter{\uvw@axspec}%
          {0}{\uvw@tot}{-\uvw@h}}%
    \fi
  \end{tikzpicture}%
  \endgroup}

\def\uvws@defaults{\pgfkeys{/uvws/.cd,
  unit=8mm, total=11, margin=2.3, v width=0.8, height=1.6,
  v labels={1,2,dots,i,dots,{L^{1/6}}},
  u label=U, v label=V, w label=W,
  total length={\Theta(L)}, margin label={\Omega(L/3)}, v length={L^{2/3}},
  u color=teal,   u fill=teal!30,
  v color=yellow, v fill=yellow!30,
  w color=violet, w fill=violet!30,
  stacked=false, stack labels={1,2,dots,i,dots,I},
  row height=1.6, row gap=0.5, dots gap=1.1, touching=true, dots weight=0,
  axes=none, tikz={}}}
\def\uvws@stackdefaults{\pgfkeys{/uvws/.cd,
  unit=4mm, total=17.5, margin=3.6, v width=1.35, total length={}}}
\newif\ifuvws@stacked
\newif\ifuvws@touch
\pgfkeys{/uvws/.cd,
  touching/.is if=uvws@touch,
  axes length/.code={\def\uvwaxislength{#1}},
  axes gap/.code={\def\uvwaxisgap{#1}},
  stacked/.is if=uvws@stacked,
  stack labels/.store in=\uvws@sl,
  row height/.store in=\uvws@rh,  row gap/.store in=\uvws@rg,
  dots gap/.store in=\uvws@dg,  dots weight/.store in=\uvws@dw,
  measures/.code={\ifthenelse{\equal{#1}{false}}{%
      \def\uvws@dt{}\def\uvws@dm{}\def\uvws@dv{}}{}},
  unit/.code={\setlength{\uvw@unit}{#1}},
  total/.store in=\uvws@tot,     margin/.store in=\uvws@mar,
  v width/.store in=\uvws@wv,    height/.store in=\uvws@h,
  v labels/.store in=\uvws@vl,
  u label/.store in=\uvws@lu,    v label/.store in=\uvws@lv,
  w label/.store in=\uvws@lw,
  total length/.store in=\uvws@dt, margin label/.store in=\uvws@dm,
  v length/.store in=\uvws@dv,
  u color/.store in=\uvws@cu,    u fill/.store in=\uvws@fu,
  v color/.store in=\uvws@cv,    v fill/.store in=\uvws@fv,
  w color/.store in=\uvws@cw,    w fill/.store in=\uvws@fw,
  axes/.store in=\uvws@axspec,   tikz/.store in=\uvws@tikz,
}

\def\uvws@row#1#2#3{% #1 = offset of V from U_1, #2 = entry, #3 = top of the row
  \pgfmathsetmacro{\uvwsxv}{\uvws@mar+#1}%
  \pgfmathsetmacro{\uvwsxw}{\uvwsxv+\uvws@wv}%
  \pgfmathsetmacro{\uvwsyb}{#3-\uvws@rh}%
  \pgfmathsetmacro{\uvwsym}{#3-\uvws@rh/2}%
  \draw[uvw box, draw=\uvws@cv, fill=\uvws@fv] ({\uvwsxv},{#3}) rectangle ({\uvwsxw},{\uvwsyb});
  \draw[uvw box, draw=\uvws@cu, fill=\uvws@fu] (0,{#3}) rectangle ({\uvwsxv},{\uvwsyb});
  \draw[uvw box, draw=\uvws@cw, fill=\uvws@fw] ({\uvwsxw},{#3}) rectangle ({\uvws@tot},{\uvwsyb});
  \if\relax\detokenize\expandafter{\uvws@lu}\relax\else
    \node[uvw label, text=black] at ({\uvwsxv/2},{\uvwsym}) {$\uvws@lu_{#2}$};\fi
  \if\relax\detokenize\expandafter{\uvws@lv}\relax\else
    \node[uvw label, text=black, font=\footnotesize]
      at ({(\uvwsxv+\uvwsxw)/2},{\uvwsym}) {$\uvws@lv_{#2}$};\fi
  \if\relax\detokenize\expandafter{\uvws@lw}\relax\else
    \node[uvw label, text=black] at ({(\uvwsxw+\uvws@tot)/2},{\uvwsym}) {$\uvws@lw_{#2}$};\fi}
\def\uvws@stackpic{%
  \xdef\uvws@n{0}\xdef\uvws@m{0}%
  \foreach \uvwse in \uvws@sl {%
    \ifthenelse{\equal{\uvwse}{dots}}%
      {\pgfmathtruncatemacro{\uvwsc}{\uvws@m+1}\xdef\uvws@m{\uvwsc}}%
      {\pgfmathtruncatemacro{\uvwsc}{\uvws@n+1}\xdef\uvws@n{\uvwsc}}}%
  \ifuvws@touch \ifnum\uvws@m>0
      \def\uvws@ra{0}\def\uvws@rb{1}%
    \else \def\uvws@ra{1}\let\uvws@rb\uvws@dw\fi
  \else \def\uvws@ra{1}\let\uvws@rb\uvws@dw\fi
  \pgfmathsetmacro{\uvws@G}{(\uvws@n-1)*\uvws@ra+\uvws@m*\uvws@rb}%
  \pgfmathsetmacro{\uvws@gap}{\uvws@G>0 ?
      (\uvws@tot-2*\uvws@mar-\uvws@n*\uvws@wv)/\uvws@G : 0}%
  \ifdim\uvws@gap pt<0pt
    \PackageWarning{uvw}{\string\uvwshift[stacked]: the windows do not fit
      between U_1 and W_I and overlap; make total larger or margin or
      v width smaller}\fi
  \xdef\uvwsy{0}\xdef\uvwsfirst{1}%
  \foreach \uvwse in \uvws@sl {%
    \ifthenelse{\equal{\uvwse}{dots}}{%
      \pgfmathsetmacro{\uvwsyn}{\uvwsy-\uvws@dg}\xdef\uvwsy{\uvwsyn}\xdef\uvwsfirst{1}%
    }{%
      \ifnum\uvwsfirst=1 \xdef\uvwsfirst{0}\else
        \pgfmathsetmacro{\uvwsyn}{\uvwsy-\uvws@rg}\xdef\uvwsy{\uvwsyn}\fi
      \pgfmathsetmacro{\uvwsyn}{\uvwsy-\uvws@rh}\xdef\uvwsy{\uvwsyn}%
    }}%
  \let\uvws@ylast\uvwsy
  \begin{tikzpicture}[x=\uvw@unit, y=\uvw@unit,
      baseline={([yshift=-\uvw@axis]uvws@c)}]
    \expandafter\tikzset\expandafter{\uvws@tikz}%
    \xdef\uvwsy{0}\xdef\uvwsfirst{1}\xdef\uvwsj{0}\xdef\uvwsg{0}%
    \foreach \uvwse in \uvws@sl {%
      \ifthenelse{\equal{\uvwse}{dots}}{%
        \uvw@vdots{\uvws@tot/2}{\uvwsy-\uvws@dg/2}%
        \pgfmathsetmacro{\uvwsyn}{\uvwsy-\uvws@dg}\xdef\uvwsy{\uvwsyn}\xdef\uvwsfirst{1}%
        \pgfmathsetmacro{\uvwsgn}{\uvwsg+\uvws@rb}\xdef\uvwsg{\uvwsgn}%
      }{%
        \ifnum\uvwsfirst=1 \xdef\uvwsfirst{0}\else
          \pgfmathsetmacro{\uvwsyn}{\uvwsy-\uvws@rg}\xdef\uvwsy{\uvwsyn}\fi
        \pgfmathsetmacro{\uvwsoff}{\uvwsj*\uvws@wv+\uvwsg*\uvws@gap}%
        \uvws@row{\uvwsoff}{\uvwse}{\uvwsy}%
        \pgfmathsetmacro{\uvwsyn}{\uvwsy-\uvws@rh}\xdef\uvwsy{\uvwsyn}%
        \pgfmathtruncatemacro{\uvwsjn}{\uvwsj+1}\xdef\uvwsj{\uvwsjn}%
        \pgfmathsetmacro{\uvwsgn}{\uvwsg+\uvws@ra}\xdef\uvwsg{\uvwsgn}%
      }}%
    \coordinate (uvws@c) at ({\uvws@tot/2},{\uvws@ylast/2});
    \def\uvw@dimside{above}%
    \if\relax\detokenize\expandafter{\uvws@dm}\relax\else
      \uvw@dimh{0}{\uvws@mar}{0.5}{0}{\uvws@dm}\fi
    \if\relax\detokenize\expandafter{\uvws@dv}\relax\else
      \uvw@dimh{\uvws@mar}{\uvws@mar+\uvws@wv}{0.5}{0}{\uvws@dv}\fi
    \def\uvw@dimside{below}%
    \pgfmathsetmacro{\uvws@yb}{\uvws@ylast-0.85}%
    \uvw@ifempty\uvws@dm{%
      \let\uvws@ytb\uvws@yb
      \uvw@ifaxes\uvws@axspec{% clear of the coordinate arrows
        \pgfmathsetmacro{\uvws@ytb}{min(\uvws@ytb,
          \uvws@ylast-(\uvwaxisgap+2.8mm)/\uvw@unit)}}{}%
    }{%
      \uvw@dimh{\uvws@tot-\uvws@mar}{\uvws@tot}{\uvws@yb}{\uvws@ylast}{\uvws@dm}%
      \pgfmathsetmacro{\uvws@ytb}{\uvws@yb-17pt/\uvw@unit}%
    }%
    \uvw@ifempty\uvws@dt{}{%
      \uvw@dimh{0}{\uvws@tot}{\uvws@ytb}{\uvws@ylast}{\uvws@dt}}%
    \if\relax\detokenize\expandafter{\uvws@axspec}\relax\else
      \ifthenelse{\equal{\uvws@axspec}{none}}{}{%
        \expandafter\uvw@drawaxes\expandafter{\uvws@axspec}%
          {0}{\uvws@tot}{\uvws@ylast}}%
    \fi
  \end{tikzpicture}}

\newcommand{\uvwshift}[1][]{%
  \uvw@setaxis
  \begingroup
  \uvws@defaults \pgfkeys{/uvws/.cd,#1}%
  \ifuvws@stacked
    \uvws@stackdefaults \pgfkeys{/uvws/.cd,#1}\uvws@stackpic
  \else
  \xdef\uvws@k{1}%
  \foreach \uvwse [count=\uvwsc] in \uvws@vl {\xdef\uvws@k{\uvwsc}}%
  \pgfmathsetmacro{\uvws@step}{\uvws@k>1 ?
      (\uvws@tot-2*\uvws@mar-\uvws@wv)/(\uvws@k-1) : 0}%
  \ifdim\uvws@step pt<\uvws@wv pt \ifnum\uvws@k>1
    \PackageWarning{uvw}{\string\uvwshift: the windows overlap; make total
      larger or margin or v width smaller}\fi\fi
  \pgfmathsetmacro{\uvws@xa}{\uvws@mar}%
  \pgfmathsetmacro{\uvws@yl}{-\uvws@h-0.3}%   the V labels
  \pgfmathsetmacro{\uvws@yb}{-\uvws@h-1.15}%  L/3 measures
  \pgfmathsetmacro{\uvws@yv}{0.5}%            width of V_1
  \if\relax\detokenize\expandafter{\uvws@dv}\relax
    \pgfmathsetmacro{\uvws@yt}{0.5}%          total length (no V measure)
  \else
    \pgfmathsetmacro{\uvws@yt}{1.3}%          total length, above the V one
  \fi
  \begin{tikzpicture}[x=\uvw@unit, y=\uvw@unit,
      baseline={([yshift=-\uvw@axis]uvws@c)}]
    \expandafter\tikzset\expandafter{\uvws@tikz}%
    \coordinate (uvws@c) at ({\uvws@tot/2},{-\uvws@h/2});
    \fill[\uvws@fu] (0,0) rectangle ({\uvws@mar},{-\uvws@h});
    \fill[\uvws@fw] ({\uvws@tot-\uvws@mar},0) rectangle ({\uvws@tot},{-\uvws@h});
    \draw[uvw guide, densely dashed] ({\uvws@mar},0) -- ({\uvws@mar},{-\uvws@h});
    \draw[uvw guide, densely dashed]
      ({\uvws@tot-\uvws@mar},0) -- ({\uvws@tot-\uvws@mar},{-\uvws@h});
    \foreach \uvwse [count=\uvwsi] in \uvws@vl {%
      \pgfmathsetmacro{\uvwsxc}{\uvws@mar+(\uvwsi-1)*\uvws@step+0.5*\uvws@wv}%
      \ifthenelse{\equal{\uvwse}{dots}}{%
        \node[uvw label] at ({\uvwsxc},{-\uvws@h/2}) {$\cdots$};%
      }{%
        \draw[uvw box, draw=\uvws@cv, fill=\uvws@fv]
          ({\uvwsxc-\uvws@wv/2},0) rectangle ({\uvwsxc+\uvws@wv/2},{-\uvws@h});
        \if\relax\detokenize\expandafter{\uvws@lv}\relax\else
          \node[uvw dim label, text=black, anchor=north]
            at ({\uvwsxc},{\uvws@yl}) {$\uvws@lv_{\uvwse}$};\fi
      }}%
    \draw[uvw box] (0,0) rectangle ({\uvws@tot},{-\uvws@h});
    \if\relax\detokenize\expandafter{\uvws@lu}\relax\else
      \node[uvw label, text=black] at ({\uvws@mar/2},{-\uvws@h/2})
        {$\uvws@lu$};\fi
    \if\relax\detokenize\expandafter{\uvws@lw}\relax\else
      \node[uvw label, text=black]
        at ({\uvws@tot-\uvws@mar/2},{-\uvws@h/2}) {$\uvws@lw$};\fi
    \def\uvw@dimside{above}%
    \if\relax\detokenize\expandafter{\uvws@dv}\relax\else
      \uvw@dimh{\uvws@xa}{\uvws@xa+\uvws@wv}{\uvws@yv}{0}{\uvws@dv}\fi
    \if\relax\detokenize\expandafter{\uvws@dt}\relax\else
      \uvw@dimh{0}{\uvws@tot}{\uvws@yt}{0}{\uvws@dt}\fi
    \def\uvw@dimside{below}%
    \if\relax\detokenize\expandafter{\uvws@dm}\relax\else
      \uvw@dimh{0}{\uvws@mar}{\uvws@yb}{-\uvws@h}{\uvws@dm}%
      \uvw@dimh{\uvws@tot-\uvws@mar}{\uvws@tot}{\uvws@yb}{-\uvws@h}{\uvws@dm}\fi
    \if\relax\detokenize\expandafter{\uvws@axspec}\relax\else
      \ifthenelse{\equal{\uvws@axspec}{none}}{}{%
        \expandafter\uvw@drawaxes\expandafter{\uvws@axspec}%
          {0}{\uvws@tot}{-\uvws@h}}%
    \fi
  \end{tikzpicture}%
  \fi
  \endgroup}

\newif\ifuvwp@wrap
\def\uvwp@defaults{\pgfkeys{/uvwp/.cd,
  unit=9mm, u width=3, v width=1.2, w width=3, height=2.2,
  u label=U, v label=V, w label=W,
  u length={\Theta(L)}, v length={L^{2/3}}, w length={\Theta(L)},
  total length={\Theta(L)}, wrap label={},
  v note={}, note side=above,
  u color=teal,   u fill=teal!30,
  v color=yellow, v fill=yellow!30,
  w color=violet, w fill=violet!30,
  wrap=true, axes=none, tikz={}}}
\pgfkeys{/uvwp/.cd,
  v note/.code={\def\uvwp@nva{#1}\def\uvwp@nvb{#1}},
  v1 note/.store in=\uvwp@nva,    v2 note/.store in=\uvwp@nvb,
  note side/.store in=\uvwp@ns,
  axes length/.code={\def\uvwaxislength{#1}},
  axes gap/.code={\def\uvwaxisgap{#1}},
  unit/.code={\setlength{\uvw@unit}{#1}},
  u width/.store in=\uvwp@wu,
  v width/.code={\def\uvwp@wva{#1}\def\uvwp@wvb{#1}},
  v1 width/.store in=\uvwp@wva,   v2 width/.store in=\uvwp@wvb,
  w width/.store in=\uvwp@ww,     height/.store in=\uvwp@h,
  u label/.store in=\uvwp@lu,
  v label/.code={\def\uvwp@lv{#1}\uvw@ifempty\uvwp@lv
      {\def\uvwp@lva{}\def\uvwp@lvb{}}{\def\uvwp@lva{#1_{1}}\def\uvwp@lvb{#1_{2}}}},
  v1 label/.store in=\uvwp@lva,   v2 label/.store in=\uvwp@lvb,
  w label/.store in=\uvwp@lw,
  u length/.store in=\uvwp@du,
  v length/.code={\def\uvwp@dva{#1}\def\uvwp@dvb{#1}},
  v1 length/.store in=\uvwp@dva,  v2 length/.store in=\uvwp@dvb,
  w length/.store in=\uvwp@dw,    total length/.store in=\uvwp@dt,
  wrap label/.store in=\uvwp@dwrap,
  u color/.store in=\uvwp@cu,     u fill/.store in=\uvwp@fu,
  v color/.store in=\uvwp@cv,     v fill/.store in=\uvwp@fv,
  w color/.store in=\uvwp@cw,     w fill/.store in=\uvwp@fw,
  wrap/.is if=uvwp@wrap,
  axes/.store in=\uvwp@axspec,    tikz/.store in=\uvwp@tikz,
  measures/.code={\ifthenelse{\equal{#1}{false}}{%
      \def\uvwp@du{}\def\uvwp@dva{}\def\uvwp@dvb{}\def\uvwp@dw{}\def\uvwp@dt{}}{}},
}

\newcommand{\uvwper}[1][]{%
  \uvw@setaxis
  \begingroup
  \uvwp@defaults \pgfkeys{/uvwp/.cd,#1}%
  \pgfmathsetmacro{\uvwp@xa}{\uvwp@wu}%                   U  | V_1
  \pgfmathsetmacro{\uvwp@xb}{\uvwp@wu+\uvwp@wva}%          V_1| W
  \pgfmathsetmacro{\uvwp@xc}{\uvwp@wu+\uvwp@wva+\uvwp@ww}% W  | V_2
  \pgfmathsetmacro{\uvwp@tot}{\uvwp@xc+\uvwp@wvb}%
  \pgfmathsetmacro{\uvwp@yb}{-\uvwp@h-0.85}%   length measures, below
  \pgfmathsetmacro{\uvwp@yw}{\uvwp@yb-0.95}%   the identification arc
  \uvw@noteheight\uvwp@nva\uvwp@nha
  \uvw@noteheight\uvwp@nvb\uvwp@nhb
  \pgfmathsetmacro{\uvwp@nh}{max(\uvwp@nha,\uvwp@nhb)}%
  \pgfmathsetmacro{\uvwp@yt}{max(0.85,0.4+\uvwp@nh)}% total measure, above
  \ifthenelse{\equal{\uvwp@ns}{below}}{%
    \def\uvwp@na{north}\def\uvwp@nb{0}%
    \uvw@ifempty\uvwp@du{}{\def\uvwp@nb{1}}%
    \uvw@ifempty\uvwp@dva{}{\def\uvwp@nb{1}}%
    \uvw@ifempty\uvwp@dvb{}{\def\uvwp@nb{1}}%
    \uvw@ifempty\uvwp@dw{}{\def\uvwp@nb{1}}%
    \ifnum\uvwp@nb=1 \pgfmathsetmacro{\uvwp@ny}{\uvwp@yb-14pt/\uvw@unit}%
    \else            \pgfmathsetmacro{\uvwp@ny}{-\uvwp@h-0.1}\fi
    \pgfmathsetmacro{\uvwp@yw}{min(\uvwp@yw,\uvwp@ny-\uvwp@nh-0.3)}%
  }{\def\uvwp@na{south}\def\uvwp@ny{0.1}}%
  \begin{tikzpicture}[x=\uvw@unit, y=\uvw@unit,
      baseline={([yshift=-\uvw@axis]uvwp@c)}]
    \expandafter\tikzset\expandafter{\uvwp@tikz}%
    \coordinate (uvwp@c) at ({\uvwp@tot/2},{-\uvwp@h/2});
    \draw[uvw box, draw=\uvwp@cv, fill=\uvwp@fv]
      ({\uvwp@xa},0) rectangle ({\uvwp@xb},{-\uvwp@h});
    \draw[uvw box, draw=\uvwp@cv, fill=\uvwp@fv]
      ({\uvwp@xc},0) rectangle ({\uvwp@tot},{-\uvwp@h});
    \draw[uvw box, draw=\uvwp@cu, fill=\uvwp@fu]
      (0,0) rectangle ({\uvwp@xa},{-\uvwp@h});
    \draw[uvw box, draw=\uvwp@cw, fill=\uvwp@fw]
      ({\uvwp@xb},0) rectangle ({\uvwp@xc},{-\uvwp@h});
    \if\relax\detokenize\expandafter{\uvwp@lu}\relax\else
      \node[uvw label, text=black] at ({\uvwp@xa/2},{-\uvwp@h/2}) {$\uvwp@lu$};\fi
    \uvw@ifempty\uvwp@lva{}{%
      \node[uvw label, text=black, font=\footnotesize]
        at ({(\uvwp@xa+\uvwp@xb)/2},{-\uvwp@h/2}) {$\uvwp@lva$};}%
    \uvw@ifempty\uvwp@lvb{}{%
      \node[uvw label, text=black, font=\footnotesize]
        at ({(\uvwp@xc+\uvwp@tot)/2},{-\uvwp@h/2}) {$\uvwp@lvb$};}%
    \if\relax\detokenize\expandafter{\uvwp@lw}\relax\else
      \node[uvw label, text=black]
        at ({(\uvwp@xb+\uvwp@xc)/2},{-\uvwp@h/2}) {$\uvwp@lw$};\fi
    \def\uvw@dimside{below}%
    \if\relax\detokenize\expandafter{\uvwp@du}\relax\else
      \uvw@dimh{0}{\uvwp@xa}{\uvwp@yb}{-\uvwp@h}{\uvwp@du}\fi
    \uvw@ifempty\uvwp@dva{}{%
      \uvw@dimh{\uvwp@xa}{\uvwp@xb}{\uvwp@yb}{-\uvwp@h}{\uvwp@dva}}%
    \uvw@ifempty\uvwp@dvb{}{%
      \uvw@dimh{\uvwp@xc}{\uvwp@tot}{\uvwp@yb}{-\uvwp@h}{\uvwp@dvb}}%
    \if\relax\detokenize\expandafter{\uvwp@dw}\relax\else
      \uvw@dimh{\uvwp@xb}{\uvwp@xc}{\uvwp@yb}{-\uvwp@h}{\uvwp@dw}\fi
    \def\uvw@dimside{above}%
    \if\relax\detokenize\expandafter{\uvwp@dt}\relax\else
      \uvw@dimh{0}{\uvwp@tot}{\uvwp@yt}{0}{\uvwp@dt}\fi
    \ifuvwp@wrap
      \draw[uvw guide] (0,{-\uvwp@h}) -- (0,{\uvwp@yw+0.15});
      \draw[uvw guide] ({\uvwp@tot},{-\uvwp@h}) -- ({\uvwp@tot},{\uvwp@yw+0.15});
      \draw[uvw dim, rounded corners=2pt]
        (0,{\uvwp@yw}) -- (0,{\uvwp@yw-0.45})
        -- node[uvw dim label, midway, below] {$\uvwp@dwrap$}
           ({\uvwp@tot},{\uvwp@yw-0.45}) -- ({\uvwp@tot},{\uvwp@yw});
    \fi
    \uvw@ifempty\uvwp@nva{}{%
      \node[uvw note, fill=white, anchor=\uvwp@na]
        at ({(\uvwp@xa+\uvwp@xb)/2},{\uvwp@ny}) {\uvwp@nva};}%
    \uvw@ifempty\uvwp@nvb{}{%
      \node[uvw note, fill=white, anchor=\uvwp@na]
        at ({(\uvwp@xc+\uvwp@tot)/2},{\uvwp@ny}) {\uvwp@nvb};}%
    \if\relax\detokenize\expandafter{\uvwp@axspec}\relax\else
      \ifthenelse{\equal{\uvwp@axspec}{none}}{}{%
        \expandafter\uvw@drawaxes\expandafter{\uvwp@axspec}%
          {0}{\uvwp@tot}{-\uvwp@h}}%
    \fi
  \end{tikzpicture}%
  \endgroup}

\def\dsuvw@defaults{\pgfkeys{/dsuvw/.cd,
  unit=9mm, u width=3, v width=1.6, w width=3, height=3,
  u label=U, v label=V, w label=W,
  v length={L^{2/3}},
  v note={separating region}, note side=above,
  u color=teal,   u fill=teal!30,
  v color=yellow, v fill=yellow!30,
  w color=violet, w fill=violet!30,
  axes=none, tikz={}}}
\pgfkeys{/dsuvw/.cd,
  axes length/.code={\def\uvwaxislength{#1}},
  axes gap/.code={\def\uvwaxisgap{#1}},
  unit/.code={\setlength{\uvw@unit}{#1}},
  u width/.store in=\dsuvw@wu,   v width/.store in=\dsuvw@wv,
  w width/.store in=\dsuvw@ww,   height/.store in=\dsuvw@h,
  u label/.store in=\dsuvw@lu,   v label/.store in=\dsuvw@lv,
  w label/.store in=\dsuvw@lw,
  v length/.store in=\dsuvw@dv,
  v note/.store in=\dsuvw@nv,    note side/.store in=\dsuvw@ns,
  u color/.store in=\dsuvw@cu,   u fill/.store in=\dsuvw@fu,
  v color/.store in=\dsuvw@cv,   v fill/.store in=\dsuvw@fv,
  w color/.store in=\dsuvw@cw,   w fill/.store in=\dsuvw@fw,
  axes/.store in=\dsuvw@axspec,  tikz/.store in=\dsuvw@tikz,
}

\newcommand{\dsuvw}[1][]{%
  \uvw@setaxis
  \begingroup
  \dsuvw@defaults \pgfkeys{/dsuvw/.cd,#1}%
  \pgfmathsetmacro{\dsuvw@xv}{\dsuvw@wu}%               left edge of V
  \pgfmathsetmacro{\dsuvw@xw}{\dsuvw@wu+\dsuvw@wv}%     left edge of W
  \pgfmathsetmacro{\dsuvw@tot}{\dsuvw@wu+\dsuvw@wv+\dsuvw@ww}%
  \pgfmathsetmacro{\dsuvw@xm}{(\dsuvw@xv+\dsuvw@xw)/2}% centre of V
  \pgfmathsetmacro{\dsuvw@yb}{-\dsuvw@h-0.85}%          V measure
  \begin{tikzpicture}[x=\uvw@unit, y=\uvw@unit,
      baseline={([yshift=-\uvw@axis]dsuvw@c)}]
    \expandafter\tikzset\expandafter{\dsuvw@tikz}%
    \draw[uvw box, draw=\dsuvw@cv, fill=\dsuvw@fv]
      ({\dsuvw@xv},0) rectangle ({\dsuvw@xw},{-\dsuvw@h});
    \draw[uvw box, draw=\dsuvw@cu, fill=\dsuvw@fu]
      (0,0) rectangle ({\dsuvw@xv},{-\dsuvw@h});
    \draw[uvw box, draw=\dsuvw@cw, fill=\dsuvw@fw]
      ({\dsuvw@xw},0) rectangle ({\dsuvw@tot},{-\dsuvw@h});
    \coordinate (dsuvw@c) at ({\dsuvw@tot/2},{-\dsuvw@h/2});
    \if\relax\detokenize\expandafter{\dsuvw@lu}\relax\else
      \node[uvw label, text=black] at ({\dsuvw@xv/2},{-\dsuvw@h/2}) {$\dsuvw@lu$};\fi
    \if\relax\detokenize\expandafter{\dsuvw@lv}\relax\else
      \node[uvw label, text=black] at ({\dsuvw@xm},{-\dsuvw@h/2}) {$\dsuvw@lv$};\fi
    \if\relax\detokenize\expandafter{\dsuvw@lw}\relax\else
      \node[uvw label, text=black]
        at ({(\dsuvw@xw+\dsuvw@tot)/2},{-\dsuvw@h/2}) {$\dsuvw@lw$};\fi
    \def\uvw@dimside{below}%
    \if\relax\detokenize\expandafter{\dsuvw@dv}\relax\else
      \uvw@dimh{\dsuvw@xv}{\dsuvw@xw}{\dsuvw@yb}{-\dsuvw@h}{\dsuvw@dv}\fi
    \if\relax\detokenize\expandafter{\dsuvw@nv}\relax\else
      \ifthenelse{\equal{\dsuvw@ns}{below}}{%
        \node[uvw note, anchor=north] at ({\dsuvw@xm},{\dsuvw@yb-0.45})
          {\dsuvw@nv};%
      }{%
        \node[uvw note, anchor=south] at ({\dsuvw@xm},0.1) {\dsuvw@nv};%
      }\fi
    \if\relax\detokenize\expandafter{\dsuvw@axspec}\relax\else
      \ifthenelse{\equal{\dsuvw@axspec}{none}}{}{%
        \expandafter\uvw@drawaxes\expandafter{\dsuvw@axspec}%
          {0}{\dsuvw@tot}{-\dsuvw@h}}%
    \fi
  \end{tikzpicture}%
  \endgroup}

\newif\ifuvwbb@showb
\newif\ifuvwbb@span
\newif\ifuvwbb@showp
\def\uvwbb@defaults{\pgfkeys{/uvwbb/.cd,
  unit=7mm, b width=4.0, bp width=5.8, height=4.8, gap=1.7, ell=1.0,
  blobs={s/-0.02/1.1, x/0.50/1.7, s/1.02/1.25},
  blob size=0.5, long roughness=0.30, pixel=0.2,
  bp long seed=9, bp long scale=1.12,
  region=both, select=all,
  b label=B, bp label=B', ell label={\ell}, ell measure={},
  label base={\Sigma}, short sup={x \le \ell}, long sup={x > \ell},
  span line=false,
  short color=teal,   short fill=teal!30,
  cross color=violet, cross fill=violet!30,
  long color=yellow,  long fill=yellow!30,
  axes=x/y, tikz={}}}
\pgfkeys{/uvwbb/.cd,
  axes length/.code={\def\uvwaxislength{#1}},
  axes gap/.code={\def\uvwaxisgap{#1}},
  unit/.code={\setlength{\uvw@unit}{#1}},
  b width/.store in=\uvwbb@wb,    bp width/.store in=\uvwbb@wp,
  height/.store in=\uvwbb@h,      gap/.store in=\uvwbb@gap,
  ell/.store in=\uvwbb@ell,       blobs/.store in=\uvwbb@list,
  blob size/.store in=\uvwbb@bs,  long roughness/.store in=\uvwbb@rl,
  bp long seed/.store in=\uvwbb@sh, bp long scale/.store in=\uvwbb@lsc,
  pixel/.store in=\uvwbb@pix,
  select/.store in=\uvwbb@sel,
  region/.code={%
    \ifthenelse{\equal{#1}{b}}{\uvwbb@showbtrue\uvwbb@showpfalse}{%
    \ifthenelse{\equal{#1}{bp}}{\uvwbb@showbfalse\uvwbb@showptrue}{%
      \uvwbb@showbtrue\uvwbb@showptrue}}},
  b label/.store in=\uvwbb@lb,    bp label/.store in=\uvwbb@lp,
  label base/.store in=\uvwbb@base,
  span line/.is if=uvwbb@span,
  short sup/.store in=\uvwbb@ssup, long sup/.store in=\uvwbb@lsup,
  ell label/.store in=\uvwbb@le,  ell measure/.store in=\uvwbb@dme,
  short color/.store in=\uvwbb@cs, short fill/.store in=\uvwbb@fs,
  cross color/.store in=\uvwbb@cc, cross fill/.store in=\uvwbb@fc,
  long color/.store in=\uvwbb@cl,  long fill/.store in=\uvwbb@fl,
  axes/.store in=\uvwbb@axspec,   tikz/.store in=\uvwbb@tikz,
}

\def\uvwbb@blob#1#2#3#4#5#6{%
  \pgfmathsetmacro{\uvwbb@xz}{\uvwbb@pix*floor(-0.35/\uvwbb@pix)}%
  \pgfmathtruncatemacro{\uvwbb@nc}{ceil((#2-\uvwbb@xz)/\uvwbb@pix)}%
  \gdef\uvwbb@ptop{}\gdef\uvwbb@pbot{}%
  \foreach \uvwbbk in {1,...,\uvwbb@nc} {%
    \pgfmathsetmacro{\uvwbbxa}{\uvwbb@xz+(\uvwbbk-1)*\uvwbb@pix}%
    \pgfmathsetmacro{\uvwbbxb}{\uvwbb@xz+\uvwbbk*\uvwbb@pix}%
    \pgfmathsetmacro{\uvwbbf}{min((\uvwbbk-0.5)*\uvwbb@pix/(#2-\uvwbb@xz),1)}%
    \pgfmathsetmacro{\uvwbbm}{#3*0.22*sin(#4*47+\uvwbbf*250)}%
    \pgfmathsetmacro{\uvwbbru}{#3*sqrt(1-\uvwbbf^6)*(1+0.24*sin(180*\uvwbbf+30))
        *(1+#5*sin(#4*63+\uvwbbf*410)+0.6*#5*sin(#4*29+\uvwbbf*790))}%
    \pgfmathsetmacro{\uvwbbrd}{#3*sqrt(1-\uvwbbf^6)*(1+0.20*sin(180*\uvwbbf-25))
        *(1+#5*sin(#4*51+\uvwbbf*310+40)+0.6*#5*sin(#4*37+\uvwbbf*700+120))}%
    \pgfmathsetmacro{\uvwbbyd}{\uvwbb@pix*round((#1+\uvwbbm-\uvwbbrd)/\uvwbb@pix)}%
    \pgfmathsetmacro{\uvwbbyu}{max(\uvwbb@pix*round((#1+\uvwbbm+\uvwbbru)/\uvwbb@pix),
                                   \uvwbbyd+\uvwbb@pix)}%   at least one cell
    \xdef\uvwbb@ptop{\uvwbb@ptop (\uvwbbxa,\uvwbbyu) -- (\uvwbbxb,\uvwbbyu) -- }%
    \xdef\uvwbb@pbot{(\uvwbbxb,\uvwbbyd) -- (\uvwbbxa,\uvwbbyd) -- \uvwbb@pbot}%
  }%
  \expandafter\draw\expandafter[#6] \uvwbb@ptop \uvwbb@pbot cycle;}

\def\uvwbb@entry#1/#2/#3/#4\@nil{%
  \def\uvwbbt{#1}\def\uvwbbyf{#2}\def\uvwbbsz{#3}}
\def\uvwbb@blobs#1#2{%
  \begin{scope}
    \clip (0,0) rectangle ({#1},{-\uvwbb@h});
    \foreach \uvwbbe [count=\uvwbbi] in \uvwbb@list {%
      \expandafter\uvwbb@entry\uvwbbe///\@nil
      \if\relax\detokenize\expandafter{\uvwbbyf}\relax
        \pgfmathsetmacro{\uvwbby}{-(\uvwbbi-0.5)*\uvwbb@h/\uvwbb@n}%
      \else
        \pgfmathsetmacro{\uvwbby}{-\uvwbbyf*\uvwbb@h}%
      \fi
      \if\relax\detokenize\expandafter{\uvwbbsz}\relax
        \pgfmathsetmacro{\uvwbbhh}{\uvwbb@bs}%
      \else
        \pgfmathsetmacro{\uvwbbhh}{\uvwbbsz*\uvwbb@bs}%
      \fi
      \pgfmathsetmacro{\uvwbbts}{\uvwbb@xl*(0.38+0.16*(0.5+0.5*sin(\uvwbbi*77)))}%
      \pgfmathsetmacro{\uvwbbtc}{\uvwbb@xl+(\uvwbb@wb-\uvwbb@xl)
                                 *(0.45+0.40*(0.5+0.5*sin(\uvwbbi*59+30)))}%
      \pgfmathsetmacro{\uvwbbtl}{\uvwbb@wb+(\uvwbb@wp-\uvwbb@wb)
                                 *(0.30+0.35*(0.5+0.5*sin(\uvwbbi*83+70)))}%
      \ifthenelse{\equal{\uvwbb@sel}{long}}{\def\uvwbb@dos{0}}{\def\uvwbb@dos{1}}%
      \ifthenelse{\equal{\uvwbb@sel}{short}}{\def\uvwbb@dol{0}}{\def\uvwbb@dol{1}}%
      \ifthenelse{\equal{\uvwbbt}{s} \AND \equal{\uvwbb@dos}{1}}{%
        \uvwbb@blob{\uvwbby}{\uvwbbts}{\uvwbbhh}{\uvwbbi}{0.16}%
          {uvw box, draw=\uvwbb@cs, fill=\uvwbb@fs}}{}%
      \ifthenelse{\equal{\uvwbbt}{c} \AND \equal{\uvwbb@dol}{1}}{%
        \uvwbb@blob{\uvwbby}{\uvwbbtc}{\uvwbbhh}{\uvwbbi}{\uvwbb@rl}%
          {uvw box, draw=\uvwbb@cc, fill=\uvwbb@fc}}{}%
      \pgfmathsetmacro{\uvwbbsd}{\uvwbbi+\uvwbb@sh}%   seed of the B' lobe
      \pgfmathsetmacro{\uvwbbhp}{\uvwbb@lsc*\uvwbbhh}%
      \ifthenelse{\equal{\uvwbbt}{l} \AND \equal{\uvwbb@dol}{1}}{%
        \ifnum#2>1
          \uvwbb@blob{\uvwbby}{\uvwbbtl}{\uvwbbhp}{\uvwbbsd}{\uvwbb@rl}%
            {uvw box, draw=\uvwbb@cl, fill=\uvwbb@fl}%
        \fi}{}%
      \ifthenelse{\equal{\uvwbbt}{x} \AND \equal{\uvwbb@dol}{1}}{%
        \ifnum#2>1
          \uvwbb@blob{\uvwbby}{\uvwbbtl}{\uvwbbhp}{\uvwbbsd}{\uvwbb@rl}%
            {uvw box, draw=\uvwbb@cl, fill=\uvwbb@fl}%
        \else
          \uvwbb@blob{\uvwbby}{\uvwbbtc}{\uvwbbhh}{\uvwbbi}{\uvwbb@rl}%
            {uvw box, draw=\uvwbb@cc, fill=\uvwbb@fc}%
        \fi}{}%
    }%
  \end{scope}}

\def\uvwbb@region#1#2#3{%
  \uvwbb@blobs{#1}{#2}%
  \draw[uvw box] (0,0) rectangle ({#1},{-\uvwbb@h});
  \draw[uvw guide, densely dashed, black]
    ({\uvwbb@xl},0.12) -- ({\uvwbb@xl},{-\uvwbb@h});
  \if\relax\detokenize\expandafter{#3}\relax\else
    \ifthenelse{\equal{\uvwbb@sel}{short}}%
      {\edef\uvwbb@sup{^{\noexpand\uvwbb@ssup}}}%
      {\ifthenelse{\equal{\uvwbb@sel}{long}}%
        {\edef\uvwbb@sup{^{\noexpand\uvwbb@lsup}}}{\def\uvwbb@sup{}}}%
    \if\relax\detokenize\expandafter{\uvwbb@base}\relax
      \def\uvwbb@lab{#3\uvwbb@sup}%
    \else
      \def\uvwbb@lab{\uvwbb@base_{#3}\uvwbb@sup}%
    \fi
    \ifuvwbb@span
      \pgfmathsetmacro{\uvwbb@ys}{-\uvwbb@h-0.3}%
      \draw[uvw span] (0,{\uvwbb@ys}) -- ({#1},{\uvwbb@ys});
      \draw[uvw span] (0,{\uvwbb@ys}) -- (0,{\uvwbb@ys+0.16});
      \draw[uvw span] ({#1},{\uvwbb@ys}) -- ({#1},{\uvwbb@ys+0.16});
      \node[uvw span label, text=black, anchor=north]
        at ({#1/2},{\uvwbb@ys-0.08}) {$\uvwbb@lab$};
    \else
      \node[uvw span label, text=black, anchor=north]
        at ({#1/2},{-\uvwbb@h-0.1}) {$\uvwbb@lab$};
    \fi\fi
  \if\relax\detokenize\expandafter{\uvwbb@le}\relax\else
    \node[uvw dim label, anchor=south] at ({\uvwbb@xl},0.1) {$\uvwbb@le$};\fi}

\newcommand{\uvwbb}[1][]{%
  \uvw@setaxis
  \begingroup
  \uvwbb@defaults \pgfkeys{/uvwbb/.cd,#1}%
  \xdef\uvwbb@n{1}%
  \foreach \uvwbbe [count=\uvwbbc] in \uvwbb@list {\xdef\uvwbb@n{\uvwbbc}}%
  \pgfmathsetmacro{\uvwbb@xl}{\uvwbb@wb-\uvwbb@ell}%   the dashed line
  \ifuvwbb@showb \ifuvwbb@showp
    \pgfmathsetmacro{\uvwbb@ytwo}{-\uvwbb@h-\uvwbb@gap}% both, B' below B
  \else
    \pgfmathsetmacro{\uvwbb@ytwo}{0}%
  \fi \else \pgfmathsetmacro{\uvwbb@ytwo}{0}\fi
  \ifuvwbb@showp \pgfmathsetmacro{\uvwbb@bot}{\uvwbb@ytwo-\uvwbb@h}%
  \else          \pgfmathsetmacro{\uvwbb@bot}{-\uvwbb@h}\fi
  \pgfmathsetmacro{\uvwbb@yb}{\uvwbb@bot-0.85}%        the ell measure
  \pgfmathsetmacro{\uvwbb@wmax}{\ifuvwbb@showp \uvwbb@wp \else \uvwbb@wb \fi}%
  \begin{tikzpicture}[x=\uvw@unit, y=\uvw@unit,
      baseline={([yshift=-\uvw@axis]uvwbb@c)}]
    \expandafter\tikzset\expandafter{\uvwbb@tikz}%
    \coordinate (uvwbb@c) at ({\uvwbb@wmax/2},{\uvwbb@bot/2});
    \ifuvwbb@showb \uvwbb@region{\uvwbb@wb}{1}{\uvwbb@lb}\fi
    \ifuvwbb@showp
      \begin{scope}[shift={(0,{\uvwbb@ytwo})}]
        \uvwbb@region{\uvwbb@wp}{2}{\uvwbb@lp}
      \end{scope}
    \fi
    \def\uvw@dimside{below}%
    \ifuvwbb@showp \let\uvwbb@lbot\uvwbb@lp \else \let\uvwbb@lbot\uvwbb@lb \fi
    \if\relax\detokenize\expandafter{\uvwbb@lbot}\relax
      \pgfmathsetmacro{\uvwbb@yls}{0}%
    \else
      \ifuvwbb@span \pgfmathsetmacro{\uvwbb@yls}{0.95}% room for line + label
      \else         \pgfmathsetmacro{\uvwbb@yls}{0.75}\fi% room for the label
    \fi
    \if\relax\detokenize\expandafter{\uvwbb@dme}\relax
      \pgfmathsetmacro{\uvwbb@yax}{\uvwbb@bot-0.45-\uvwbb@yls}%
    \else
      \uvw@dimh{\uvwbb@xl}{\uvwbb@wb}{\uvwbb@yb-\uvwbb@yls}{\uvwbb@bot}{\uvwbb@dme}%
      \pgfmathsetmacro{\uvwbb@yax}{\uvwbb@yb-0.2-\uvwbb@yls}%
    \fi
    \if\relax\detokenize\expandafter{\uvwbb@axspec}\relax\else
      \ifthenelse{\equal{\uvwbb@axspec}{none}}{}{%
        \expandafter\uvw@drawaxes\expandafter{\uvwbb@axspec}%
          {0}{\uvwbb@wmax}{\uvwbb@bot}}%
    \fi
  \end{tikzpicture}%
  \endgroup}

\newlength{\uvwbbidsep}
\newcommand{\uvwbbid}[1][]{%
  \begin{tabular}{@{}c@{}}
    \uvwbb[unit=4.2mm, #1, axes=none, region=b]$\;=\;$%
    \uvwbb[unit=4.2mm, #1, ell label={}, axes=none, region=b, select=short]$\;+\;$%
    \uvwbb[unit=4.2mm, #1, ell label={}, axes=none, region=b, select=long]%
    \hspace{2.2em}
    \uvwbb[unit=4.2mm, #1, axes=none, region=bp]$\;=\;$%
    \uvwbb[unit=4.2mm, #1, ell label={}, axes=none, region=bp, select=short]$\;+\;$%
    \uvwbb[unit=4.2mm, #1, ell label={}, axes=none, region=bp, select=long]%
    \\[\uvwbbidsep]
    \uvwbb[unit=4.2mm, #1, ell label={}, axes=x/y, axes length=3.5mm,
           axes gap=1.5mm, region=b]$\;-\;$%
    \uvwbb[unit=4.2mm, #1, ell label={}, axes=none, region=bp]$\;=\;$%
    \uvwbb[unit=4.2mm, #1, ell label={}, axes=none, region=b, select=long]$\;-\;$%
    \uvwbb[unit=4.2mm, #1, ell label={}, axes=none, region=bp, select=long]%
  \end{tabular}}

\newif\ifuvwsp@wrap
\def\uvwsp@defaults{\pgfkeys{/uvwsp/.cd,
  unit=7mm, total=16, u width=2.4, w width=1.3, v width=0.8, height=1.6,
  v labels={1,dots,i,dots,I}, right index=1, left index=2,
  u label=U, v label=V, w label=W,
  total length={\Theta(L)}, v length={L^{2/3}}, u length={\Omega(L/3)},
  w length={}, wrap label={},
  u color=teal,   u fill=teal!30,
  v color=yellow, v fill=yellow!30,
  w color=violet, w fill=violet!30,
  stacked=false, stack labels={1,2,dots,i,dots,I},
  row height=1.6, row gap=0.5, dots gap=1.1, touching=true, dots weight=0,
  wrap=true, axes=none, tikz={}}}
\def\uvwsp@stackdefaults{\pgfkeys{/uvwsp/.cd,
  unit=4mm, total=17.5, u width=2.8, w width=1.4, v width=1.35,
  total length={}}}
\newif\ifuvwsp@stacked
\newif\ifuvwsp@touch
\pgfkeys{/uvwsp/.cd,
  touching/.is if=uvwsp@touch,
  axes length/.code={\def\uvwaxislength{#1}},
  axes gap/.code={\def\uvwaxisgap{#1}},
  stacked/.is if=uvwsp@stacked,
  stack labels/.store in=\uvwsp@sl,
  row height/.store in=\uvwsp@rh,  row gap/.store in=\uvwsp@rg,
  dots gap/.store in=\uvwsp@dg,  dots weight/.store in=\uvwsp@dw@,
  measures/.code={\ifthenelse{\equal{#1}{false}}{%
      \def\uvwsp@dt{}\def\uvwsp@dv{}\def\uvwsp@du{}\def\uvwsp@dw{}}{}},
  unit/.code={\setlength{\uvw@unit}{#1}},
  total/.store in=\uvwsp@tot,     u width/.store in=\uvwsp@wu,
  w width/.store in=\uvwsp@ww,    v width/.store in=\uvwsp@wv,
  height/.store in=\uvwsp@h,
  v labels/.store in=\uvwsp@vl,
  right index/.store in=\uvwsp@ir, left index/.store in=\uvwsp@il,
  u label/.store in=\uvwsp@lu,    v label/.store in=\uvwsp@lv,
  w label/.store in=\uvwsp@lw,
  total length/.store in=\uvwsp@dt, v length/.store in=\uvwsp@dv,
  u length/.store in=\uvwsp@du,     w length/.store in=\uvwsp@dw,
  wrap label/.store in=\uvwsp@dwrap,
  u color/.store in=\uvwsp@cu,    u fill/.store in=\uvwsp@fu,
  v color/.store in=\uvwsp@cv,    v fill/.store in=\uvwsp@fv,
  w color/.store in=\uvwsp@cw,    w fill/.store in=\uvwsp@fw,
  wrap/.is if=uvwsp@wrap,
  axes/.store in=\uvwsp@axspec,   tikz/.store in=\uvwsp@tikz,
}

\def\uvwsp@window#1#2#3{%
  \ifthenelse{\equal{#3}{dots}}{%
    \node[uvw label] at ({#1},{-\uvwsp@h/2}) {$\cdots$};%
  }{%
    \draw[uvw box, draw=\uvwsp@cv, fill=\uvwsp@fv]
      ({#1-\uvwsp@wv/2},0) rectangle ({#1+\uvwsp@wv/2},{-\uvwsp@h});
    \if\relax\detokenize\expandafter{\uvwsp@lv}\relax\else
      \node[uvw dim label, text=black, anchor=north]
        at ({#1},{\uvwsp@yl}) {$\uvwsp@lv_{#2,#3}$};\fi
  }}

\def\uvwsp@row#1#2#3{% #1 = offset of the V's from U_1, #2 = entry, #3 = top
  \pgfmathsetmacro{\uvwspxr}{\uvwsp@xur+#1}%              V_{1,k}
  \pgfmathsetmacro{\uvwspxl}{\uvwsp@xul-#1-\uvwsp@wv}%    V_{2,k}
  \pgfmathsetmacro{\uvwspyb}{#3-\uvwsp@rh}%
  \pgfmathsetmacro{\uvwspym}{#3-\uvwsp@rh/2}%
  \draw[uvw box, draw=\uvwsp@cv, fill=\uvwsp@fv]
    ({\uvwspxl},{#3}) rectangle ({\uvwspxl+\uvwsp@wv},{\uvwspyb});
  \draw[uvw box, draw=\uvwsp@cv, fill=\uvwsp@fv]
    ({\uvwspxr},{#3}) rectangle ({\uvwspxr+\uvwsp@wv},{\uvwspyb});
  \draw[uvw box, draw=\uvwsp@cw, fill=\uvwsp@fw] (0,{#3}) rectangle ({\uvwspxl},{\uvwspyb});
  \draw[uvw box, draw=\uvwsp@cw, fill=\uvwsp@fw]
    ({\uvwspxr+\uvwsp@wv},{#3}) rectangle ({\uvwsp@tot},{\uvwspyb});
  \draw[uvw box, draw=\uvwsp@cu, fill=\uvwsp@fu]
    ({\uvwspxl+\uvwsp@wv},{#3}) rectangle ({\uvwspxr},{\uvwspyb});
  \if\relax\detokenize\expandafter{\uvwsp@lu}\relax\else
    \node[uvw label, text=black] at ({\uvwsp@tot/2},{\uvwspym}) {$\uvwsp@lu_{#2}$};\fi
  \if\relax\detokenize\expandafter{\uvwsp@lw}\relax\else
    \node[uvw label, text=black] at ({\uvwspxl/2},{\uvwspym}) {$\uvwsp@lw_{#2}$};
    \node[uvw label, text=black] at ({(\uvwspxr+\uvwsp@wv+\uvwsp@tot)/2},{\uvwspym})
      {$\uvwsp@lw_{#2}$};\fi
  \if\relax\detokenize\expandafter{\uvwsp@lv}\relax\else
    \node[uvw label, text=black, font=\scriptsize]
      at ({\uvwspxl+\uvwsp@wv/2},{\uvwspym}) {$\uvwsp@lv_{\uvwsp@il,#2}$};
    \node[uvw label, text=black, font=\scriptsize]
      at ({\uvwspxr+\uvwsp@wv/2},{\uvwspym}) {$\uvwsp@lv_{\uvwsp@ir,#2}$};\fi}
\def\uvwsp@stackpic{%
  \xdef\uvwsp@k{1}%
  \foreach \uvwspe [count=\uvwspc] in \uvwsp@sl {\xdef\uvwsp@k{\uvwspc}}%
  \pgfmathsetmacro{\uvwsp@xul}{(\uvwsp@tot-\uvwsp@wu)/2}%
  \pgfmathsetmacro{\uvwsp@xur}{(\uvwsp@tot+\uvwsp@wu)/2}%
  \pgfmathsetmacro{\uvwsp@xwl}{\uvwsp@ww}%
  \pgfmathsetmacro{\uvwsp@xwr}{\uvwsp@tot-\uvwsp@ww}%
  \xdef\uvwsp@n{0}\xdef\uvwsp@m{0}%
  \foreach \uvwspe in \uvwsp@sl {%
    \ifthenelse{\equal{\uvwspe}{dots}}%
      {\pgfmathtruncatemacro{\uvwspc}{\uvwsp@m+1}\xdef\uvwsp@m{\uvwspc}}%
      {\pgfmathtruncatemacro{\uvwspc}{\uvwsp@n+1}\xdef\uvwsp@n{\uvwspc}}}%
  \ifuvwsp@touch \ifnum\uvwsp@m>0
      \def\uvwsp@ra{0}\def\uvwsp@rb{1}%
    \else \def\uvwsp@ra{1}\let\uvwsp@rb\uvwsp@dw@\fi
  \else \def\uvwsp@ra{1}\let\uvwsp@rb\uvwsp@dw@\fi
  \pgfmathsetmacro{\uvwsp@G}{(\uvwsp@n-1)*\uvwsp@ra+\uvwsp@m*\uvwsp@rb}%
  \pgfmathsetmacro{\uvwsp@gap}{\uvwsp@G>0 ?
      (\uvwsp@xwr-\uvwsp@xur-\uvwsp@n*\uvwsp@wv)/\uvwsp@G : 0}%
  \ifdim\uvwsp@gap pt<0pt
    \PackageWarning{uvw}{\string\uvwshiftper[stacked]: the windows do not
      fit between U_1 and W_I and overlap; make total larger or u width,
      w width or v width smaller}\fi
  \xdef\uvwspy{0}\xdef\uvwspfirst{1}%
  \foreach \uvwspe in \uvwsp@sl {%
    \ifthenelse{\equal{\uvwspe}{dots}}{%
      \pgfmathsetmacro{\uvwspyn}{\uvwspy-\uvwsp@dg}\xdef\uvwspy{\uvwspyn}\xdef\uvwspfirst{1}%
    }{%
      \ifnum\uvwspfirst=1 \xdef\uvwspfirst{0}\else
        \pgfmathsetmacro{\uvwspyn}{\uvwspy-\uvwsp@rg}\xdef\uvwspy{\uvwspyn}\fi
      \pgfmathsetmacro{\uvwspyn}{\uvwspy-\uvwsp@rh}\xdef\uvwspy{\uvwspyn}%
    }}%
  \let\uvwsp@ylast\uvwspy
  \begin{tikzpicture}[x=\uvw@unit, y=\uvw@unit,
      baseline={([yshift=-\uvw@axis]uvwsp@c)}]
    \expandafter\tikzset\expandafter{\uvwsp@tikz}%
    \xdef\uvwspy{0}\xdef\uvwspfirst{1}\xdef\uvwspj{0}\xdef\uvwspg{0}%
    \foreach \uvwspe in \uvwsp@sl {%
      \ifthenelse{\equal{\uvwspe}{dots}}{%
        \uvw@vdots{\uvwsp@tot/2}{\uvwspy-\uvwsp@dg/2}%
        \pgfmathsetmacro{\uvwspyn}{\uvwspy-\uvwsp@dg}\xdef\uvwspy{\uvwspyn}\xdef\uvwspfirst{1}%
        \pgfmathsetmacro{\uvwspgn}{\uvwspg+\uvwsp@rb}\xdef\uvwspg{\uvwspgn}%
      }{%
        \ifnum\uvwspfirst=1 \xdef\uvwspfirst{0}\else
          \pgfmathsetmacro{\uvwspyn}{\uvwspy-\uvwsp@rg}\xdef\uvwspy{\uvwspyn}\fi
        \pgfmathsetmacro{\uvwspoff}{\uvwspj*\uvwsp@wv+\uvwspg*\uvwsp@gap}%
        \uvwsp@row{\uvwspoff}{\uvwspe}{\uvwspy}%
        \pgfmathsetmacro{\uvwspyn}{\uvwspy-\uvwsp@rh}\xdef\uvwspy{\uvwspyn}%
        \pgfmathtruncatemacro{\uvwspjn}{\uvwspj+1}\xdef\uvwspj{\uvwspjn}%
        \pgfmathsetmacro{\uvwspgn}{\uvwspg+\uvwsp@ra}\xdef\uvwspg{\uvwspgn}%
      }}%
    \coordinate (uvwsp@c) at ({\uvwsp@tot/2},{\uvwsp@ylast/2});
    \def\uvw@dimside{above}%
    \if\relax\detokenize\expandafter{\uvwsp@du}\relax\else
      \uvw@dimh{\uvwsp@xul}{\uvwsp@xur}{0.5}{0}{\uvwsp@du}\fi
    \if\relax\detokenize\expandafter{\uvwsp@dv}\relax\else
      \uvw@dimh{\uvwsp@xul-\uvwsp@wv}{\uvwsp@xul}{0.5}{0}{\uvwsp@dv}%
      \uvw@dimh{\uvwsp@xur}{\uvwsp@xur+\uvwsp@wv}{0.5}{0}{\uvwsp@dv}\fi
    \def\uvw@dimside{below}%
    \pgfmathsetmacro{\uvwsp@yb}{\uvwsp@ylast-0.85}%
    \def\uvwsp@nbelow{0}%
    \uvw@ifempty\uvwsp@dw{%
      \let\uvwsp@ytb\uvwsp@yb
      \uvw@ifaxes\uvwsp@axspec{% clear of the coordinate arrows
        \pgfmathsetmacro{\uvwsp@ytb}{min(\uvwsp@ytb,
          \uvwsp@ylast-(\uvwaxisgap+2.8mm)/\uvw@unit)}}{}%
    }{%
      \uvw@dimh{0}{\uvwsp@xwl}{\uvwsp@yb}{\uvwsp@ylast}{\uvwsp@dw}%
      \uvw@dimh{\uvwsp@xwr}{\uvwsp@tot}{\uvwsp@yb}{\uvwsp@ylast}{\uvwsp@dw}%
      \pgfmathsetmacro{\uvwsp@ytb}{\uvwsp@yb-17pt/\uvw@unit}%
      \let\uvwsp@ylow\uvwsp@yb \def\uvwsp@nbelow{1}%
    }%
    \uvw@ifempty\uvwsp@dt{}{%
      \uvw@dimh{0}{\uvwsp@tot}{\uvwsp@ytb}{\uvwsp@ylast}{\uvwsp@dt}%
      \let\uvwsp@ylow\uvwsp@ytb \def\uvwsp@nbelow{1}}%
    \ifnum\uvwsp@nbelow=1
      \pgfmathsetmacro{\uvwsp@yw}{\uvwsp@ylow-17pt/\uvw@unit-0.1}%
    \else
      \pgfmathsetmacro{\uvwsp@yw}{\uvwsp@ylast-0.35}%
      \uvw@ifaxes\uvwsp@axspec{%
        \pgfmathsetmacro{\uvwsp@yw}{min(\uvwsp@yw,
          \uvwsp@ylast+0.45-(\uvwaxisgap+3.5mm)/\uvw@unit)}}{}%
    \fi
    \ifuvwsp@wrap
      \draw[uvw guide] (0,{\uvwsp@ylast}) -- (0,{\uvwsp@yw+0.15});
      \draw[uvw guide] ({\uvwsp@tot},{\uvwsp@ylast}) -- ({\uvwsp@tot},{\uvwsp@yw+0.15});
      \draw[uvw dim, rounded corners=2pt]
        (0,{\uvwsp@yw}) -- (0,{\uvwsp@yw-0.45})
        -- node[uvw dim label, midway, below] {$\uvwsp@dwrap$}
           ({\uvwsp@tot},{\uvwsp@yw-0.45}) -- ({\uvwsp@tot},{\uvwsp@yw});
    \fi
    \if\relax\detokenize\expandafter{\uvwsp@axspec}\relax\else
      \ifthenelse{\equal{\uvwsp@axspec}{none}}{}{%
        \expandafter\uvw@drawaxes\expandafter{\uvwsp@axspec}%
          {0}{\uvwsp@tot}{\uvwsp@ylast}}%
    \fi
  \end{tikzpicture}}

\newcommand{\uvwshiftper}[1][]{%
  \uvw@setaxis
  \begingroup
  \uvwsp@defaults \pgfkeys{/uvwsp/.cd,#1}%
  \ifuvwsp@stacked
    \uvwsp@stackdefaults \pgfkeys{/uvwsp/.cd,#1}\uvwsp@stackpic
  \else
  \xdef\uvwsp@k{1}%
  \foreach \uvwspe [count=\uvwspc] in \uvwsp@vl {\xdef\uvwsp@k{\uvwspc}}%
  \pgfmathsetmacro{\uvwsp@xul}{(\uvwsp@tot-\uvwsp@wu)/2}%   U starts
  \pgfmathsetmacro{\uvwsp@xur}{(\uvwsp@tot+\uvwsp@wu)/2}%   U ends
  \pgfmathsetmacro{\uvwsp@xwl}{\uvwsp@ww}%                  left W ends
  \pgfmathsetmacro{\uvwsp@xwr}{\uvwsp@tot-\uvwsp@ww}%       right W starts
  \pgfmathsetmacro{\uvwsp@step}{\uvwsp@k>1 ?
      (\uvwsp@xwr-\uvwsp@xur-\uvwsp@wv)/(\uvwsp@k-1) : 0}%
  \ifdim\uvwsp@step pt<\uvwsp@wv pt \ifnum\uvwsp@k>1
    \PackageWarning{uvw}{\string\uvwshiftper: the windows overlap; make
      total larger or u width, w width or v width smaller}\fi\fi
  \pgfmathsetmacro{\uvwsp@yl}{-\uvwsp@h-0.3}%   the V labels
  \pgfmathsetmacro{\uvwsp@yb}{-\uvwsp@h-1.15}%  the U (and W) measures
  \pgfmathsetmacro{\uvwsp@yw}{\uvwsp@yb-0.95}%  the bracket glueing the ends
  \pgfmathsetmacro{\uvwsp@yv}{0.5}%             width of V_{2,1}, V_{1,1}
  \if\relax\detokenize\expandafter{\uvwsp@dv}\relax
    \pgfmathsetmacro{\uvwsp@yt}{0.5}%
  \else
    \pgfmathsetmacro{\uvwsp@yt}{1.3}%
  \fi
  \begin{tikzpicture}[x=\uvw@unit, y=\uvw@unit,
      baseline={([yshift=-\uvw@axis]uvwsp@c)}]
    \expandafter\tikzset\expandafter{\uvwsp@tikz}%
    \coordinate (uvwsp@c) at ({\uvwsp@tot/2},{-\uvwsp@h/2});
    \fill[\uvwsp@fu] ({\uvwsp@xul},0) rectangle ({\uvwsp@xur},{-\uvwsp@h});
    \fill[\uvwsp@fw] (0,0) rectangle ({\uvwsp@xwl},{-\uvwsp@h});
    \fill[\uvwsp@fw] ({\uvwsp@xwr},0) rectangle ({\uvwsp@tot},{-\uvwsp@h});
    \foreach \uvwspx in {\uvwsp@xwl,\uvwsp@xul,\uvwsp@xur,\uvwsp@xwr}
      {\draw[uvw guide, densely dashed] ({\uvwspx},0) -- ({\uvwspx},{-\uvwsp@h});}
    \foreach \uvwspe [count=\uvwspi] in \uvwsp@vl {%
      \pgfmathsetmacro{\uvwspxr}{\uvwsp@xur+(\uvwspi-1)*\uvwsp@step+0.5*\uvwsp@wv}%
      \pgfmathsetmacro{\uvwspxl}{\uvwsp@xul-(\uvwspi-1)*\uvwsp@step-0.5*\uvwsp@wv}%
      \uvwsp@window{\uvwspxr}{\uvwsp@ir}{\uvwspe}%
      \uvwsp@window{\uvwspxl}{\uvwsp@il}{\uvwspe}%
    }%
    \draw[uvw box] (0,0) rectangle ({\uvwsp@tot},{-\uvwsp@h});
    \if\relax\detokenize\expandafter{\uvwsp@lu}\relax\else
      \node[uvw label, text=black] at ({\uvwsp@tot/2},{-\uvwsp@h/2})
        {$\uvwsp@lu$};\fi
    \if\relax\detokenize\expandafter{\uvwsp@lw}\relax\else
      \node[uvw label, text=black] at ({\uvwsp@xwl/2},{-\uvwsp@h/2})
        {$\uvwsp@lw$};
      \node[uvw label, text=black] at ({(\uvwsp@xwr+\uvwsp@tot)/2},{-\uvwsp@h/2})
        {$\uvwsp@lw$};\fi
    \def\uvw@dimside{above}%
    \if\relax\detokenize\expandafter{\uvwsp@dv}\relax\else
      \uvw@dimh{\uvwsp@xul-\uvwsp@wv}{\uvwsp@xul}{\uvwsp@yv}{0}{\uvwsp@dv}%
      \uvw@dimh{\uvwsp@xur}{\uvwsp@xur+\uvwsp@wv}{\uvwsp@yv}{0}{\uvwsp@dv}\fi
    \if\relax\detokenize\expandafter{\uvwsp@dt}\relax\else
      \uvw@dimh{0}{\uvwsp@tot}{\uvwsp@yt}{0}{\uvwsp@dt}\fi
    \def\uvw@dimside{below}%
    \if\relax\detokenize\expandafter{\uvwsp@du}\relax\else
      \uvw@dimh{\uvwsp@xul}{\uvwsp@xur}{\uvwsp@yb}{-\uvwsp@h}{\uvwsp@du}\fi
    \if\relax\detokenize\expandafter{\uvwsp@dw}\relax\else
      \uvw@dimh{0}{\uvwsp@xwl}{\uvwsp@yb}{-\uvwsp@h}{\uvwsp@dw}%
      \uvw@dimh{\uvwsp@xwr}{\uvwsp@tot}{\uvwsp@yb}{-\uvwsp@h}{\uvwsp@dw}\fi
    \ifuvwsp@wrap
      \draw[uvw guide] (0,{-\uvwsp@h}) -- (0,{\uvwsp@yw+0.15});
      \draw[uvw guide] ({\uvwsp@tot},{-\uvwsp@h}) -- ({\uvwsp@tot},{\uvwsp@yw+0.15});
      \draw[uvw dim, rounded corners=2pt]
        (0,{\uvwsp@yw}) -- (0,{\uvwsp@yw-0.45})
        -- node[uvw dim label, midway, below] {$\uvwsp@dwrap$}
           ({\uvwsp@tot},{\uvwsp@yw-0.45}) -- ({\uvwsp@tot},{\uvwsp@yw});
    \fi
    \if\relax\detokenize\expandafter{\uvwsp@axspec}\relax\else
      \ifthenelse{\equal{\uvwsp@axspec}{none}}{}{%
        \expandafter\uvw@drawaxes\expandafter{\uvwsp@axspec}%
          {0}{\uvwsp@tot}{-\uvwsp@h}}%
    \fi
  \end{tikzpicture}%
  \fi
  \endgroup}

\makeatother
\tdplotsetmaincoords{60}{130}
\tikzset{
  haxis/.style={-stealth, line width=0.6pt, black},
  haxis label/.style={font=\footnotesize, inner sep=1.5pt, text=black},
  hcheck face/.style={line width=0.4pt, black!70, line cap=round,
                      line join=round},
  hcut/.style={line width=0.4pt, line cap=round, draw=haahface!60!gray},
  hwall/.style={fill=black!30, draw=black!55, line width=0.4pt},
}

\newif\ifhaahwalls
\pgfkeys{/haah/.cd,
  scale/.store in=\haahscale,
  cube opacity/.store in=\haahop,
  plane opacity/.store in=\haahpop,
  planes/.is if=haahwalls,
}
\def\haahdefaults{\pgfkeys{/haah/.cd, scale=0.8, cube opacity=1,
  plane opacity=0.35, planes=true}}

\newcommand{\haahface}[6]{%
  \colorlet{haahface}{#1}%
  \ifx#2z\def\haahP##1##2{(##1,##2,#3)}\fi
  \ifx#2x\def\haahP##1##2{(#3,##1,##2)}\fi
  \ifx#2y\def\haahP##1##2{(##1,#3,##2)}\fi
  \pgfmathtruncatemacro{\haahuo}{#4}\pgfmathtruncatemacro{\haahui}{#4+1}%
  \pgfmathtruncatemacro{\haahvo}{#5}\pgfmathtruncatemacro{\haahvi}{#5+1}%
  \ifhaahwalls
    \pgfmathtruncatemacro{\haahcin}{#3>0.5 ? 1 : 0}%
    \pgfmathtruncatemacro{\haahus}{#4<0.5 ? 1 : 0}%
    \pgfmathtruncatemacro{\haahvs}{#5<0.5 ? 1 : 0}%
  \else \def\haahcin{1}\def\haahus{0}\def\haahvs{0}\fi
  \ifnum\haahus=1 \def\haahum{0.5}\else\let\haahum\haahuo\fi
  \ifnum\haahvs=1 \def\haahvm{0.5}\else\let\haahvm\haahvo\fi
  \ifnum#6=1
    \ifnum\haahcin=1
      \fill[fill=#1, fill opacity=\haahop]
        \haahP{\haahum}{\haahvm} -- \haahP{\haahui}{\haahvm}
        -- \haahP{\haahui}{\haahvi} -- \haahP{\haahum}{\haahvi} -- cycle;
      \ifnum\haahus=1
        \draw[hcut] \haahP{0.5}{\haahvm} -- \haahP{0.5}{\haahvi};\fi
      \ifnum\haahvs=1
        \draw[hcut] \haahP{\haahum}{0.5} -- \haahP{\haahui}{0.5};\fi
      \draw[hcheck face] \haahP{\haahui}{\haahvm} -- \haahP{\haahui}{\haahvi}
        -- \haahP{\haahum}{\haahvi};
      \ifnum\haahus=0
        \draw[hcheck face] \haahP{\haahum}{\haahvi} -- \haahP{\haahum}{\haahvm};\fi
      \ifnum\haahvs=0
        \draw[hcheck face] \haahP{\haahum}{\haahvm} -- \haahP{\haahui}{\haahvm};\fi
    \fi
  \else
    \ifnum\haahcin=0
      \fill[fill=#1, fill opacity=\haahop]
        \haahP{\haahuo}{\haahvo} -- \haahP{\haahui}{\haahvo}
        -- \haahP{\haahui}{\haahvi} -- \haahP{\haahuo}{\haahvi} -- cycle;
      \draw[hcheck face]
        \haahP{\haahuo}{\haahvo} -- \haahP{\haahui}{\haahvo}
        -- \haahP{\haahui}{\haahvi} -- \haahP{\haahuo}{\haahvi} -- cycle;
    \else
      \pgfmathtruncatemacro{\haahany}{\haahus+\haahvs}%
      \ifnum\haahany>0
        \fill[fill=#1, fill opacity=\haahop, even odd rule]
          \haahP{\haahuo}{\haahvo} -- \haahP{\haahui}{\haahvo}
          -- \haahP{\haahui}{\haahvi} -- \haahP{\haahuo}{\haahvi} -- cycle
          \haahP{\haahum}{\haahvm} -- \haahP{\haahui}{\haahvm}
          -- \haahP{\haahui}{\haahvi} -- \haahP{\haahum}{\haahvi} -- cycle;
        \ifnum\haahus=1
          \draw[hcheck face] \haahP{\haahuo}{\haahvo} -- \haahP{\haahuo}{\haahvi};
          \draw[hcheck face] \haahP{\haahuo}{\haahvi} -- \haahP{0.5}{\haahvi};
        \fi
        \ifnum\haahvs=1
          \draw[hcheck face] \haahP{\haahuo}{\haahvo} -- \haahP{\haahui}{\haahvo};
          \draw[hcheck face] \haahP{\haahui}{\haahvo} -- \haahP{\haahui}{0.5};
        \fi
        \ifnum\haahus=1 \ifnum\haahvs=0
          \draw[hcheck face] \haahP{\haahuo}{\haahvo} -- \haahP{0.5}{\haahvo};
        \fi\fi
        \ifnum\haahvs=1 \ifnum\haahus=0
          \draw[hcheck face] \haahP{\haahuo}{\haahvo} -- \haahP{\haahuo}{0.5};
        \fi\fi
      \fi
    \fi
  \fi}

\newcommand{\haahcubepart}[5]{%
  \pgfmathtruncatemacro{\haahxb}{#2+1}%
  \pgfmathtruncatemacro{\haahyb}{#3+1}%
  \pgfmathtruncatemacro{\haahzb}{#4+1}%
  \ifdim\haahop pt<1pt
    \haahface{#1!30}{z}{#4}{#2}{#3}{#5}%
    \haahface{#1!55}{x}{#2}{#3}{#4}{#5}%
    \haahface{#1!75}{y}{#3}{#2}{#4}{#5}%
  \fi
  \haahface{#1!50}{z}{\haahzb}{#2}{#3}{#5}%
  \haahface{#1}{x}{\haahxb}{#3}{#4}{#5}%
  \haahface{#1!85!black}{y}{\haahyb}{#2}{#4}{#5}}

\newcommand{\haahwalls}[1]{%
  \draw[hwall, fill opacity=\haahpop]
    (0.5,0.5,0.5) -- (0.5,#1,0.5) -- (0.5,#1,#1) -- (0.5,0.5,#1) -- cycle;
  \draw[hwall, fill opacity=\haahpop]
    (0.5,0.5,0.5) -- (#1,0.5,0.5) -- (#1,0.5,#1) -- (0.5,0.5,#1) -- cycle;
  \draw[hwall, fill opacity=\haahpop]
    (0.5,0.5,0.5) -- (#1,0.5,0.5) -- (#1,#1,0.5) -- (0.5,#1,0.5) -- cycle;}

\newcommand{\haahaxes}[1]{%
  \draw[haxis] (#1,#1,#1) -- (\haahA,#1,#1)
    node[haxis label, anchor=north east] {$x$};
  \draw[haxis] (#1,#1,#1) -- (#1,\haahA,#1)
    node[haxis label, anchor=north west] {$y$};
  \draw[haxis] (#1,#1,#1) -- (#1,#1,\haahA)
    node[haxis label, anchor=south] {$z$};}

\newcommand{\haahconfig}[4][]{%
  \begingroup
  \haahdefaults\pgfkeys{/haah/.cd,#1}%
  \pgfmathsetmacro{\haahE}{#2-0.1}%   the planes reach close to ...
  \pgfmathsetmacro{\haahA}{#2+0.5}%   ... and the axes a bit beyond them
  \begin{tikzpicture}[tdplot_main_coords, scale=\haahscale,
      baseline=(current bounding box.center)]
    \ifhaahwalls
      \foreach \cx/\cy/\cz in {#4} {\haahcubepart{#3}{\cx}{\cy}{\cz}{0}}
      \haahwalls{\haahE}%
      \haahaxes{0.5}%
    \else
      \haahaxes{0}%
    \fi
    \foreach \cx/\cy/\cz in {#4} {\haahcubepart{#3}{\cx}{\cy}{\cz}{1}}
  \end{tikzpicture}%
  \endgroup}

\newcommand{\haahconfigs}[1][]{%
  \haahconfig[#1]{5}{teal!30!white}{%
    0/1/2, 0/2/1, 1/0/2, 1/2/0, 2/0/1, 2/1/0}\quad
  \haahconfig[#1]{5}{violet!30!white}{%
    0/0/3, 0/2/0, 0/2/1, 0/3/0, 1/0/0, 1/0/2, 1/1/0, 1/1/1, 2/0/0, 2/1/0, 3/0/0}\quad
  \haahconfig[#1]{5}{yellow!30!white}{%
    0/0/1, 0/0/3, 0/1/0, 0/1/1, 0/2/1, 0/3/0, 1/0/0, 1/0/1, 1/0/2, 1/1/0, 1/1/1, 2/1/0, 3/0/0}}

\tikzset{hcut/.style={line width=0.4pt, line cap=round, draw=haahface!70!black}}
\newlength{\pgridcell}
\makeatletter
\tikzset{
  p token/.style={font=\footnotesize, inner sep=0pt},
  p rect label/.style={p token, anchor=west, draw=none, fill=none,
                       inner sep=1pt, xshift=1.5pt},
  p outer ring/.style={line width=0.5pt},
  dot radius/.store in=\p@dotr,
  ring radius/.store in=\p@ringr,
  outer radius/.store in=\p@outr,
}
\def\p@dotr{0.09}
\def\p@ringr{0.24}
\def\p@outr{0.38}
\newcommand{\p@token}[2]{%
  \ifthenelse{\equal{#2}{-}}{}{%
  \ifthenelse{\equal{#2}{.}}{\fill (#1) circle (\p@dotr);}{%
  \ifthenelse{\equal{#2}{o}}{\draw[line width=0.5pt] (#1) circle (\p@ringr);}{%
  \ifthenelse{\equal{#2}{.o}}{\draw[line width=0.5pt] (#1) circle (\p@ringr);%
                             \fill (#1) circle (\p@dotr);}{%
  \ifthenelse{\equal{#2}{.oo}}{\draw[p outer ring] (#1) circle (\p@outr);%
                             \draw[line width=0.5pt] (#1) circle (\p@ringr);%
                             \fill (#1) circle (\p@dotr);}{%
  \ifthenelse{\equal{#2}{oo}}{\draw[p outer ring] (#1) circle (\p@outr);}{%
  \node[p token] at (#1) {$#2$};}}}}}}}

\newlength{\p@axis}
\def\p@setaxis{\settoheight{\p@axis}{\ensuremath{\vcenter{\hbox{}}}}}
\tikzset{
  p grid baseline/.style={baseline={([yshift=-\p@axis]p@gc)}},
  axes/.store in=\p@axesspec,
  axes gap/.store in=\p@axgap,      % grid corner -> arrow origin, in cells
  axes length/.store in=\p@axlenspec,% arrow length in cells, or auto
  axes width/.code={\tikzset{p axes/.append style={line width=#1}}},
  p axes defaults/.style={axes gap=0.35, axes length=auto},
  p axes/.style={-stealth, line width=0.6pt},
  p axis label/.style={p token, inner sep=1.5pt},
}
\newif\ifp@grid
\def\p@parsegrid#1/#2\@nil{\gdef\p@gridn{#1}\gdef\p@gridm{#2}\global\p@gridtrue}
\tikzset{
  grid/.code={\p@parsegrid#1\@nil\tikzset{p grid baseline}},
  p grid lines/.style={line width=0.4pt, gray!65},
  p boxes/.style={line width=0.4pt},
  boxes width/.code={\tikzset{p boxes/.append style={line width=#1}}},
  grid width/.code={\tikzset{p grid lines/.append style={line width=#1}}},
}
\def\p@drawgrid{%
  \ifp@grid
    \foreach \px in {1,...,\p@gridn} {%
      \foreach \py in {1,...,\p@gridm} {%
        \draw[p grid lines] (\px-1,-\py) rectangle (\px,-\py+1);%
      }}%
    \coordinate (p@gc) at ({\p@gridn/2},{-\p@gridm/2});%
  \fi}

\def\p@axsplit#1/#2\@nil{\def\p@axh{#1}\def\p@axv{#2}}
\def\p@axinfo#1#2\@nil{%
  \if-#1\def\p@axd{-1}\def\p@axl{#2}\else\def\p@axd{1}\def\p@axl{#1#2}\fi}
\def\p@axes@draw#1#2#3{% #1 = <h>/<v>, #2 = columns, #3 = rows
  \def\p@axtmp{#1}\expandafter\p@axsplit\p@axtmp\@nil
  \expandafter\p@axinfo\p@axh\@nil \let\p@axhd\p@axd \let\p@axhl\p@axl
  \expandafter\p@axinfo\p@axv\@nil \let\p@axvd\p@axd \let\p@axvl\p@axl
  \ifthenelse{\equal{\p@axlenspec}{auto}}%
    {\pgfmathsetmacro{\p@axlh}{#2>3 ? #2/2 : 0.8}%
     \pgfmathsetmacro{\p@axlv}{#3>3 ? #3/2 : 0.8}}%
    {\pgfmathsetmacro{\p@axlh}{\p@axlenspec}\let\p@axlv\p@axlh}%
  \ifnum\p@axhd>0 \def\p@axha{west}\def\p@axox{-\p@axgap}%
  \else           \def\p@axha{east}\def\p@axox{#2+\p@axgap}\fi
  \ifnum\p@axvd>0 \def\p@axva{south}\def\p@axoy{-#3-\p@axgap}%
  \else           \def\p@axva{north}\def\p@axoy{\p@axgap}\fi
  \begin{scope}[shift={({\p@axox},{\p@axoy})}]
    \draw[p axes] (0,0) -- ({\p@axhd*\p@axlh},0)
      node[p axis label, anchor=\p@axha] {$\p@axhl$};
    \draw[p axes] (0,0) -- (0,{\p@axvd*\p@axlv})
      node[p axis label, anchor=\p@axva] {$\p@axvl$};
  \end{scope}}
\def\p@axes@maybe#1#2#3{%
  \ifthenelse{\equal{\p@axesspec}{none}\OR\equal{\p@axesspec}{false}}{}{%
    \ifthenelse{\equal{\p@axesspec}{auto}\OR\equal{\p@axesspec}{true}}%
      {\def\p@axspec{#1}}{\let\p@axspec\p@axesspec}%
    \expandafter\p@axes@draw\expandafter{\p@axspec}{#2}{#3}}}

\newcommand{\pgrid}[2][]{%
  \p@setaxis
  \begin{tikzpicture}[p grid baseline, x=\pgridcell, y=\pgridcell, #1]
    \foreach \prow [count=\py] in {#2} {%
      \foreach \pc [count=\px] in \prow {%
        \draw[p boxes] (\px-1,-\py) rectangle (\px,-\py+1);
        \p@token{\px-0.5,-\py+0.5}{\pc}%
        \xdef\pgrid@n{\px}\xdef\pgrid@m{\py}%
      }}%
    \coordinate (p@gc) at ({\pgrid@n/2},{-\pgrid@m/2});
  \end{tikzpicture}}

\def\p@gridlabel#1/#2/#3/#4\@nil{%
  \if\relax\detokenize{#2}\relax
    \node[pgrid mark] at ({\pgrid@n/2},{-\pgrid@m/2}) {$#1$};%
  \else
    \node[pgrid mark] at (#1,-#2) {$#3$};%
  \fi}
\newcommand{\pgridv}[3][]{%
  \p@setaxis
  \begin{tikzpicture}[p grid baseline, p axes defaults,
                      x=\pgridcell, y=\pgridcell, axes=none, #1]
    \foreach \prow [count=\py] in {#3} {%
      \foreach \pc [count=\px] in \prow {%
        \draw[p boxes] (\px-1,-\py) rectangle (\px,-\py+1);
        \p@token{\px-0.5,-\py+0.5}{\pc}%
        \xdef\pgrid@n{\px}\xdef\pgrid@m{\py}%
      }}%
    \coordinate (p@gc) at ({\pgrid@n/2},{-\pgrid@m/2});
    \if\relax\detokenize{#2}\relax\else
      \foreach \pl in {#2} {\expandafter\p@gridlabel\pl///\@nil}%
    \fi
    \p@axes@maybe{x/y}{\pgrid@n}{\pgrid@m}%
  \end{tikzpicture}}

\newcommand{\pcells}[2][0,0]{%
  \begin{scope}[shift={(#1)}]
    \foreach \prow [count=\py] in {#2} {%
      \foreach \pc [count=\px] in \prow {%
        \p@token{\px-0.5,0.5-\py}{\pc}%
      }}%
  \end{scope}}

\newcommand{\pboxes}[2][0,0]{%
  \begin{scope}[shift={(#1)}]
    \foreach \prow [count=\py] in {#2} {%
      \foreach \pc [count=\px] in \prow {%
        \ifthenelse{\equal{\pc}{-}}{}{%   - leaves the cell out entirely
          \draw[p boxes] (\px-1,1-\py) rectangle (\px,-\py);}%
        \p@token{\px-0.5,0.5-\py}{\pc}%
      }}%
  \end{scope}}
\newcommand{\pboxesv}[3][0,0]{%
  \begin{scope}[shift={(#1)}]
    \foreach \prow [count=\py] in {#3} {%
      \foreach \pc [count=\px] in \prow {%
        \ifthenelse{\equal{\pc}{-}}{}{%   - leaves the cell out entirely
          \draw[p boxes] (\px-1,1-\py) rectangle (\px,-\py);}%
        \p@token{\px-0.5,0.5-\py}{\pc}%
        \xdef\pgrid@n{\px}\xdef\pgrid@m{\py}%
      }}%
    \if\relax\detokenize{#2}\relax\else
      \foreach \pl in {#2} {\expandafter\p@gridlabel\pl///\@nil}%
    \fi
  \end{scope}}

\tikzset{
  p check line/.style={line width=0.35pt, black!55},
  p check label/.style={p token, font=\scriptsize, outer sep=1.2pt},
  check size/.store in=\p@chksize,
  check labels/.store in=\p@chklab,
}
\def\p@chksize{0.5}
\def\p@chklab{IX/XI/XI/II}
\def\p@chksplit#1/#2/#3/#4\@nil{%
  \def\p@chkbl{#1}\def\p@chkbr{#2}\def\p@chktl{#3}\def\p@chktr{#4}}
\newcommand{\pchecks}[3][0,0]{%
  \begin{scope}[shift={(#1)}]
    \expandafter\p@chksplit\p@chklab\@nil
    \pgfmathsetmacro{\p@chkh}{\p@chksize/2}%
    \foreach \pchkx in {1,...,#2} {%
      \foreach \pchky in {1,...,#3} {%
        \pgfmathsetmacro{\pchkcx}{\pchkx-0.5}%
        \pgfmathsetmacro{\pchkcy}{0.5-\pchky}%
        \node[p check label] (pchk-\pchkx-\pchky-bl)
          at ({\pchkcx-\p@chkh},{\pchkcy-\p@chkh}) {$\p@chkbl$};
        \node[p check label] (pchk-\pchkx-\pchky-br)
          at ({\pchkcx+\p@chkh},{\pchkcy-\p@chkh}) {$\p@chkbr$};
        \node[p check label] (pchk-\pchkx-\pchky-tl)
          at ({\pchkcx-\p@chkh},{\pchkcy+\p@chkh}) {$\p@chktl$};
        \node[p check label] (pchk-\pchkx-\pchky-tr)
          at ({\pchkcx+\p@chkh},{\pchkcy+\p@chkh}) {$\p@chktr$};
        \draw[p check line]
          (pchk-\pchkx-\pchky-bl) -- (pchk-\pchkx-\pchky-br)
          (pchk-\pchkx-\pchky-br) -- (pchk-\pchkx-\pchky-tr)
          (pchk-\pchkx-\pchky-tr) -- (pchk-\pchkx-\pchky-tl)
          (pchk-\pchkx-\pchky-tl) -- (pchk-\pchkx-\pchky-bl);
      }}%
  \end{scope}}

\newcommand{\pcellsv}[3][0,0]{%
  \begin{scope}[shift={(#1)}]
    \foreach \prow [count=\py] in {#3} {%
      \foreach \pc [count=\px] in \prow {%
        \p@token{\px-0.5,0.5-\py}{\pc}%
        \xdef\pgrid@n{\px}\xdef\pgrid@m{\py}%
      }}%
    \if\relax\detokenize{#2}\relax\else
      \foreach \pl in {#2} {\expandafter\p@gridlabel\pl///\@nil}%
    \fi
  \end{scope}}
\makeatother

\makeatletter
\newenvironment{pconfig}[1][]{%
  \p@setaxis \global\p@gridfalse
  \begin{tikzpicture}[x=\pgridcell, y=\pgridcell, axes=none, p axes defaults,
    baseline={([yshift=-2.5pt]current bounding box.center)}, #1]
  \p@drawgrid}%
  {\ifp@grid \p@axes@maybe{x/y}{\p@gridn}{\p@gridm}\fi
  \end{tikzpicture}}
\makeatother

\tikzset{
  pgrid mark/.style={regular polygon, regular polygon sides=5, draw=black, fill=white, line width=0.4pt,
                     inner sep=0.8pt, minimum size=3.4mm, font=\scriptsize},
  boundary/.style={line width=0.9pt},
  cut/.style={line width=0.9pt, orange!90!black},
  guide/.style={densely dashed, red!80!black, thin},
}

\makeatletter
\def\p@rectlabel#1#2#3{% #1 = tikz options, (#2,#3) = lower right corner
  \begin{scope}[#1]\node[p rect label] at ({#2},{#3}) {$\p@rectlabeltext$};\end{scope}}

\NewDocumentCommand{\prectA}{O{} m m O{}}{%
  \draw[boundary,#1] (1,0) rectangle ({#2+1},{-(#3)});%
  \if\relax\detokenize{#4}\relax\else
    \def\p@rectlabeltext{#4}\p@rectlabel{#1}{#2+1}{-(#3)}%
  \fi}

\NewDocumentCommand{\prectB}{O{} m m O{}}{%
  \draw[boundary,#1] (0,0) rectangle ({#2+1},{-(#3)-1});%
  \if\relax\detokenize{#4}\relax\else
    \def\p@rectlabeltext{#4}\p@rectlabel{#1}{#2+1}{-(#3)-1}%
  \fi}

\NewDocumentCommand{\prectC}{O{} m m O{}}{%
  \draw[boundary,#1] (0,0) -- ({#2+1},0) -- ({#2+1},{-(#3)-1})
        -- (1,{-(#3)-1}) -- (1,{-(#3)}) -- (0,{-(#3)}) -- cycle;%
  \if\relax\detokenize{#4}\relax\else
    \def\p@rectlabeltext{#4}\p@rectlabel{#1}{#2+1}{-(#3)-1}%
  \fi}
\makeatother

\makeatletter
\def\p@stairLU#1{foreach \ptristep in {1,...,#1} { -- ++(-1,0) -- ++(0,1) }}
\NewDocumentCommand{\ptri}{O{} m O{}}{%
  \draw[boundary,#1] (0,0) -- ({#2},0) -- ({#2},{-(#2)})
       \p@stairLU{#2} -- cycle;
  \if\relax\detokenize{#3}\relax\else
    \begin{scope}[#1]
      \gdef\pgrid@n{#2}\gdef\pgrid@m{#2}%
      \foreach \pl in {#3} {\expandafter\p@gridlabel\pl///\@nil}%
    \end{scope}
  \fi}
\makeatother

\makeatletter
\def\p@diagmode{rainbow}
\def\p@diagsat{0.35}
\def\p@diaglist{red,green!60!black,blue}
\def\p@diagmix{30}
\tikzset{
  diag colors/.store in=\p@diagmode,
  diag saturation/.store in=\p@diagsat,
  diag cycle/.store in=\p@diaglist,
  diag mix/.store in=\p@diagmix,
}
\newcommand{\pdiag}[3][]{%
  \begin{scope}[on background layer]
  \begin{scope}[#1]
    \pgfmathtruncatemacro{\pdnd}{#2+#3-1}%
    \xdef\pdnc{1}%
    \foreach \pdc [count=\pdci] in \p@diaglist {\xdef\pdnc{\pdci}}%
    \foreach \pdi in {1,...,#2} {%
      \foreach \pdj in {1,...,#3} {%
        \pgfmathtruncatemacro{\pdd}{\pdi+\pdj-1}%
        \ifthenelse{\equal{\p@diagmode}{cycle}}{%
          \pgfmathtruncatemacro{\pdk}{mod(\pdd-1,\pdnc)+1}%
          \gdef\pdfillcol{}%
          \foreach \pdc [count=\pdci] in \p@diaglist {%
            \ifnum\pdci=\pdk\relax \xdef\pdfillcol{\pdc!\p@diagmix}\fi}%
          \fill[\pdfillcol] (\pdi-1,-\pdj) rectangle (\pdi,1-\pdj);%
        }{%
          \pgfmathsetmacro{\pdh}{\pdnd>1 ? 0.85*(\pdd-1)/(\pdnd-1) : 0}%
          \definecolor{pdiagcol}{hsb}{\pdh,\p@diagsat,1}%
          \fill[pdiagcol] (\pdi-1,-\pdj) rectangle (\pdi,1-\pdj);%
        }%
      }}%
  \end{scope}
  \end{scope}}
\makeatother

\newcommand{\pshade}[2][gray!25]{%
  \begin{scope}[on background layer]\fill[#1] #2;\end{scope}}

\newlength{\phaahcell}
\newlength{\phaahinset}          % grid corner -> label centre, in x and y
\newlength{\phaahnamedinset}     % the same when the grid has a named qubit
\newlength{\phaahshadeinset}     % gap between the shading and the grid lines
\makeatletter
\newif\ifphaah@named             % does the grid have a named qubit?
\tikzset{
  haah grid/.style={line width=0.8pt, gray!70},
  haah check/.style={fill=gray!18},
  haah label/.style={font=\footnotesize, inner sep=0.4pt, outer sep=1pt},
  haah face/.style={line width=0.35pt, black!55},
  haah mark/.style={draw=black, fill=white, line width=0.5pt, radius=0.06},
  haah named mark/.style={circle, draw=black, fill=white, line width=0.5pt,
                          inner sep=0.3pt, minimum size=4.4mm, font=\footnotesize},
  plane/.code={\def\phaah@plane{#1}},
}

\def\phaah@regmark#1/#2/#3/#4\@nil{%
  \if\relax\detokenize{#3}\relax\else\global\phaah@namedtrue\fi}
\def\phaah@drawmark#1/#2/#3/#4\@nil{%
  \if\relax\detokenize{#3}\relax
    \draw[haah mark] (#1,-#2) circle;%
  \else
    \node[haah named mark] at (#1,-#2) {$#3$};%
  \fi}

\def\phaah@put#1#2#3#4#5#6{%
  \node[haah label, xshift=\phaah@d,  yshift=\phaah@d]  (phaah-bl) at (#1-1,-#2)  {#3};
  \node[haah label, xshift=-\phaah@d, yshift=\phaah@d]  (phaah-br) at (#1,-#2)    {#4};
  \node[haah label, xshift=\phaah@d,  yshift=-\phaah@d] (phaah-tl) at (#1-1,1-#2) {#5};
  \node[haah label, xshift=-\phaah@d, yshift=-\phaah@d] (phaah-tr) at (#1,1-#2)   {#6};
  \draw[haah face] (phaah-bl) -- (phaah-br)  (phaah-br) -- (phaah-tr)
                   (phaah-tr) -- (phaah-tl)  (phaah-tl) -- (phaah-bl);}

\def\phaah@face#1#2#3{%
  \ifthenelse{\equal{\phaah@plane}{-zy}}{%
    \ifnum#1=0 \phaah@put{#2}{#3}{$IX$}{$XX$}{$XI$}{$IX$}%
    \else      \phaah@put{#2}{#3}{$XI$}{$IX$}{$II$}{$XI$}\fi}{%
    \ifnum#1=0 \phaah@put{#2}{#3}{$XX$}{$IX$}{$IX$}{$XI$}%
    \else      \phaah@put{#2}{#3}{$IX$}{$XI$}{$XI$}{$II$}\fi}}

\NewDocumentCommand{\phaah}{O{} m m m O{}}{%
  \p@setaxis
  \begin{tikzpicture}[p grid baseline, p axes defaults, axes length=0.8,
                      x=\phaahcell, y=\phaahcell, plane=xy, axes=auto, #1]
    \global\phaah@namedfalse
    \foreach \f [count=\j] in {#3} {\xdef\phaah@m{\j}}%
    \foreach \f [count=\i] in {#4} {\xdef\phaah@n{\i}}%
    \foreach \rf [count=\j] in {#3} {%
      \foreach \cf [count=\i] in {#4} {%
        \ifnum\numexpr\rf*\cf\relax>0
          \fill[haah check] ([xshift=\phaahshadeinset, yshift=\phaahshadeinset]\i-1,-\j)
                  rectangle ([xshift=-\phaahshadeinset, yshift=-\phaahshadeinset]\i,1-\j);%
        \fi}}%
    \draw[haah grid, step=1] (0,-\phaah@m) grid (\phaah@n,0);
    \coordinate (p@gc) at ({\phaah@n/2},{-\phaah@m/2});
    \if\relax\detokenize{#5}\relax\else
      \foreach \m in {#5} {\expandafter\phaah@regmark\m///\@nil}%
    \fi
    \ifphaah@named \edef\phaah@d{\the\phaahnamedinset}%
    \else          \edef\phaah@d{\the\phaahinset}\fi
    \foreach \rf [count=\j] in {#3} {%
      \foreach \cf [count=\i] in {#4} {%
        \ifnum\numexpr\rf*\cf\relax>0
          \phaah@face{#2}{\i}{\j}%
        \fi}}%
    \if\relax\detokenize{#5}\relax\else
      \foreach \m in {#5} {\expandafter\phaah@drawmark\m///\@nil}%
    \fi
    \ifthenelse{\equal{\phaah@plane}{-zy}}%
      {\p@axes@maybe{-z/y}{\phaah@n}{\phaah@m}}%
      {\p@axes@maybe{x/y}{\phaah@n}{\phaah@m}}%
  \end{tikzpicture}}
\makeatother
\tikzset{p token/.style={font=\small, inner sep=0pt}}  % the style inside cells and for coordinate axis labels

\newtheorem{theorem}{Theorem}[section]
\newtheorem{lemma}[theorem]{Lemma}
\newtheorem{proposition}[theorem]{Proposition}

\title{Haah's 3D cubic code thermalizes rapidly}
 \author[1,2,3,*]{Sebastian Stengele}
 \author[1,3,*]{Libor Caha}
 \author[4,5]{\'Angela Capel}
 \author[1,2,3]{Simone Warzel}

 \affil[1]{\small Department of Mathematics, Technical University of Munich, 85748 Garching, Germany} 
\affil[2]{\small Department of Physics, Technical University of Munich, 85748 Garching, Germany} 
 \affil[3]{\small Munich Center for Quantum Science and Technology,
 80799 M\"{u}nchen, Germany}
 \affil[4]{\small Department of Applied Mathematics and Theoretical Physics, University of Cambridge, Wilberforce Road, Cambridge, CB3 0WA, United Kingdom}
 \affil[5]{\small Fachbereich Mathematik, Universität Tübingen, 72076 Tübingen, Germany\vspace{1.5em}} 
 \affil[*]{\small Contributed equally, ordered randomly}
\date{\small September 30, 2026}
\begin{document}

\maketitle
\begin{abstract}
Haah's $3D$ cubic code was introduced as a candidate for a self-correcting quantum memory in three dimensions, using the absence of string logical operators to obstruct thermal errors.
It has been a long-standing open question whether 
it is a self-correcting quantum memory.
Even though the memory time was explored numerically, no analytical upper bound was known.
Here, we settle this question rigorously and show that the cubic code
thermalizes rapidly at every positive temperature. 
Our proof relies on the framework of 
[Commun. Math. Phys. 407, 195 (2026)] to turn a decay-of-correlations condition into a modified logarithmic Sobolev inequality for the Davies generator.
To the best of our knowledge, this is the first analytic proof of rapid mixing for all positive temperatures for any fracton code.
\end{abstract}

\section{Introduction}

A self-correcting quantum memory would preserve quantum information stored in a Hamiltonian's ground state against thermal noise without active error correction. The 4D toric code shows that such passive protection is theoretically possible at sufficiently low temperatures \cite{AlickiEtAl2010}. 
Haah's cubic code, introduced as Code 1 in~\cite{Haah_2011}, provided a striking candidate in 3D: a translation-invariant stabilizer Hamiltonian with no string-like logical operators, which are deemed responsible for low-energy barriers and lack of self-correction~\cite{Bravyi_2009, YOSHIDA20112566}. The unusual error-geometry of the cubic code eliminates the constant-energy mechanism by
which point defects propagate along strings in familiar topological codes. By establishing a logarithmically growing energy barrier, Bravyi and Haah raised the prospect of a memory whose lifetime increases with the system's size \cite{Bravyi_2011}.  This prospect was solidified in a restricted regime by their result on partial self-correction~\cite{Bravyi_2013}. At sufficiently low temperature and for suitable system sizes with bounded ground-state degeneracy, their recovery procedure yields a lower bound on the memory time which is polynomial in the linear length $ L_{\Lambda} $ with an exponent increasing with inverse temperature $ \beta $. Their bound applies up to an $L_{\Lambda}$-cutoff, which is exponential in $ \beta $. Their simulations also exhibit an optimal $ L_{\Lambda} $ beyond which the measured memory time decreases. These results establish a substantial protection over a growing range of system sizes as the temperature is lowered, but the lower bound does not determine the asymptotic lifetime at fixed positive temperature. 

The observed decrease in memory time beyond an optimal system size is suggestive, but finite-size data alone do not settle the asymptotic question. For example, the $\mathbb Z_2$ gauge theory studied in~\cite{Stahl_2026} exhibits pronounced finite-size lifetime peaks even outside its stable classical-memory phase~\cite[Fig.~10]{Stahl_2026}. This illustrates why a quantitative bound is needed to resolve the cubic code’s behaviour at fixed positive temperature.

For the cubic code, Siva and Yoshida~\cite{SivaYoshida2017} established thermal triviality in a Gibbs-state preparation sense. They also developed an ergodic decomposition of Davies generators, reducing dynamical estimates to classical fractal spin models~\cite[Appendix~A]{SivaYoshida2017}. These results provide important equilibrium and dynamical information; the quantitative result proved here is a size-independent MLSI and logarithmic worst-case mixing at every fixed positive temperature. General stabilizer bounds already give analytical thermalization estimates, including bounds linear in system size above a sufficiently high temperature~\cite{TemmeKastoryano2015}. Our contribution is therefore the all-positive-temperature logarithmic bound for the specified cubic-code dynamics.

The cubic code's logarithmic energy barrier does not by itself give this conclusion through the bounds in~\cite{Temme17}. 
The argument there depends on the \emph{generalised} energy barrier, a quantity that characterises the worst-case energy barrier to constructing arbitrary Pauli errors through optimal sequences of single-qubit operations (while counting  only intermediate excitations absent from the final error configuration). The standard energy barrier provides only a lower bound on this quantity. This distinction is particularly striking in the $3D$ toric code, whose standard energy barrier is $O(1)$, whereas its generalised energy barrier is $\Omega(L_{\Lambda})$.  To our knowledge, no non-trivial upper bound on the generalised energy barrier of the cubic code has been established. Thus, an estimate of the proposed form in~\cite{Temme17} would require control
of the generalised barrier, which the cubic code’s known logarithmic logical barrier does not provide.

In this work, we establish rapid thermalization directly. We consider the cubic code Hamiltonian specified below and its local Davies dynamics with single-qubit Pauli couplings. The bath rates satisfy detailed balance, and, at each fixed positive temperature, are bounded above and away from zero independently of the volume. Our main result can then be informally stated as follows.

\begin{theorem}\label{thm:rm}
    The Davies generator of Haah’s cubic code mixes rapidly at every positive inverse temperature $\beta <\infty$.
\end{theorem}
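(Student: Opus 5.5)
The plan is to use the framework of \cite{} cited in the abstract (the CMP 2026 result): for commuting Hamiltonians with local Davies generators, a size-independent modified logarithmic Sobolev inequality (MLSI) follows from two inputs. The first is a uniform local gap, or local MLSI, on boxes of bounded size. The second is an approximate tensorization / decay-of-correlations condition for the Gibbs state across overlapping regions, in the $U,V,W$ geometry depicted in the figures: $U\cup V$ and $V\cup W$ with a separating region $V$ of width $L^{2/3}$.

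\textbf{Reduction to a classical problem.} First I reduce the quantum problem. Haah's code is a CSS stabilizer code, and the Davies generator with single-qubit Pauli couplings commutes with the stabilizer algebra. So the relevant conditional expectations act on the diagonal (syndrome) sector as classical Markov dynamics. The Gibbs state restricted to syndromes is then a product-like measure on independent cube-check variables, subject to the global linear constraints that $X$- (resp.\ $Z$-) type syndromes must satisfy. These constraints come from the polynomial (fractal) structure of the code. As in Siva--Yoshida's ergodic decomposition, I treat the $X$ and $Z$ sectors separately, each being a fractal-like classical model over $\mathbb F_2$.

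\textbf{Verifying the decay-of-correlations condition.} The key step is to bound quantities like $\|E_{UV}E_{VW}-E_{UVW}\|$. These are conditional expectations with respect to the Gibbs state restricted to regions. The difference is controlled by the probability that syndrome configurations in $U$ and in $W$ are correlated through the constraints supported across $V$. Concretely, for a Pauli error $P$ on $U\cup V\cup W$ one must show that the weight of error configurations which are "logical-like" across $V$, i.e.\ whose syndrome vanishes in a slab of width $\ell=L^{2/3}$, is exponentially small in $\ell$ at any $\beta<\infty$. I would use the identity sketched in the $B$/$B'$ figure. There one splits the sum over syndrome clusters into clusters of extent $\le\ell$, which are identical in $B$ and $B'$ and cancel, and long clusters of extent $>\ell$. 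A long cluster in the cubic code needs $\Omega(\ell)$ excitations in a suitable sense, because there are no string operators and fractal operators need many defects. Each defect costs a factor $e^{-2\beta}$, and there are only $e^{O(\ell)}$ cluster shapes. This would give decay at low temperature, but for all $\beta$ one needs more: I would exploit that the $\beta$-dependent Gibbs measure on syndromes is, after a linear change of variables, an i.i.d.\ Bernoulli measure on qubit errors modulo stabilizers. The conditioning then amounts to projecting onto a quotient of $\mathbb F_2$-vector spaces. With this structure, the correlation between $U$ and $W$ through $V$ reduces to a bias of parities of i.i.d.\ Bernoulli($p$) variables, $p=(1+e^{2\beta})^{-1}>0$. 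Each nontrivial parity check spanning $V$ involves $\Omega(\ell^{c})$ variables, so its bias is $(1-2p)^{\Omega(\ell^c)}$. A union bound over the $e^{O(L^{2/3}\cdot\text{area})}$ relevant checks requires the exponent to dominate. This is why windows of width $L^{2/3}$ and the shifting argument over $L^{1/6}$ windows $V_i$ (averaging) appear, and I expect to follow that route.

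\textbf{Main obstacle and conclusion.} The main obstacle is this counting: showing that every linear functional coupling $U$ to $W$ across $V$ has support growing polynomially in $\ell$. This is a statement about the absence of short logical operators for the truncated polynomial code, and I would prove it via Haah's polynomial formalism (the syndrome map over $\mathbb F_2[x,y,z]$ has no nontrivial low-degree kernel within a slab). Periodic boundary conditions need the second, wrapped geometry, handled by the same estimate on two separating regions. Given local MLSI on blocks (finite-size, trivially positive) and the approximate tensorization constant $1+O(e^{-cL^{c'}})$, the framework yields a size-independent MLSI. The MLSI in turn gives mixing time $O(\log|\Lambda|)$, proving Theorem~\ref{thm:rm}.
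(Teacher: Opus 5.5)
Your architecture is the paper's. The CMP framework reduces everything to the DS ratio \eqref{eq:fraction}. The partial traces are expanded exactly as $\Tr_B e^{-\beta H_B}\propto\Sigma_B=\sum_{S\in\Gamma_B}\tau^{|S|}P_S$ with $\tau=\tanh\beta=1-2p$; these are your ``parity biases''. Configurations not reaching depth $\ell$ cancel between nested boxes (Lemma~\ref{lem:switching-light}), and the rest must be suppressed by their weight.

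\textbf{The counting.} The gap is the quantitative core, and the count you propose would make the argument fail. A union bound over $e^{O(L^{2/3}\cdot\mathrm{area})}=e^{O(L^{8/3})}$ terms is beaten by $\tau^{\Omega(\ell^c)}$ with $\ell\sim L^{2/3}$ only if $c>4$. Your first heuristic ($\Omega(\ell)$ defects against $e^{O(\ell)}$ shapes) is a Peierls balance and, as you note, needs low temperature. What makes every $\beta$ work are two specific facts about Haah's checks.
First, $\dim\Gamma_B\le 2(L_x+L_y+L_z)$ (Lemma~\ref{lem:dimofkernel}), so there are only $e^{O(L)}$ terms, linear in the side length rather than the volume. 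This is proved by a slice induction: the terminal slice is fixed by $2(L_y+L_z)$ boundary entries via the triangular constraint $\Delta^x$, and each earlier slice adds at most two free bits.
Second, every $S\in\Gamma_B^{a>\ell}$ has $|S|\ge\min\{\ell^2,L_{\min}^2\}/100$ (Lemmas~\ref{lem:coreweight} and~\ref{lem:checksize}), so the growth is specifically quadratic.
Together these give $e^{O(L)-\Omega(d^2)}$, and $d\ge L^{2/3}$ (any exponent above $1/2$ would do) lets $d^2$ beat $L$. That is the real origin of the $L^{2/3}$ scale. The saturation at $L_{\min}^2$ and the need for $d^2\gg L$ are also why the original multiscale scheme ($d\ge\sqrt{L}$, arbitrary aspect ratio) must be replaced by the fat-box, $d\ge L^{\delta}$ version of the appendix. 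Finally, ``no nontrivial low-degree kernel within a slab'' is the wrong target. $\Gamma_B$ does contain small elements near the boundary of $B$ (Fig.~\ref{fig:gammaillustration}); only those reaching depth $\ell$ must be heavy.

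\textbf{The denominator.} You never control it. A small numerator $\Sigma_{UV}\Sigma_{VW}-\Sigma_{UVW}\Sigma_V$ is useless without a lower bound on $\lambda_{\min}(\Sigma_{UVW}\Sigma_V)$. In a joint eigenbasis the eigenvalues are signed character sums $\sum_{S\in\Gamma_B}\tau^{|S|}\chi(S)$, and the trivial bound on them is exponentially small in the volume, which would swamp any $e^{-\Omega(d^2)}$. The paper proves $\lambda_{\min}\ge(1-\tau)^{\dim\Gamma_{UVW}+\dim\Gamma_V}$ (Lemmas~\ref{lem:characterbound} and~\ref{lem:denominator}), by writing the character sum as the expectation of a nonnegative random variable. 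Again the linear dimension bound is what lets the loss $(1-\tau)^{-O(L)}$ be absorbed.

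\textbf{The change of variables.} Your proposed linear map to i.i.d.\ Bernoulli qubit errors is not correct. Under the Gibbs state the check outcomes are i.i.d.\ Bernoulli$(p)$ only conditioned on the linear relations among checks. Those relations, taken modulo the exterior of a box, are exactly the spaces $\Gamma_B$, whose size and weights carry the whole argument.
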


More precisely, at every fixed positive temperature, we prove a modified logarithmic Sobolev inequality (MLSI) with a constant independent of the system size for a cubic code of linear size $ L_{\Lambda} $ (and hence $ \Theta(L_{\Lambda}^3) $-many physical qubits). As a consequence, the worst-case trace-distance mixing time  is shown to be bounded by 
\begin{equation}
 t_{\mathrm{mix}}(\varepsilon) \leq C _\beta \left(\log L_{\Lambda}  + \log \varepsilon^{-1}\right) ,
\end{equation}
where $ C_\beta $ is independent of $ L_{\Lambda} $ and $ \varepsilon$. This bounds convergence from arbitrary initial states to the Gibbs state. 

The operational consequence is an $ \mathcal{O}(\log L_{\Lambda}) $ upper bound on passive-memory time irrespective of the final decoder. Indeed, mixing makes initially distinct encoded states nearly indistinguishable, and no recovery channel can increase their trace distance. The result hence excludes asymptotic polynomial growth of the memory lifetime in $ L_{\Lambda} $ at fixed positive temperature. 
It remains compatible with the partial self-correction \cite{Bravyi_2013} below its temperature-dependent cut-off since the above constant $ C_\beta $ may grow rapidly as $ \beta \to \infty $.

Theorem~\ref{thm:rm} supports the conjectured absence of self-correcting quantum memories among the 3D translation-invariant Pauli stabilizer
Hamiltonians with local topological order \cite{Haah2013}. Beyond translation invariance, CSS codes in 3D with a memory-time lower bound under local thermal dynamics that is exponential in some power of the number of qubits have recently been proposed in~\cite{BalasubramanianDavydovaLin2026}. Although these codes are geometrically local, their lack of translation invariance might render them less practical. \\

Establishing MLSIs strong enough to imply rapid mixing is a substantial challenge for interacting quantum systems. Even for commuting Hamiltonians, the probabilistic conditioning arguments used in classical proofs have no straightforward counterpart for the local quantum dynamics~\cite{bardet2020approximate,stengele_ModifiedlogarithmicSobolev_2025}. Results of this kind remain comparatively scarce: they include commuting spin chains~\cite{bardet2021MLSIDavies1D,Kochanowski.2024} and certain commuting models at high temperature~\cite{capel2020MLSI,Kochanowski.2024,CapelGondolfKochanowskiRouze-RapidMixingCommuting-2024}, while examples valid at every positive temperature in higher dimensions include the two-dimensional toric code and Abelian quantum double models~\cite{stengele_ModifiedlogarithmicSobolev_2025,stengele_AbelianQuantumDouble_2026}. For classical spin systems, there is a route from spatial mixing of the Gibbs measure to logarithmic Sobolev inequalities for Glauber dynamics~\cite{stroock_LogarithmicSobolevInequality_199,martinelli_ApproachEquilibriumGlauber_1994b}. The approach developed in~\cite{stengele_ModifiedlogarithmicSobolev_2025} adapts this strategy to the Davies dynamics of CSS codes: a Dobrushin-Shlosman (DS) condition yields approximate tensorization of quantum relative entropy, and a multiscale argument turns local MLSI bounds into a bound uniform in the system size. We follow this strategy here, establishing the required DS-condition for the two check types of Haah's code through geometric estimates on check configurations. 

For the DS-condition, we use statistical-mechanics methods analogous to those in~\cite{Toninelli2017}, which proved spatial mixing in the classical square and triangular plaquette model. The key geometric idea in our paper is to identify cancellations in the high-temperature expansion of the partially traced Gibbs state entering the DS-condition. The objects that support this expansion are the kernels of the check maps, cf.~Fig.~\ref{fig:gammaillustration} for examples. We  prove bounds on their number and the dimension of their support. 

Part of the geometric difficulty in such estimates is that Haah's cubic code is a fracton model. It is only one example within the broader family of fracton codes and
related exactly solvable models, which includes the Chamon and X-cube
models as well as families of fractal spin
liquids~\cite{Chamon2005,Haah_2011,Yoshida2013Fractal,VijayHaahFu2015,Brown_2016,VijayHaahFu2016,NandkishoreHermele2019,PretkoChenYou2020}.

\begin{figure}[h]
    \centering
       \haahconfigs[cube opacity=1, scale=0.6]
    \caption{Three examples of terms in the high temperature expansion of the $X$-part. These are sets of checks that multiply to the identity inside the grey box.}
    \label{fig:gammaillustration}
\end{figure}

\section{Set-up and main result}

\subsection{Haah's Hamiltonian} 

We consider Haah's cubic code~\cite{Haah_2011} with linear size $L_{\Lambda}\geq3$ and periodic boundary conditions.\footnote{For simplicity we restrict ourselves to the periodic boundary conditions, but we expect the result and proof strategy to carry over to open boundary conditions as well.} The underlying lattice is the discrete torus $\Lambda=(\mathbb Z/L_{\Lambda}\mathbb Z)^3$ with coordinates understood modulo $L_{\Lambda}$.
Each vertex carries two qubits, labelled by $1$ and $2$, giving $n=2|\Lambda|=2L_{\Lambda}^3$ physical qubits. Each elementary cube supports an \emph{$X$-check} $P_s^X$ and a \emph{$Z$-check} $P_s^Z$, with the following Pauli operators:

\begin{equation}\label{eq:haah-checks}
\vspace{0.7em}
    \hspace*{-2.5cm}% Haah's code 1: star and plaquette on identically oriented unit cubes.
% Corner labels are ordered by qubit 1, then qubit 2.
% Projection: x points right, y up, and z towards the lower left.
\begin{tikzpicture}[
    x=1cm, y=1cm,
    baseline={([yshift=-0.5ex]current bounding box.center)},
    hc edge/.style={draw=black!55, line width=0.65pt},
    hc hidden/.style={draw=black!40, line width=0.5pt,
        dash pattern=on 2pt off 2pt},
    hc pauli/.style={font=\small, fill=white, rounded corners=1pt,
        inner xsep=2.5pt, inner ysep=1.5pt, outer sep=0pt},
    hc title/.style={font=\small, inner sep=0pt},
    hc axis/.style={-{Stealth[length=1.5mm]}, draw=black, line width=1pt},
    hc axis label/.style={font=\small, text=black, inner sep=1.5pt}
]
    % Draw the same geometry for both checks before placing their labels.
    \foreach \prefix/\offset/\colour in {X/1/teal,Z/5.2/violet} {
        \begin{scope}[xshift=\offset cm,
            x={(2.2cm,0cm)}, y={(0cm,2.2cm)}, z={(-0.95cm,-0.75cm)}]
            \foreach \site/\xx/\yy/\zz in {
                000/0/0/0,100/1/0/0,010/0/1/0,001/0/0/1,
                110/1/1/0,011/0/1/1,101/1/0/1,111/1/1/1}
                \coordinate (\prefix\site) at (\xx,\yy,\zz);
            \fill[\colour!7]
                (\prefix010) -- (\prefix110) -- (\prefix111) -- (\prefix011) -- cycle;
            \fill[\colour!13]
                (\prefix100) -- (\prefix110) -- (\prefix111) -- (\prefix101) -- cycle;
            \fill[\colour!4]
                (\prefix001) -- (\prefix101) -- (\prefix111) -- (\prefix011) -- cycle;
            \draw[hc hidden]
                (\prefix000) -- (\prefix100)
                (\prefix000) -- (\prefix010)
                (\prefix000) -- (\prefix001);
            \draw[hc edge]
                (\prefix100) -- (\prefix110) -- (\prefix010) -- (\prefix011)
                -- (\prefix001) -- (\prefix101) -- (\prefix100)
                (\prefix111) -- (\prefix110)
                (\prefix111) -- (\prefix011)
                (\prefix111) -- (\prefix101);
        \end{scope}
    }

    \node[hc title, text=teal!65!black] at (2.1,2.75) {$P_s^X$ ($X$-check)};
    \node[hc title, text=violet!65!black] at (6.3,2.75) {$P_s^Z$ ($Z$-check)};

    % The Pauli assignments are unchanged from the algebraic definition.
    \foreach \site/\pauli in {000/XX,100/IX,010/IX,001/IX,110/XI,011/XI,101/XI,111/II}
        \node[hc pauli, text=teal!65!black] at (X\site) {$\pauli$};
    %\node[hc pauli, text=black!50] at (X111) {$II$};
    \foreach \site/\pauli in {000/II,100/IZ,010/IZ,001/IZ,110/ZI,011/ZI,101/ZI,111/ZZ}
        \node[hc pauli, text=violet!65!black] at (Z\site) {$\pauli$};
    %\node[hc pauli, text=black!50] at (Z000) {$II$};

    % A shared coordinate triad fixes the orientation of both cubes.
    \begin{scope}[shift={(-2,0.2)}]
        \draw[hc axis] (0,0) -- (0.6,0)
            node[hc axis label, anchor=west] {$x$};
        \draw[hc axis] (0,0) -- (0,0.6)
            node[hc axis label, anchor=south] {$y$};
        \draw[hc axis] (0,0) -- (-0.38,-0.3)
            node[hc axis label, anchor=north east] {$z$};
    \end{scope}
    \node[font=\small, text=black, inner sep=0pt] at (3.225,-1.3)
        {At each vertex: $PQ=P_1\otimes Q_2$};
\end{tikzpicture}
\vspace{0.7em}
\end{equation}
The checks in~\eqref{eq:haah-checks} have lattice $\ell^\infty$-diameter one, and each qubit belongs to four $X$-checks and four $Z$-checks.
Let $\checkset^X$ and $\checkset^Z$ be the sets of $X$-check and $Z$-check labels, respectively. All checks commute and square to the identity. With the check couplings set to one, the Hamiltonian acting on the Hilbert space $\mathcal H_\Lambda=(\mathbb C^2)^{\otimes n}$ is
\begin{equation}\label{eq:hamiltonian-gibbs}
    H_\Lambda=H_\Lambda^X+H_\Lambda^Z \ ,
    \qquad H_\Lambda^X=-\sum_{s\in\checkset^X}P_s^X \ ,
    \qquad H_\Lambda^Z=-\sum_{s\in\checkset^Z}P_s^Z \ .
\end{equation}
There are $2L_{\Lambda}^3$ checks in total, so $\|H_\Lambda\|_\infty\leq2 L_{\Lambda}^3= n$. 
For statements applying to either check type, we use $\sharp\in\{X,Z\}$, with $P_s^\sharp$ denoting the check indexed by $s\in\checkset^\sharp$. 

The Gibbs state corresponding to~\eqref{eq:hamiltonian-gibbs} is
\begin{equation}\label{eq:Gibbs}
    \rho=\frac{e^{-\beta H_\Lambda}}{Z_\Lambda} \ ,
    \qquad Z_\Lambda=\Tr e^{-\beta H_\Lambda} \ .
\end{equation}
Throughout, $0<\beta<\infty$ is fixed. Constants may depend on $\beta$ and on the bath rates, but not on $L_{\Lambda}$.

\subsection{Davies dynamics}

Our main result concerns the thermal dynamics of the Hamiltonian~\eqref{eq:hamiltonian-gibbs}.
We use the Davies dynamics generated by single-qubit $X$ and $Z$ bath couplings, with the conventions of~\cite{stengele_ModifiedlogarithmicSobolev_2025}. Given the spectral resolution $H_\Lambda=\sum_E E\Pi_E$, we define jump operators
\[
    S_{v,q,a}(\omega)=\sum_{E'-E=\omega}\Pi_E a_{v,q}\Pi_{E'} \ ,
    \qquad v\in\Lambda,\quad q\in\{1,2\},\quad a\in\{X,Z\} \ ,
\]
so that $[H_\Lambda,S_{v,q,a}(\omega)]=-\omega S_{v,q,a}(\omega)$. In the Schrödinger picture, the Davies Lindbladian takes the form
\begin{equation}\label{eq:davies-generator}
    \mathcal L(\sigma)=\sum_{v,q,a,\omega}h_{v,q,a}(\omega)
    \left(S_{v,q,a}(\omega)\sigma S_{v,q,a}(\omega)^\dagger
    -\frac12\{S_{v,q,a}(\omega)^\dagger S_{v,q,a}(\omega),\sigma\}\right).
\end{equation}
The rates satisfy detailed balance and a uniform lower bound:
\begin{equation}\label{eq:bath-rates}
    h_{v,q,a}(-\omega)=e^{-\beta\omega}h_{v,q,a}(\omega)\ ,
    \qquad
    \inf_{L,v,q,a,\omega}h_{v,q,a}(\omega)e^{-\beta\omega/2}
    \geq g_\beta>0\ ,
\end{equation}
where the infimum is over nonzero jump operators. Each coupling changes only the finitely many checks incident on its qubit, so the possible frequencies are bounded independently of $L$. Fixed, strictly positive, translation-invariant rates therefore satisfy this assumption. The two Pauli couplings on every qubit generate the full matrix algebra; the resulting dynamics has the Gibbs state~\eqref{eq:Gibbs} as its unique stationary state $\rho$.

The Davies generator splits into $X$-check and $Z$-check parts,
\[
    \mathcal L=\mathcal L^X+\mathcal L^Z \ .
\]
The $X$-check part $\mathcal L^X$ contains the $ Z$-bath couplings, which change only $X$-check energies; the $Z$-check part $\mathcal L^Z$ contains the $ X$-bath couplings, which change only $Z$-check energies. Thus, the superscript $ \sharp \in \{X,Z\} $ labels the affected checks. The two parts commute~\cite[Lemma~3.3]{stengele_ModifiedlogarithmicSobolev_2025}. For $B\subseteq\Lambda$, $\mathcal L_B^\sharp$ denotes the sum of terms with coupling sites in $B$, retaining their full support in $\Lambda$.

\subsection{Modified logarithmic Sobolev inequality and mixing time}

Let $ \mathcal D(\mathcal H_\Lambda)
    :=\{\sigma\in\mathcal B(\mathcal H_\Lambda):\sigma\geq0,\ \Tr\sigma=1\} $
denote the set of density operators on the system's Hilbert space. For fixed $ \Lambda $ and inverse temperature $\beta>0$, we abbreviate by $\mathcal T_t:=e^{t\mathcal L}$ the Davies semigroup with stationary state $\rho$, the Gibbs state~\eqref{eq:Gibbs}. We measure convergence of the evolution using the trace norm $\|A\|_1:=\Tr\sqrt{A^\dagger A}$. For $0<\varepsilon<1$, the \emph{$\varepsilon$-mixing time} is
\begin{equation}\label{eq:mixing-time-definition}
    t_{\mathrm{mix}}(\varepsilon)
    :=\inf\left\{t\geq0:
    \sup_{\sigma\in\mathcal D(\mathcal H_\Lambda)}
    \|\mathcal T_t(\sigma)-\rho\|_1\leq\varepsilon\right\} \ .
\end{equation}
Since $\mathcal T_t$ preserves $\rho$ and contracts the trace norm on differences of states, this worst-case distance to equilibrium is non-increasing. Thus the mixing time is the earliest time after which every initial state remains within distance $\varepsilon$ of equilibrium. The Davies dynamics considered here converges to $\rho$, so this time is finite. 

At fixed inverse temperature $\beta$, we call the family of dynamics \emph{rapidly mixing} if its mixing time grows at most polylogarithmically with the number $ n $ of qubits: there is a fixed $k\geq0$ such that, for every fixed $0<\varepsilon<1$,
\[
    t_{\mathrm{mix}}(\varepsilon)
    =\mathcal{O}\left((\ln n)^k\right) \ .
\]
If instead $t_{\mathrm{mix}}(\varepsilon)
    =\mathcal{O}(n^q)$, i.e. the mixing time grows at most polynomially, we call the family \emph{fast mixing}. 
Rapid mixing therefore implies fast mixing.

For states $\sigma$ and $\rho$, with $\rho$ of full rank, let
$D(\sigma\|\rho)=\Tr[\sigma(\ln\sigma-\ln\rho)]$ be the quantum relative entropy. We define a \textit{modified logarithmic Sobolev inequality} (MLSI in short) with constant $\alpha>0$  as the inequality
\begin{equation}\label{eq:MLSI}
    2\alpha D(\sigma\|\rho)
    \leq -\Tr\left[\mathcal L(\sigma)(\ln\sigma-\ln\rho)\right]
    =:EP_\Lambda(\sigma)
\end{equation}
for every full-rank state $\sigma$. The right-hand side is the \textit{entropy production}. Integrating this differential inequality gives
\begin{equation}\label{eq:entropy-decay}
    D(e^{t\mathcal L}(\sigma)\|\rho)
    \leq e^{-2\alpha t}D(\sigma\|\rho)
    \leq e^{-2\alpha t}\ln(1/\rho_{\min}) \ ,
\end{equation}
where $\rho_{\min}$ is the smallest eigenvalue of $\rho$. The estimate extends to arbitrary initial states by continuity. Since in this work $\rho$ is always the Gibbs state of a local Hamiltonian, we have $\ln(1/\rho_{\min})\leq c_\beta n$, with some $c_\beta < \infty $. Together with Pinsker's inequality, we find
\begin{equation}
    \|e^{t \mathrm{\mathcal{L}}}(\sigma) - \rho\|_1 \leq \sqrt{2 D(e^{t \mathrm{\mathcal{L}}}(\sigma) \| \rho)} \leq  \sqrt{ 2 c_\beta n }\,  e^{- \alpha t} \ .
\end{equation}
By~\eqref{eq:mixing-time-definition}, we then obtain for any $0<\varepsilon<1$,
\begin{equation}\label{eq:mixing-time}
    t_{\mathrm{mix}}(\varepsilon)
    \leq\frac{1}{2\alpha}\ln\left(\frac{2c_\beta n}{\varepsilon^2}\right)\ .
\end{equation}
Thus, an inverse-polylogarithmic lower bound on $\alpha$ suffices for rapid mixing: if $\alpha\geq a_\beta(\ln n)^{-p}$ for some $a_\beta>0$ and fixed $p\geq0$, independent of $L$, then~\eqref{eq:mixing-time} gives
\[
    t_{\mathrm{mix}}(\varepsilon)
    =\mathcal{O}\left((\ln n)^p
    \bigl(\ln n+\ln\varepsilon^{-1}\bigr)\right)\ .
\]
At fixed temperature and accuracy,  the mixing time hence remains polylogarithmic in the system size. The size-independent MLSI bound that we prove in this paper gives the stronger case $p=0$.

\subsection{DS-condition for rapid mixing}\label{s:DSgeo}

For a subset $B\subseteq\Lambda$, we define
\begin{equation}\label{eq:local-hamiltonian}
    \checkset_B^\sharp
    :=\{s\in\checkset^\sharp:\mathrm{supp}(P_s^\sharp)
    \cap(B\times\{1,2\})\neq\emptyset\}\ ,
    \qquad H_B^\sharp:=-\sum_{s\in\checkset_B^\sharp}P_s^\sharp \ .
\end{equation}
These Hamiltonians include every retained check touching $B$, with its full support in $\Lambda$. Thus, checks may cross the boundary of $B$. Following~\cite[Eq.~(1.19)]{stengele_ModifiedlogarithmicSobolev_2025},  we introduce the conditional Gibbs operators
\begin{equation}\label{eq:conditional-gibbs}
    \widehat\rho_B^\sharp
    :=e^{-\beta H_B^\sharp}\bigl(\Tr_B e^{-\beta H_B^\sharp}\bigr)^{-1}.
\end{equation}
Here $\Tr_B$ is the ordinary, unnormalized partial trace over both qubits at each vertex of $B$. We identify its output with an operator on the full system by tensoring with the identity on $B$. In particular, $\widehat\rho_B^\sharp$ is normalized conditionally on the exterior, $\Tr_B\widehat\rho_B^\sharp=I_{B^c}$; it is generally different from the reduced Gibbs state on $B$. All operators of the $X$- and $Z$-parts are diagonal in the corresponding Pauli product basis, and $\Tr_B e^{-\beta H_B^\sharp}$ is strictly positive.

Boxes are products of consecutive lattice intervals contained in $\Lambda$.
They may inherit periodic boundary conditions in one or more coordinate directions from $\Lambda$.  We write $L_{\min}^B$ and $L_{\max}^B$ to denote their shortest and longest side lengths, measured in numbers of vertices. We call $B$ \emph{fat} if $L_{\max}^B\leq10L_{\min}^B$.

We now specify the \emph{DS-geometry}. Let $U,V,W \subset \Lambda$ be pairwise disjoint and nonempty sets. We write $UV:=U\sqcup V$, $VW:=V\sqcup W$, and $UVW:=U\sqcup V\sqcup W$. The regions $UV,VW,UVW$ are fat boxes with the same intervals in two coordinate directions. Along the remaining direction $a$, either $UVW$ is a proper interval and the three sets occur in the order $U,V,W$, or $UVW$ is the full coordinate cycle and the sets occur in cyclic order $U,V_1,W,V_2$, with $V=V_1\sqcup V_2$. No fatness assumption is imposed on $V$ or its components.
Distances between points on the ambient torus are measured using the infinity norm with periodic boundary conditions. We write $d:=\dist{U,W}=\min_{u\in U,w\in W}\|u-w\|_\infty$.

\begin{figure}[h]
    \centering
    \uvw[measures=false, v width=2, u width=2.3, w width=2.3, v length={d=\dist{U,W}}, v note={\mbox{separation region}}, axes=x/y]
    \qquad\qquad
    \uvwper[measures=false, wrap=false, height=3, v1 width=0.8, v2 width=1.2, v1 length={d=\dist{U,W}}, v2 length={}, u width=2.3, w width=2.3, v note={\mbox{separation region}},]
    \caption{Geometry of the DS estimate with open boundaries. The box $V$ separates $U$ from $W$, and $UV$ and $VW$ overlap on $V$. The left panel illustrates open boundary conditions and the right panel illustrates periodic boundary conditions with $V=V_1\sqcup V_2$. Note that in the left panel, the length of $V$ along the $x$ direction is $d-1$.}

    \label{fig:dsuvw}
\end{figure}

The boundary correlation quantity we control is 
\begin{equation}\label{eq:DS-defect}
    \mathrm{DS}^\sharp(U,V,W)
    :=\left\|\frac{\bigl(\Tr_{UV}e^{-\beta H_{UV}^\sharp}\bigr)
    \bigl(\Tr_{VW}e^{-\beta H_{VW}^\sharp}\bigr)}
    {\bigl(\Tr_Ve^{-\beta H_V^\sharp}\bigr)
    \bigl(\Tr_{UVW}e^{-\beta H_{UVW}^\sharp}\bigr)}-I\right\|_\infty \ .
\end{equation}
Fractions denote multiplication by the inverse of the denominator. When no check touches both $U$ and $W$, cancellation of the interactions that meet only one side gives
\begin{equation}\label{eq:DS-conditional-ratio}
    \frac{\Tr_{VW}\widehat\rho_{UVW}^\sharp}
    {\Tr_V\widehat\rho_{UV}^\sharp}
    =\frac{\bigl(\Tr_{UV}e^{-\beta H_{UV}^\sharp}\bigr)
    \bigl(\Tr_{VW}e^{-\beta H_{VW}^\sharp}\bigr)}
    {\bigl(\Tr_Ve^{-\beta H_V^\sharp}\bigr)
    \bigl(\Tr_{UVW}e^{-\beta H_{UVW}^\sharp}\bigr)} \ .
\end{equation}
For a fixed $0<\delta<1$, we say that the \textit{DS-condition}~\cite{stengele_ModifiedlogarithmicSobolev_2025} holds if there are $K,\xi,L_0>0$, independent of $L$, such that
\begin{equation}\label{eq:DS-condition}
    \mathrm{DS}^\sharp(U,V,W)\leq K e^{-\xi d}
\end{equation}
for every triple in the geometry above with
$L_{\min}^{UVW}>L_0$ and $d\geq(L_{\max}^{UVW})^\delta$. The name of this condition is due to the classical Dobrushin-Shlosman condition~\cite{dobrushinShlosman1987complete}, to which it reduces in the classical case~\cite{stroock_LogarithmicSobolevInequality_199,martinelli_ApproachEquilibriumGlauber_1994a,martinelli_ApproachEquilibriumGlauber_1994b}.
Our main tool for a proof of the main result on the Haah code is the following criterion.

\begin{theorem}[Rapid mixing from the DS-condition]\label{thm:DS-to-MLSI}
    Consider Haah's cubic code on the lattice $\Lambda$ with periodic boundary conditions and the Davies generator~\eqref{eq:davies-generator} with rates satisfying~\eqref{eq:bath-rates}. Fix $0<\beta<\infty$. If both the $X$- and $Z$-checks satisfy~\eqref{eq:DS-condition} for some $0<\delta<1$, then the full generator satisfies~\eqref{eq:MLSI} with a constant $\alpha>0$ independent of $L$. In particular, $t_{\mathrm{mix}}(\varepsilon)=\mathcal{O}(\log n+\log\varepsilon^{-1})$, where $n=2|\Lambda|$.
\end{theorem}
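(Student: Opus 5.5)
The plan is to reduce Theorem~\ref{thm:DS-to-MLSI} to the general CSS-code machinery of~\cite{stengele_ModifiedlogarithmicSobolev_2025}, checking that Haah's code satisfies its structural hypotheses. The proof has four steps.

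\textbf{Step 1: splitting into the $X$- and $Z$-parts.} Since $H_\Lambda^X$ and $H_\Lambda^Z$ commute and $\mathcal L=\mathcal L^X+\mathcal L^Z$ with commuting parts~\cite[Lemma~3.3]{stengele_ModifiedlogarithmicSobolev_2025}, the Gibbs state factorizes as $\rho\propto e^{-\beta H^X_\Lambda}e^{-\beta H^Z_\Lambda}$. By the decomposition result of that framework, an MLSI for each part, relative to its own Gibbs factor, yields an MLSI for $\mathcal L$ with constant the minimum of the two. It therefore suffices to treat a fixed $\sharp$. Each part then acts as a classical-like dynamics diagonal in a Pauli product basis.

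\textbf{Step 2: approximate tensorization.} For $UVW$ in the DS-geometry, the DS-condition~\eqref{eq:DS-condition} enters through the identity~\eqref{eq:DS-conditional-ratio}. Checks have $\ell^\infty$-diameter one, so no check touches both $U$ and $W$ once $d\ge 2$. The bound $\mathrm{DS}^\sharp\le Ke^{-\xi d}$ then gives approximate tensorization of the relative entropy with respect to the conditional expectations onto $UV$ and $VW$, with multiplicative error $1+O(Ke^{-\xi d})$.

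\textbf{Step 3: multiscale induction.} This follows Martinelli--Olivieri and its quantum adaptation in~\cite{stengele_ModifiedlogarithmicSobolev_2025}. A fat box of side $\sim L_k$ is split into two fat boxes of side $\sim \tfrac23 L_k$ overlapping in slabs of width $d_k\sim L_k^{\delta'}$, with $\delta<\delta'<1$. To average the boundary terms, one uses $\sim L_k^{1-\delta'}/\text{const}$ shifted choices of the overlap slab, as in the stacked pictures. The periodic case $U,V_1,W,V_2$ handles boxes wrapping the torus.

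The resulting recursion has the form $\alpha_{k+1}^{-1}\le \alpha_k^{-1}(1+CL_k^{-(1-\delta')})(1+Ke^{-\xi L_k^{\delta'}})$. Both error products converge since $L_k$ grows geometrically. The base case, boxes of size $\le L_0$, follows from a finite-volume MLSI whose constant depends only on $L_0$, $\beta$ and $g_\beta$. This uses the lower bound~\eqref{eq:bath-rates} and the fact that the single-qubit couplings are ergodic on each local block (the conditional expectation onto a finite box is primitive).

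The final torus $\Lambda$ is covered by one more application with the cyclic geometry. This gives $\alpha\ge\alpha_\infty>0$ independent of $L$, and the mixing time bound then follows from~\eqref{eq:mixing-time}.

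\textbf{Main obstacle.} The step needing genuine care is confirming that the hypotheses of the abstract theorem of~\cite{stengele_ModifiedlogarithmicSobolev_2025} hold for Haah's code. That work was written with the toric code and quantum doubles in mind. The items to verify are:
\begin{itemize}
\item locality of the conditional expectations $E_B^\sharp$ associated with $\widehat\rho_B^\sharp$;
\item the quasi-factorization needing only the \emph{ratio} in~\eqref{eq:DS-defect} rather than any topological input;
\item the ``local-gap'' bounds (the base case), holding with constants uniform over box shapes allowed in the fatness constraint $L_{\max}\le 10L_{\min}$.
\end{itemize}
I would first check that the proof there uses only these three properties: commuting Pauli checks of bounded support and overlap degree (here four per qubit), bounded-range frequencies, and rates satisfying~\eqref{eq:bath-rates}. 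I would then state the theorem as a direct corollary.
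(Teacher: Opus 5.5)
Your proposal follows the paper's proof essentially step for step: the $X$/$Z$ splitting via commuting parts and \cite[Corollary~1.6]{stengele_ModifiedlogarithmicSobolev_2025}, approximate tensorization from the DS bound via \cite[Theorem~5.1]{stengele_ModifiedlogarithmicSobolev_2025}, a fat-box multiscale induction with overlap strips of width $\sim L^{\delta}$ averaged over shifted positions, and the local base case from \cite[Section~6.1]{stengele_ModifiedlogarithmicSobolev_2025} (the paper takes $I=\lfloor L^{(1-\delta)/2}\rfloor$ shifts where you take $\sim L^{1-\delta'}$, and either choice gives summable errors). The one detail to make precise is the class of boxes over which $\alpha(L)$ is taken: the paper uses \emph{regular} fat boxes, meaning each side is a full cycle or a proper interval of length at most $2L_\Lambda/3$, which include partially wrapped boxes at every scale so that $\Lambda$ is reached by the induction itself rather than by one extra cyclic step, and which keep $\operatorname{dist}(U,W)$, measured on the torus, at least $(L^{B}_{\max})^{\delta}$ after every cut (a proper interval of length close to $L_\Lambda$ would put $U$ and $W$ at torus distance $O(1)$); also, since each cut shortens only one side, up to three cuts per scale are needed, which cubes the factor in your recursion.
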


This theorem relies on a minor modification of the criterion established in~\cite{stengele_ModifiedlogarithmicSobolev_2025}. Compared with the square-root overlap scale used in the multiscale construction of~\cite[Section~6.2]{stengele_ModifiedlogarithmicSobolev_2025}, the above formulation allows overlaps $ d $  of width $L^\delta$ and only requires fat boxes. Appendix~\ref{sec:multiscale} proves that this version is sufficient for the MLSI argument.
We will comment on the proof strategy and its modifications in more detail in the next subsection.

\subsection{On the proof of Theorem~\ref{thm:DS-to-MLSI}}\label{sus:comments}

The crucial link between the DS-condition~\eqref{eq:DS-condition} and the MLSI~\eqref{eq:MLSI} is an approximate tensorization inequality for the relative entropy. It applies separately to the two commuting parts $\mathcal L^X$ and $\mathcal L^Z$ of the Davies generator. Following~\cite[Definition~4.1]{stengele_ModifiedlogarithmicSobolev_2025}, we define the Davies conditional expectations in the Schr\"odinger picture and the corresponding conditional relative entropies by
\begin{equation}\label{def:CERE}
    \mathbb E_B^\sharp:=\lim_{t\to\infty}e^{t\mathcal L_B^\sharp} \ ,
    \qquad D_B^\sharp(\sigma):=D(\sigma\|\mathbb E_B^\sharp(\sigma)) \ .
\end{equation}
We abbreviate by $\alpha^\sharp(B)$ the largest constant for which
\[
    2\alpha^\sharp(B)D_B^\sharp(\sigma)
    \leq-\Tr\left[\mathcal L_B^\sharp(\sigma)
    \bigl(\ln\sigma-\ln\mathbb E_B^\sharp(\sigma)\bigr)\right]
\]
for all full-rank states $ \sigma \in \mathcal D(\mathcal H_\Lambda)$. The approximate tensorization theorem~\cite[Theorem~5.1]{stengele_ModifiedlogarithmicSobolev_2025} guarantees  for all $d>4$ and $\mathrm{DS}^\sharp\leq1/28$,
\begin{equation}\label{eq:DS-tensorization}
    D_{UVW}^\sharp(\sigma)
    \leq(1+28\mathrm{DS}^\sharp(U,V,W))
    \bigl(D_{UV}^\sharp(\sigma)+D_{VW}^\sharp(\sigma)\bigr)\ .
\end{equation}
The proof of Theorem~\ref{thm:DS-to-MLSI} then starts by
fixing $\beta$ and choosing common DS constants in~\eqref{eq:DS-condition} for the two parts; taking the larger prefactor $K$ and threshold $L_0$ and the smaller decay rate $\xi$ if necessary. By the DS bound~\eqref{eq:DS-condition}, $\mathrm{DS}^\sharp(U,V,W)\leq Ke^{-\xi d}$. For every admissible triple of regions $(U,V,W)$, the scale assumptions give
\[
    d\geq(L_{\max}^{UVW})^\delta>L_0^\delta.
\]
We may therefore increase $L_0$, independently of $L$, so that $d>4$ and $Ke^{-\xi d}\leq1/28$ for all such triples. Substituting the DS bound into~\eqref{eq:DS-tensorization} yields
\[
    D_{UVW}^\sharp(\sigma)
    \leq\bigl(1+28Ke^{-\xi d}\bigr)
    \bigl(D_{UV}^\sharp(\sigma)+D_{VW}^\sharp(\sigma)\bigr) \ .
\]
This is precisely the approximate tensorization estimate required by Proposition~\ref{lem:mlsirecursion}, with prefactor $28K$, for each part separately.
After increasing $L_0$ further to the fixed scale supplied by that proposition, the local estimate in~\cite[Section~6.1]{stengele_ModifiedlogarithmicSobolev_2025} and the rate assumption~\eqref{eq:bath-rates} give constants $a_\beta^\sharp>0$ such that
\[
    \alpha^\sharp(B)\geq a_\beta^\sharp
    \qquad\text{for every fat box }B\subseteq\Lambda
    \text{ with }L_{\max}^B\leq L_0  \ .
\]
These constants are independent of the ambient system size: the local estimate also controls arbitrary exterior degrees of freedom. Proposition~\ref{lem:mlsirecursion} propagates this fixed-scale bound to all larger fat boxes, with a factor $C_\sharp>0$ independent of $L$. Decreasing $C_\sharp$ to at most one includes the smaller boxes as well. Since $\Lambda$ is itself a fat box, we obtain
\[
    \alpha^\sharp(\Lambda)\geq C_\sharp a_\beta^\sharp>0
    \qquad\text{for every } \Lambda\ .
\]

The generators for the two parts commute and their sum is the full Davies generator. By~\cite[Corollary~1.6]{stengele_ModifiedlogarithmicSobolev_2025}, their MLSI bounds combine to give~\eqref{eq:MLSI} with
\[
    \alpha:=\min_{\sharp\in\{X,Z\}}
    C_\sharp a_\beta^\sharp>0 \ ,
\]
which is independent of $L$. Finally, inserting this constant into~\eqref{eq:mixing-time} gives
\[
    t_{\mathrm{mix}}(\varepsilon)
    \leq\frac{1}{2\alpha}
    \left(\ln(2c_\beta)+\ln n+2\ln\varepsilon^{-1}\right)
    =O_\beta(\log n+\log\varepsilon^{-1})\ ,
\]
as claimed. This concludes the proof of Theorem~\ref{thm:DS-to-MLSI}.

\section{Geometric analysis of Haah's code }

In view of Theorem~\ref{thm:DS-to-MLSI}, for a proof of our main result, Theorem~\ref{thm:rm}, it remains to establish the DS-condition~\eqref{eq:DS-condition} for the $X$- and $Z$-parts.
As the $X$- and $Z$-checks in~\eqref{eq:haah-checks} are related by symmetry, it suffices to carry out the analysis for $X$-checks. Throughout the subsequent geometric analysis, we therefore omit the superscript $X$ and write $\checkset$, etc.

As a preparation, we first explain the reduction of the DS-bound~\eqref{eq:DS-condition} to the geometry of check configurations and then provide the necessary bounds to establish the DS-condition in the next section.  Our strategy to establish the DS-condition is somewhat similar to the proof of strong spatial mixing in classical plaquette models~\cite{Toninelli2017}. 

\subsection{Expansion in check configurations}
 
For $S\subseteq\checkset_B$, write $P_S:=\prod_{s\in S}P_s$ and let $|S|$ count the selected checks. Central objects in our analysis are 
\begin{equation}\label{eq:boundary-configurations}
    \begin{aligned}
        \Gamma_B&:=\left\{S\subseteq\checkset_B:
        \mathrm{supp}(P_S)\cap(B\times\{1,2\})=\emptyset\right\}\ ,\\
        \Sigma_B&:=\sum_{S\in\Gamma_B}\tau^{|S|}P_S\ ,
        \qquad \tau:=\tanh\beta\ .
    \end{aligned}
\end{equation}
An element of $\Gamma_B$ is thus a subset of checks $S$, whose product does not have any support in $B$. Generically, $P_S$ will have some support outside of $B$. 
For examples of elements of $\Gamma_B$ of the cubic code, see Fig.~\ref{fig:gammaillustration}.
The support condition is linear over $\mathbb F_2$, so $\Gamma_B$ is a binary vector space. Commutativity and $(P_s)^2=I$ yield the expansion
\begin{align}\label{eq:high-temperature-expansion}
    \Tr_B e^{-\beta H_B}
    &=\cosh(\beta)^{|\checkset_B|}
    \sum_{S\subseteq\checkset_B}\tau^{|S|}\Tr_B P_S
    =2^{2|B|}\cosh(\beta)^{|\checkset_B|}\Sigma_B\ .
\end{align}
Indeed, a Pauli string has nonzero partial trace over $B$ precisely when it acts trivially there, in which case the trace over the $2|B|$ qubits contributes $2^{2|B|}$. Although usually called a high-temperature expansion, this is an identity valid for every finite $\beta$; the estimates below only require $\tau<1$.

If no check touches both $U$ and $W$, then
\[
    |UV|+|VW|=|V|+|UVW|\ ,
    \qquad
    |\checkset_{UV}|+|\checkset_{VW}|
    =|\checkset_V|+|\checkset_{UVW}|\ .
\]
The scalar factors in~\eqref{eq:high-temperature-expansion} therefore cancel, yielding
\begin{equation}\label{eq:fraction}
    \mathrm{DS}(U,V,W)
    =\left\|\frac{\Sigma_{UV}\Sigma_{VW}
    -\Sigma_{UVW}\Sigma_V}
    {\Sigma_{UVW}\Sigma_V}\right\|_\infty \ .
\end{equation}
This expression reduces the verification of the DS-condition to a small-numerator estimate and a lower bound on the denominator. The configuration spaces $\Gamma_B$ encode the geometry of Haah's checks, cf.~Fig.~\ref{fig:gammaillustration}: their dimensions control the number of terms, while the weights $|S|$ control their suppression by $\tau$.\\

Our next aim is to bound the dimensions of the spaces $\Gamma_B$ introduced in~\eqref{eq:boundary-configurations} and the weights of configurations extending across a box, which will provide the geometric input for the DS estimate.

\subsection{Linear algebra of $ X $-checks}

We choose coordinates so that the $XX$-corner of each check has the smallest $x$-, $y$-, and $z$-coordinates. The check anchored at $(i,j,k)$ is denoted by $s(i,j,k)\in\checkset$, and $(s_x,s_y,s_z)$ denotes the coordinates of a check $s$.

We identify subsets of checks and qubits with their binary indicator vectors, so addition corresponds to symmetric difference. In particular, the configurations $S$ used above are elements of $\mathbb F_2^{\checkset}$. 
The check map is defined by
\begin{align}
\begin{matrix}
    \partial &:& \mathbb{F}_2^{\checkset} &\to&  \mathbb{F}_2^{\Lambda\times\{1,2\}} \\[0.2em] 
   & & S &\mapsto& \mathrm{supp}\Big(\prod_{s\in S}P_s\Big)
\end{matrix}
\end{align}
where the support is identified with its binary indicator vector. Since all checks are $X$-type, multiplication of their Pauli operators adds the corresponding support vectors modulo two. Thus $\partial$ is linear.
For any subset $V\subset \Lambda$ 
we define
the restricted check map $\partial_V:\mathbb F_2^{\checkset_V}\to \mathbb{F}_2^{V\times \{1,2\}}$
 by $\partial_V S:=(\partial S)|_{V\times\{1,2\}}$, Eq.~\eqref{eq:boundary-configurations} becomes
\begin{align}
\Gamma_V=\ker\partial_V \ . 
\end{align}

For a box $B$ with side lengths $L_x,L_y,L_z>0$, we choose the origin so that $B$ consists of the lattice vertices in $[1,L_x]\times[1,L_y]\times[1,L_z]$.
In other words, we pick the minimal coordinate of a box to be $(1,1,1)$. Checks whose support intersects the box may be anchored outside of it, with one or more coordinates equal to $0$. The corresponding check set for a box that does not stretch the full torus is 
\begin{equation}
    \checkset_B = \big\{s(i,j,k) \in \checkset \mid (i,j,k)\in [0,L_x]\times[0,L_y]\times[0,L_z]\big\}\setminus \big\{s(0,0,0)\big\} \ ;
\end{equation} 
see Fig.~\ref{fig:box}. 
If a box stretches the full torus $\Lambda$ in one or more directions, that is, if $L_a=L_{\Lambda}$ for one or multiple directions $a$, we equip it with periodic boundary conditions in these directions and pick an arbitrary origin for those coordinates. 
The check set in this case is
\begin{equation}
    \checkset_B = \big\{s(i,j,k) \in \checkset \mid (i,j,k)\in [0,L_x']\times[0,L_y']\times[0,L_z']\big\} \ ,
\end{equation} 
where, for each direction $a\in\{x,y,z\}$, we set $L_a'=L_a-1$
for periodic boundary conditions and $L_a'=L_a$ for open boundary conditions.
All our results hold for boxes with and without such periodic boundaries.

For any box $B$, any direction $a\in \{x,y,z\}$ and any $\ell\leq L_a^B$, we define the set of all checks with $a$-coordinate less than or equal to $\ell$ by
\begin{equation}
    \checkset_B^{a\leq \ell} = \{s\in \checkset_B\mid 0\leq s_a\leq \ell \}\ .
\end{equation}
Note that if $B$ has periodic boundary conditions along the $a$-direction, $\checkset_B^{a\leq \ell}=\checkset_B$ for $\ell\geq L_a-1$.
This allows us to partition $\Gamma_B$ into 
\begin{equation}
    \Gamma_B = \Gamma_B^{a\leq \ell}\sqcup \Gamma_B^{a>\ell} \qquad \mathrm{with}\qquad \Gamma_B^{a\leq \ell}:= \Gamma_B \cap \mathbb{F}_2^{\checkset_B^{a\leq \ell}} \ ,
\end{equation}
where we take the liberty of switching between identifying $ \Gamma_B$ as a subspace and as a subset, which allows us to define the union and intersection on these objects. Note that while $\Gamma_B^{a\leq \ell}$ is a subspace of $\Gamma_B$, its complement $\Gamma_B^{a>\ell}$ is not.
This partition is central in our analysis. It allows us to separate in the combinatorial sums $\sumtauP{B}$, which represent through~\eqref{eq:high-temperature-expansion} the marginals of  the Gibbs state, a main part and 
\begin{equation}
    \sumtauPup{B}{\dir> \ell} := \sum_{S\in \Gamma^{\dir>\ell}_B} \tau^{|S|} P_S \ .
\end{equation}
This part of $ \sumtauP{B} $ represents checks reaching a distance $ \ell $ away from the origin of the box $ B$. It will be the main contribution to the numerator in~\eqref{eq:fraction} (cf.~Fig.~\ref{fig:sum-split}), and will eventually be bounded in Lemma~\ref{lem:smallsumbound}.

\begin{figure}[h]
    \centering
    \setlength{\pgridcell}{11mm}
    \begin{pconfig}[grid=5/4, axes=x/y, axes width=1pt]
        \draw[black,stealth-stealth, line width=1pt] (1,0.5) -- (5,0.5) node [midway, above] {\small $L_x$};
        \draw[black,stealth-stealth, line width=1pt] (5.8,0) -- (5.8,-3) node [midway, right] {\small $L_y$};
        \pchecks{5}{4}
        \prectB[blue!70!black]{4}{3}[\checkset_B]
        \prectA[red!70!black]{4}{3}[B]
        \pshade[teal!30]{(0,0) rectangle (3,-4)}
    \end{pconfig}~~~~~
    \begin{pconfig}[grid=5/4]
        \pchecks{5}{4}
        \prectC[blue!70!black]{4}{3}
        \prectA[red!70!black]{4}{3}
    \end{pconfig}
    \caption{Left: an $xy$-slice of a box $B$ at some non-zero $z$-coordinate . The red inner rectangle shows all vertices in $B$; the blue outer rectangle shows all checks connected to $B$. The teal shading marks all the checks in $\checkset_B^{x\leq 2}$. Right: the $xy$-slice of $B$ at $z=0$. Note that the lower left corner is missing from $\checkset_B$, since this check acts as the identity on its upper right corner.}
    \label{fig:box}
\end{figure}

For $v\in B$ and $q\in\{1,2\}$, let $\omega_{v}^{(q)}\in\mathbb{F}_2^{B\times\{1,2\}}$ denote the standard basis vector corresponding to the qubit $(v,q)$. We define constraints 
\begin{align}
     f_v&:=\partial_B^{\mathsf T}\omega_{v}^{(1)}\ ,
    \qquad
    g_v:=\partial_B^{\mathsf T}\omega_{v}^{(2)}\ .
\end{align}
These vectors record which checks act nontrivially on
the first and second qubit at $v$, respectively.
A configuration $S\in\mathbb F_2^{\checkset_B}$ belongs to
$\Gamma_B$ if and only if for every $v\in B$, it holds that
\begin{align}
    \langle f_v,S\rangle=\langle g_v,S\rangle=0 \ ,
\end{align}
where $\langle\cdot,\cdot\rangle$ denotes the standard binary scalar product. 
Let $e_x,e_y,e_z$ denote the coordinate unit vectors of the lattice.
Let $\delta_v\in \mathbb{F}_2^{\mathcal S_B}$ is the standard basis vector
indexed by check $s(v)\in\mathcal S_B$ at position $v\in B$.
We express these constraints as
\begin{align}
 f_v&=\delta_v+\delta_{v-e_x-e_y}
       +\delta_{v-e_x-e_z}+\delta_{v-e_y-e_z},\\
 g_v&=\delta_v+\delta_{v-e_x}
       +\delta_{v-e_y}+\delta_{v-e_z}.
\end{align}
We use the following graphical description:
\begin{equation}
    f_v = \pgridv[axes=x/y]{1/1/v}{{o,.},{.,o}} = \pgridv[axes=z/x]{1/1/v}{{o,.},{.,o}}= \pgridv[axes=y/z]{1/1/v}{{o,.},{.,o}}
    \qquad\qquad
    g_v = \pgridv[axes=x/y]{1/1/v}{{.,.o},{-,.}}= \pgridv[axes=z/x]{1/1/v}{{.,.o},{-,.}}= \pgridv[axes=y/z]{1/1/v}{{.,.o},{-,.}} \qquad
\end{equation}
Here $v$ is the position of the qubits in the grid, and the dots and circles correspond to the support of the constraint on the checks in the layers above and below the qubit. 
{\setlength{\pgridcell}{4mm} Specifically, \pgrid{{.}} denotes support on the layer closer to the reader, \pgrid{{o}} denotes support on the layer farther from the reader, and \pgrid{{.o}} denotes support on both layers.}

We derive two families of constraints that will be used in the proofs below. 
Let $(a,b,c)$ be a cyclic permutation of the coordinate directions $(x,y,z)$. We set
\begin{align}
    \Delta_v^{c}&:=f_v+g_{v-e_a}+g_{v-e_b}\ ,  &&\text{ provided } v,v-e_a,v-e_b\in B\ , \notag \\
    \eta_v^a&:=f_v+g_{v-e_a}\ ,  &&\text{ provided } 
    v,v-e_a\in B \ .\label{eq:DeltaEta}
\end{align}

Consequently, $\langle\Delta_v^{c},S\rangle=0$ and
$\langle\eta_v^a,S\rangle=0$ for every $S\in\Gamma_B$
whenever the corresponding conditions on $v$ hold.
In the graphical notation:
\begingroup
\[
\begin{gathered}
  \begin{tikzpicture}[x=1cm,y=1cm,baseline=-2pt]
    \draw[-stealth,line width=.5pt] (0,0)--(.48,0)
      node[right] {\small$a$};
    \draw[-stealth,line width=.5pt] (0,0)--(0,.48)
      node[above] {\small$b$};
  \end{tikzpicture}
\end{gathered}
\quad
\begin{gathered}
 \underset{\textstyle f_v}{
    \pgridv{2/1/v}{{-,o,.},{-,.,o},{-,-,-}}}
  \; + \;
  \underset{\textstyle  g_{v-e_a}}{
    \pgridv{1/1/}{{.,.o,-},{-,.,-},{-,-,-}}}
  \; + \;
  \underset{\textstyle g_{v-e_b}}{
    \pgridv{2/2/}{{-,-,-},{-,.,.o},{-,-,.}}}
  \; = \;
  \underset{\textstyle \Delta_v^{c}}{
    \pgridv{2/1/v,1/1/,2/2/}
      {{.,.,.},{-,.,.},{-,-,.}}}
  \\[12pt]
  \underset{\textstyle f_v}{
    \pgridv{2/1/v}{{-,o,.},{-,.,o}}}
  \; + \;
  \underset{\textstyle g_{v-e_a}}{
    \pgridv{1/1/}{{.,.o,-},{-,.,-}}}
  \; = \;
  \underset{\textstyle \eta_v^a}{
    \pgridv{2/1/v,1/1/}
      {{.,.,.},{-,-,o}}}
\end{gathered}
\]
\endgroup
The pentagons mark the vertices at which the local constraints are applied. All these vertices must lie in $B$. Dots and circles indicate check coefficients, which add modulo two. We note that the constraint $\Delta_v^c$ cannot be applied on the $ab$-slice with $c$ coordinate equal to $0$ when $c$-direction is non-periodic    , see~\eqref{eq:DeltaEta}. 

Later we will also draw $\eta^a_v$ in the $bc$ plane. To draw the three stacked checks in $a$-direction we introduce a new symbol:
\begin{equation}
    \eta^a_v =\pgridv[axes=b/c]{1/1/v}{{-,.oo},{.,-}}
\end{equation}

\subsection{Bounds on dimensions}

\begin{lemma}\label{lem:dimofkernel}
Let $B$ be a box. Then $\dim\Gamma_B \leq 2(L_x+L_y+L_z)$.
\end{lemma}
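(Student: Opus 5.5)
We prove the bound by a propagation argument: an element $S\in\Gamma_B$ is uniquely determined by its restriction to a small ``boundary'' set of checks, namely those lying on a few coordinate faces of the check set $\checkset_B$. This gives an injective linear map from $\Gamma_B$ into $\mathbb F_2^{F}$, where $F$ is a set of checks with $|F|\le 2(L_x+L_y+L_z)$, and hence the dimension bound.

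\emph{Propagation rule.} The basic observation is that $g_v=\delta_v+\delta_{v-e_x}+\delta_{v-e_y}+\delta_{v-e_z}$ relates the check at $v$ to its three lower neighbours. For every $v\in B$ the condition $\langle g_v,S\rangle=0$ therefore expresses $S_{s(v)}$ through $S$ at checks with strictly smaller coordinate sum. Using only the $g$-constraints, $S$ is determined by its values on the checks with at least one coordinate equal to $0$. These are the three coordinate faces of $[0,L_x]\times[0,L_y]\times[0,L_z]$, of total size $\Theta(L^2)$, which is too many. The $f$-constraints and the derived constraints $\Delta^c_v,\eta^a_v$ of~\eqref{eq:DeltaEta} must then cut these faces down to lines.

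\emph{Reduction on a face.} Take, say, the face $z=0$ and restrict to the layers $z=0,1$. For $v$ with $v_z=1$, the constraint $\Delta^z_v=f_v+g_{v-e_x}+g_{v-e_y}$ is supported on a small triangle-shaped pattern; as drawn, it lies on one layer. I expect that $\Delta^z_v$ involves only checks in the $z=0$ slice, after eliminating the $z=1$ checks through $g$. In that case it is a purely two-dimensional recursion on the face: of the form ``the check at $(i,j,0)$ equals a sum of the checks at $(i-1,j,0)$, $(i,j-1,0)$ and others with smaller $i+j$'', in the style of the Sierpinski/triangular plaquette recursion. Such a recursion determines the face from its two boundary lines $i=0$ and $j=0$, contributing at most $L_x+L_y$ free values.

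Doing the same on each of the three faces, and using the $\eta^a_v$ constraints to handle the edges where faces meet, we get a free set $F$ consisting of the coordinate axes lines, each counted a bounded number of times. The count is $\le 2(L_x+L_y+L_z)$. The factor $2$ should arise because each axis line borders two faces, or because the two qubits contribute separately; I would fix the exact bookkeeping once the recursion is written out.

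\emph{Periodic directions.} In a periodic direction the faces $s_a=0$ are not special, and I would pick an arbitrary slice as the boundary instead. The $g$-recursion then runs around the cycle. This does not give more freedom: we only need injectivity, and the values on the chosen slice still determine everything, since the recursion is applied for $s_a=1,\dots,L_a-1$. The missing check $s(0,0,0)$ in the open case only removes a free variable.

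\emph{Main obstacle.} The hardest step is verifying that the combined constraints genuinely reduce each face to a recursion determined by its boundary lines. This must hold even at $c=0$ in non-periodic directions, where $\Delta^c_v$ is unavailable. There I would first try the $\eta^a_v$ constraints along the face, or alternatively shift the choice of face to $c=1$ at the cost of an extra line. Checking this carefully against the pictorial stencils is the core of the proof.
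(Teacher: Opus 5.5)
The gap is at the step you flag as the core, and the mechanism you sketch there does not work as stated.

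First, $\Delta^z_v$ with $v_z=1$ never touches the face $z=0$. The $z=0$ terms $\delta_{v-e_x-e_z}$ and $\delta_{v-e_y-e_z}$ of $f_v$ cancel exactly against those of $g_{v-e_x}$ and $g_{v-e_y}$. Hence $\Delta^z_v=\delta_v+\delta_{v-e_x}+\delta_{v-e_y}+\delta_{v-2e_x}+\delta_{v-e_x-e_y}+\delta_{v-2e_y}$ lives entirely on the slice $z=1$, and no elimination puts it on the face. A relation on the face does exist, but it is a different combination. For $v=(i,j,1)$ with $i,j\ge 2$, the sum $f_v+f_{v-e_x}+f_{v-e_y}+g_v+g_{v-e_x-e_y}$ collapses to the same six-term triangle pattern, now placed at $(i,j,0)$. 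You would have to find and verify this.

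Second, this triangle recursion reaches back two steps, to $(i-2,j)$ and $(i,j-2)$. It is therefore seeded by \emph{two} lines along each edge of a face, i.e.\ $2(L_x+L_y)$ values, not $L_x+L_y$. Over the three faces the seed set becomes all checks with one coordinate equal to $0$ and another $\le 1$. That is three parallel lines along each axis, roughly $3(L_x+L_y+L_z)$ checks. Lines shared between faces lower this count; they do not produce the factor $2$ you are hoping for.

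To get down to $2(L_x+L_y+L_z)$ you need a further family of relations along the axes. For example, $f_{(i,1,1)}$ ties the coefficient of $s(i,0,0)$ to those of $s(i-1,1,0)$, $s(i-1,0,1)$ and the interior check $s(i,1,1)$, which is itself a $g$-propagated sum of seeds. You would also need the analogous families along $y$ and $z$, and a check that all of these are independent on the seed set. None of this is in the proposal, so as written it does not establish the bound.

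The paper avoids the three-face bookkeeping by propagating in the opposite direction, away from the far end rather than out of the coordinate-zero faces. It seeds only the terminal slice $x=L_x$. There all required vertices lie in $B$, so $\Delta^x$ applies directly and leaves $2(L_y+L_z)$ free entries. It then passes from $S_i$ to $S_{i-1}$:
\begin{itemize}
\item $g$ determines the interior entries;
\item $f$ determines the strips $j=0$ and $k=0$;
\item the corner parity relation from $f_{(i,1,1)}$ links $S_{i-1}(0,1)$ and $S_{i-1}(1,0)$.
\end{itemize}
This leaves at most two new free entries per layer, hence $2L_x$ in total.
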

\begin{proof}
We first prove the claim for the case where $ B $ has open boundary conditions, i.e. $ B $ does not wrap around the torus. By translating coordinates, we may assume that $B=[1,L_x]\times[1,L_y]\times[1,L_z]$.
For $S\in\Gamma_B$, let $S_i(j,k)$ denote the coefficient of the check $s(i,j,k)$, where $0\leq i\leq L_x$, $0\leq j\leq L_y$ and $0\leq k\leq L_z$. In this case, $S_0(0,0)=0$, since $s(0,0,0)\notin\checkset_B$.

We proceed by induction on the $x$-coordinate, starting with the slice $S_{L_x}$.
For $2\leq j\leq L_y$ and $2\leq k\leq L_z$, the three
vertices $(L_x,j,k)$, $(L_x,j-1,k)$ and $(L_x,j,k-1)$
lie in $B$. Hence the triangular constraint
$\langle\Delta_{(L_x,j,k)}^{x},S\rangle=0$ gives
\begin{align}
    S_{L_x}(j,k) 
    &= S_{L_x}(j-2,k) +S_{L_x}(j-1,k-1)+ S_{L_x}(j-1,k)+S_{L_x}(j,k-2)+S_{L_x}(j,k-1) \ . \label{eq:SLx}
\end{align}
Every entry on the right-hand side is of the form $S_{L_x}(j',k') $, with $j'+k'<j+k$. Thus, by induction on $j+k$, the entries with $j\in\{0,1\}$ or $k\in\{0,1\}$ determine the entire slice; see Fig.~\ref{fig:SLx} for illustration.
There are
\begin{align}
    2(L_y+1)+2(L_z+1)-4=2(L_y+L_z)
\end{align}
such entries. The space of possible terminal slices therefore has dimension at most $2(L_y+L_z)$.

\begin{figure}[ht]
    \centering
    \begin{pconfig}[grid=9/7, axes=y/z, axes width=1pt]
        \prectB[blue!70!black]{8}{6}[\checkset_B]
        \prectA[red!70!black]{8}{6}[B]
        \pshade[teal!30]{(0,0) -- (0,-7) -- (9,-7) -- (9,-5) -- (2,-5) -- (2,0) -- cycle;}
        \pshade[violet!30]{(2,-5) rectangle (3,-4);}
        \draw[black,line width=1pt]
            (0,-4)--(3,-4)--(3,-7)--(2,-7)--(2,-6)
            --(1,-6)--(1,-5)--(0,-5)--cycle;
        \pcellsv[0,-4]{2/1/,1/1/,2/2/}{{.,.,.},{-,.,.},{-,-,.}}
    \end{pconfig}
    \caption{Terminal $yz$-slice $S_{L_x}$ of a check configuration $S\in\Gamma_B$. The $2(L_y+L_z)$ teal entries determine the entire slice by repeated application of the triangular constraint $\Delta^{x}$. The violet shading marks the check being determined in this step.}
    \label{fig:SLx}
\end{figure}

Next, assume the configuration of a slice $S_i$, with $1\leq i\leq L_x$, is known. We show that there are at most $4$ possible configurations of the slice $S_{i-1}$. For $1\leq j\leq L_y$ and $1\leq k\leq L_z$, the vertex
$(i,j,k)$ lies in $B$, so the constraints
$g_{(i,j,k)}$ and $f_{(i,j,k)}$, yield
\begin{align}
    S_{i-1}(j,k)
    &=S_i(j,k)+S_i(j-1,k)+S_i(j,k-1)\ ,\label{eq:diminduction1}\\
    S_{i-1}(j-1,k)+S_{i-1}(j,k-1)
    &=S_i(j,k)+S_i(j-1,k-1)\ .\label{eq:diminduction2}
\end{align}
These equations also apply to the step from $S_1$ to $S_0$.
The additional condition $S_0(0,0)=0$ can only reduce the
number of degrees of freedom. However, for simplicity, we choose not to include this contribution in the presented bound.

First, \eqref{eq:diminduction1} determines all entries
$S_{i-1}(j,k)$ with $1\leq j\leq L_y$ and $1\leq k\leq L_z$.
To determine the boundary entries $S_{i-1}(j,0)$ for $2\leq j\leq L_y$, we apply \eqref{eq:diminduction2} with $k=1$, obtaining
\begin{equation}
    S_{i-1}(j,0) = S_i(j,1) + S_{i}(j-1,0) + S_{i-1}(j-1,1) \ .
\end{equation}
All three terms on the right-hand side are determined by $S_i$: the first two are entries of this slice, while the third, $S_{i-1}(j-1,1)$, is fixed by \eqref{eq:diminduction1}. Similarly, $S_{i-1}(0,k)$ is determined by $S_i$ for every $2\leq k\leq L_z$. Three entries remain undetermined: $S_{i-1}(0,0)$, $S_{i-1}(0,1)$, and $S_{i-1}(1,0)$. Applying \eqref{eq:diminduction2} with $j=k=1$ yields 
\begin{align}
S_{i-1}(0,1) + S_{i-1}(1,0) = S_{i}(0,0) + S_i(1,1) \ .
\end{align}
This leaves at most two degrees of freedom: the value of $S_{i-1}(0,0)$ and the value of either $S_{i-1}(0,1)$ or $S_{i-1}(1,0)$.

\begin{figure}[!ht]
\centering
    \begin{pconfig}[grid=9/7,axes=y/z, axes width=1pt, boxes width=1pt]
        \prectB[blue!70!black]{8}{6}[\checkset_B]
        \prectA[red!70!black]{8}{6}[B]
        \pshade[violet!30]{(1,-6) rectangle (9,0);}
        \pboxesv[0,-5]{1/1/}{{.,.o},{-,.}}
    \end{pconfig}
    \qquad\qquad
   \begin{pconfig}[grid=9/7,  boxes width=1pt]
        \prectB[blue!70!black]{8}{6}
        \prectA[red!70!black]{8}{6}
        \pshade[violet!30]{(1,-6) rectangle (9,0);}
        \pshade[teal!30]{(0,-5) rectangle (1,0);}
        \pshade[teal!30]{(2,-7) rectangle (9,-6);}
        \pboxesv[0,-3]{1/1/}{{o,.},{.,o}}
        \pboxesv[3,-5]{1/1/}{{o,.},{.,o}}
        \pshade[yellow!30]{(0,-6) rectangle (1,-5);}
        \pshade[yellow!30]{(1,-7) rectangle (2,-6);}
        \pcells[0,-5]{{s_1,-},{s_3,s_2}}
    \end{pconfig}
    
    \caption{Constraining the layer $S_{i-1}$ in two steps for a fixed configuration on layer $S_i$. This leaves at most four compatible configurations. Left: $g_v$ determines the values in the shaded violet box. We place the checks such that the circle lies on $S_{i-1}$ and the dots on $S_i$. Right: $f_v$ determines the values in the two teal boundary strips. Of the three remaining entries, $s_1=S_{i-1}(0,1)$ and $s_2=S_{i-1}(1,0)$ have a fixed parity imposed by $f_{(i,1,1)}$. This leaves at most two degrees of freedom, and hence at most four compatible configurations. }
    \label{fig:Siminus1}
\end{figure}

By the rank--nullity theorem, each step therefore increases the dimension by at most two. Combining this with the bound on the terminal
slice and taking $L_x$ steps, we obtain
\begin{align}
    \dim\Gamma_B
    &\leq 2(L_y+L_z)+2L_x =2(L_x+L_y+L_z) \ .
\end{align}

Now let us consider the case when $B$ has periodic boundary conditions in one or multiple directions.
We choose a cut in each periodic direction $a$ and index the distinct check layers by
$0,\ldots,L_a-1$. Then we proceed as in the case of open boundary conditions, using these index sets. Note that this uses only constraints $f_v$, $g_v$, and $\Delta_v$ that do not cross a periodic cut.
Since the number of indices does not increase, the number of configurations of the  slice $S_{L_x}$ and the number of steps can not increase. Since the constraint $S_0(0,0)=0$ was not used in the open case, the same bound follows.

This completes the proof.
\end{proof}

\begin{lemma}\label{lem:coreweight}
Let $B=[1,L_x]\times[1,L_y]\times[1,L_z]$ be a box and let $S\in\Gamma_B$. Let $r\ge 1$ be an integer such that $L_{\min}\geq 3r-2$. Fix a cyclic permutation $(a,b,c)$ of the coordinates. Set $C_r=[r,L_b]\times [r, L_c]$. If $S_i|_{C_r}\ne 0$  for some $i\in\{r,\ldots,L_a\}$, then
\begin{align}
    |S|\geq \frac{r(r+1)}{2}
\end{align}
\end{lemma}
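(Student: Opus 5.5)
The plan is to localise around a single nonzero check and show that the kernel constraints force many further nonzero checks nearby. The argument is a discrete Sierpinski-type growth argument, in the spirit of the triangular-plaquette analysis in~\cite{Toninelli2017}.

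\emph{Step 1: choose the starting check.} Fix $S\in\Gamma_B$ and an index $i\in\{r,\dots,L_a\}$ such that $S_i|_{C_r}\neq0$. Among the nonzero entries of $S_i|_{C_r}$, pick one, $s_0=s(i,j_0,k_0)$, that is extremal for the order used in the proof of Lemma~\ref{lem:dimofkernel}: maximal $j+k$, ties broken by $j$. Extremality means the triangular constraints $\Delta^a$, the constraints $\eta^b,\eta^c$, and the $f,g$ constraints centred just above $s_0$ have $s_0$ as an extreme term. Consequently each such constraint must contain a \emph{second} nonzero check on the side of $s_0$ with smaller coordinates.

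\emph{Step 2: inductive growth.} For $t=1,\dots,r$ let $T_t$ be the set of checks in a staircase triangle of side $t$ whose corner is $s_0$ and which extends towards smaller $b$-, $c$- and (via $\eta^a$ and $g$) $a$-coordinates. I will prove by induction on $t$ that
\[
 \bigl|S\cap\bigl(T_t\setminus T_{t-1}\bigr)\bigr|\ \geq\ t .
\]
Summing over $t$ then gives $|S|\geq\sum_{t=1}^r t=r(r+1)/2$.

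The inductive step works as follows. Take the $t-1$ (or more) nonzero checks found in the shell $T_{t-1}\setminus T_{t-2}$, ordered along the shell. Apply to each of them a local constraint, $\Delta^a$, $\eta^b$ or $\eta^c$, placed so that it meets the current shell in exactly that check and otherwise lies in the next shell. A parity argument along the shell then yields at least $t$ nonzero checks in $T_t\setminus T_{t-1}$. The extra check comes from the two endpoints of the shell, where two different constraint types, $\Delta^a$ and $\eta$, can be applied, each forcing its own new check. This is the same mechanism by which a point defect in the triangular plaquette model generates a Pascal-triangle front.

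\emph{Step 3: the constraints lie in $B$.} I must check that every pentagon vertex used above lies in $B$, since otherwise the constraint is unavailable. The vertices used are within distance about $r$ of $s_0$ in all three directions. The hypotheses $j_0,k_0\geq r$ and $i\geq r$ keep them away from the lower boundary. The assumption $L_{\min}\geq3r-2$ is what I expect to guarantee room on the upper side as well. This is needed when extremality forces us to move towards larger coordinates in one direction while shrinking in another, and the $\eta^a$ stack of three layers accounts for the factor~$3$. In the periodic case the same argument applies after choosing cuts far from $s_0$, exactly as in Lemma~\ref{lem:dimofkernel}.

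\emph{Main obstacle.} I expect the hardest part to be Step~2: designing the shells $T_t$ and the assignment of constraints so that the new checks forced from different old checks are genuinely \emph{distinct}. Cancellations modulo two could otherwise merge them, which is precisely the phenomenon exhibited by Fig.~\ref{fig:gammaillustration}. My first attempt is to order each shell lexicographically and use an extremal-element argument at every step, rather than a global count. Concretely, the lexicographically largest unaccounted nonzero check in a shell always forces a new check strictly beyond all previously forced ones. If that fails, the fallback is to prove only that each shell contains at least one nonzero check per antidiagonal $j+k=\text{const}$ it meets, and then to count antidiagonals.
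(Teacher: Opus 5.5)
Your Step~2 carries the whole content of the lemma, and it is not proved. The shell bound $|S\cap(T_t\setminus T_{t-1})|\geq t$ rests on a ``parity argument along the shell'' that is never given. The mod-two cancellation you flag as the main obstacle is exactly where such an argument breaks, and the mechanism you cite points the other way. In the triangular plaquette model, Pascal's triangle modulo two satisfies every constraint. Yet its row $2^k$ contains only two nonzero entries, and its first $2^k$ rows contain only $3^k<2^{k-1}(2^k+1)$ of them once $k\geq2$. So a front started from one check and propagated by local parity relations can thin out to a fractal with about $r^{\log_2 3}$ entries. Neither your lexicographic-extremal device nor the antidiagonal fallback comes with an argument that excludes this.

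Step~1 is also weaker than claimed. Maximality of $j+k$ within $S_i|_{C_r}$ says nothing about the adjacent layers $S_{i\pm1}$ or about entries outside $C_r$, and these appear in the $f$, $g$ and $\eta$ constraints you want to use. The shells $T_t$ are not well defined in three dimensions. Finally, $L_{\min}\geq 3r-2$ cannot give room above $s_0$, since $s_0$ may have $j_0=L_b$.

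The missing idea, which is the paper's route, is to abandon fronts and split $S$ into its $a$-layers. Within one layer, iterating $\Delta^a$ from any nonzero entry of $C_r$ produces a single \emph{path} of nonzero entries toward the lower boundary. Each step lowers each coordinate by at most two and strictly lowers their sum. So the entries are distinct, there are at least $\lfloor r/2\rfloor+1$ of them, and no cancellation can occur.

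Now look at the $r$ layers $S_i,\dots,S_{i-r+1}$, which exist because $i\geq r$. If all of them are nonzero on $C_r$, summing the path bounds gives $r(r+1)/2$. Otherwise some $m$ has $S_m|_{C_r}\neq0$ and $S_{m-1}|_{C_r}=0$. Then for $j,k\geq r+1$, the constraint $f$ reduces to $S_m(j,k)=S_m(j-1,k-1)$ and $g$ reduces to $S_m(j,k)+S_m(j-1,k)+S_m(j,k-1)=0$. Hence the diagonal values satisfy $a_{d-1}+a_d+a_{d+1}=0$, and since they do not all vanish, they occupy two of every three diagonals. Counting in $[r,3r-2]^2\subset C_r$, which is where $L_{\min}\geq 3r-2$ actually enters, gives at least $r(2r-1)$ nonzero entries in that single layer.

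The quadratic bound therefore comes either from $r$ layers each carrying a linear path, or from a rigid periodic pattern in one layer. It never comes from a growing front.
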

\begin{proof}
By symmetry, let $(a,b,c)=(x,y,z)$.
We first prove a lower bound for a layer whose restriction to $C_r$
is nonzero, and a stronger bound when the preceding layer vanishes
on $C_r$. We then combine these estimates over $r$ consecutive layers.

For $r=1$, the assumption $S_i|_{C_r}\neq0$ implies $|S|\geq1$, proving the claim.
Henceforth, assume $r\geq2$. Consider a $yz$-layer $S_i$, where $1\leq i\leq L_x$.  Assume that $S_i|_{C_r}\neq0$.

First, we show that
\begin{align}
    |S_i| \geq \lfloor r/2\rfloor+1 \geq \frac{r+1}{2} \ . \label{eq:eq:check1first}
\end{align}

For $2\leq j\leq L_y$ and $2\leq k\leq L_z$, the three
vertices $(i,j,k)$, $(i,j-1,k)$ and $(i,j,k-1)$ are contained in $B$.
Hence the triangular relation
$\langle\Delta_{(i,j,k)}^x,S\rangle=0$ yields 
\begin{align}
    S_i(j,k) &=S_i(j-1,k)+S_i(j,k-1)+S_i(j-2,k)+S_i(j-1,k-1)+S_i(j,k-2)\ .\label{eq:check1triangle}
\end{align}

The following construction is illustrated in
Figure~\ref{fig:path}. Choose $(j_0,k_0)\in C_r$ with $S_i(j_0,k_0)=1$.
Whenever the current entry $S_i(j,k)$ has $j,k\geq2$, the triangular constraint~\eqref{eq:check1triangle} forces at least one entry on its right-hand side to be nonzero. Choose such an entry and repeat the argument.

\begin{figure}[!ht]
\centering
   \begin{pconfig}[grid=11/10, axes=y/z, axes width=1pt, boxes width=1pt]
        \prectB[blue!70!black]{10}{9}[\checkset_B]
        \prectA[red!70!black]{10}{9}[B]
        \ptri[black, line width=1pt, shift={(2,-1)},fill=violet!33, fill opacity=0.2]{3}
        \ptri[black, line width=1pt, shift={(2,-3)},fill=violet!66, fill opacity=0.2]{3}
        \ptri[black, line width=1pt, shift={(1,-4)},fill=violet, fill opacity=0.2]{3}
        \ptri[black, line width=1pt, shift={(4,-1)}]{3}[2/1/,1/1/,2/2/]
        
        \prectA[blue!70!black,dashed, shift={(3,0)},line width=2pt]{7}{6}[C_r]
        \pcells[6,-1]{s}
        \pcells[4,-1]{s_2}
        \pcells[4,-3]{s_3}
        \pcells[3,-4]{s_4}
        \pcells[1,-4]{s_5}
        
        \draw[black,stealth-stealth, line width=1pt] (12.1,-6) -- (12.1,-9) node [midway, right] {\small $r$};
        \draw[black,stealth-stealth, line width=1pt] (1,0.5) -- (4,0.5) node [midway, above] {\small $r$};
    \end{pconfig}
    
    \caption{An example of a path of constraints. Assuming $S_i$ has support on $s$, then it has to have some support in the triangle constraint. Suppose e.g.\ on $s_2$. Placing a triangle constraint on $s_2$ yields another spot where $S_i$ has support, for example on $s_3$. Iterating this, every new site has its distance to the left and bottom boundaries reduced by at most $2$. Thus there are at least $\lfloor r/2 \rfloor$ such steps until the boundary is reached, where the triangle constraint no longer holds.}
    \label{fig:path}
\end{figure}

At each step, both coordinates are nonincreasing and decrease
by at most two, while their sum strictly decreases.
Thus all chosen entries are distinct, and the process
terminates at an entry with one coordinate in $\{0,1\}$,
where the three vertices needed for the triangular constraint
do not all lie in~$B$. 

Let $m$ be the number of steps taken.
Since the starting entry lies in $C_r$, both coordinates
of the terminal entry are at least $r-2m$.
One of them is at most one, so
$m\geq\lfloor r/2\rfloor$. 
If $B$ has periodic boundary conditions on $y$ or $z$, the process could continue further until it completes a full loop around the torus. For simplicity, we terminate it after the same number $\lfloor r/2\rfloor$ of steps as for $B$ with open boundary condition.

Including the starting entry this gives at least
$\lfloor r/2\rfloor+1$ distinct nonzero entries in $S_i$,
and therefore proves \eqref{eq:eq:check1first}.

Second, assume additionally that $S_{i-1}|_{C_r}=0$.
We show that
\begin{align}
    |S_i|\geq r(2r-1)\geq \frac{r(r+1)}{2}.
    \label{eq:check1second}
\end{align}
Since $C_r$ is contained in the coordinate domain of the layer, this will work for both kinds of boundary conditions.
For $r+1\leq j\leq L_y$ and $r+1\leq k\leq L_z$,
the constraints $f_{(i,j,k)}$ and $g_{(i,j,k)}$, together with $S_{i-1}|_{C_r}=0$, yield
\begin{align}
    S_i(j,k)&=S_i(j-1,k-1) \ , \label{eq:corediagonal}\\
    S_i(j,k)+S_i(j-1,k)+S_i(j,k-1)&=0 \ . \label{eq:coretriangle}
\end{align}
These two constraints are illustrated in Fig.~\ref{fig:emptysuccessor}.
The first identity shows that the entries of $S_i$ are
constant along each diagonal $j-k=d$ in $C_r$.
Let $a_d$ denote the common value on the diagonal $j-k=d$,
where $r-L_z\leq d\leq L_y-r$.
Equation~\eqref{eq:coretriangle} gives
\begin{align}
    a_{d-1}+a_d+a_{d+1}=0
    \qquad \text{ whenever } r-L_z<d<L_y-r \ .
    \label{eq:corerecurrence}
\end{align}
Since the sum in Eq.~\eqref{eq:corerecurrence} is taken modulo two, each triple $(a_{d-1},a_d,a_{d+1})$ contains an even number of entries equal to one, hence either none or exactly two. If such a triple were zero, repeated application of Eq.~\eqref{eq:corerecurrence} in both directions would force all diagonal values to vanish, contradicting
$S_i|_{C_r}\neq 0$. Thus, exactly two of every three consecutive diagonals
have value one. Since consecutive entries in a row lie on consecutive
diagonals, every three consecutive entries in a row contain exactly two nonzero entries.

Since $L_y,L_z\geq3r-2$, the rectangle $R=[r,3r-2]\times[r,3r-2]$
is contained in $C_r$. Each row of $R$ contains $2r-1$ entries. The zero entries are separated by at least three positions, so each row contains at most $\lceil(2r-1)/3\rceil$ zero entries. Since $r\geq2$, we have $\lceil(2r-1)/3\rceil\leq r-1$, so each row contains at least $r$ nonzero entries. Summing over the $2r-1$ rows gives
\begin{align}
    |S_i|\geq r(2r-1)\geq \frac{r(r+1)}{2}  \ ,
\end{align}
proving~\eqref{eq:check1second}.

\begin{figure}[!ht]
\centering
   \begin{pconfig}[grid=9/7, axes=y/z, axes width=1pt, boxes width=1pt]
        \pboxesv[2,-2]{1/1/}{{-,.},{.,-}}
        \pdiag[shift={(1,0)}]{8}{6}
        \prectA[blue!70!black,dashed, shift={(0,0)},line width=2pt]{8}{6}[C_r]
    \end{pconfig}
    \qquad\qquad
    \begin{pconfig}[grid=9/7, boxes width=1pt]
        \pshade[blue!20]{(1,-1) rectangle (2,-2);}
        \pboxesv[4,-2]{1/1/}{{.,.},{-,.}}
        \pdiag[diag colors=cycle,shift={(1,0)}, diag cycle={teal,violet,yellow}, diag saturation=0.3]{8}{6}
        \prectA[blue!70!black,dashed, shift={(0,0)},line width=2pt]{8}{6}
    \end{pconfig}
    
    \caption{The core of the slice $S_m$ assuming $S_{m-1}|_{C_r}=0$. Left: The constraints $f_v$ between the layers $S_m$ and $S_{m-1}$ reduce to a parity constraint on the diagonal. This forces values along a diagonal to be identical. Right: The constraints $g_v$ reduce to small triangle constraints. Since $S_m$ has at least some support in $C_r$, these constraints force two out of every three diagonals to be in the support of $S_m$.}
    \label{fig:emptysuccessor}
\end{figure}

To conclude, consider the $r$ consecutive layers
$S_i,S_{i-1},\ldots,S_{i-r+1}$. If every one of them satisfies $S_m|_{C_r}\neq 0$ then Eq.~\eqref{eq:eq:check1first} yields
\begin{align}
|S| \geq \sum_{m=i-r+1}^{i}|S_m| \geq r\cdot\frac{r+1}{2} \ .    
\end{align}
Otherwise some consecutive pair satisfies $S_m|_{C_r}\neq0$ and $S_{m-1}|_{C_r}=0$. 
Since $r<L_x$, periodic boundary conditions in the $x$-direction do not affect this argument. The second claim~\eqref{eq:check1second} thus completes the proof.
\end{proof}

\begin{lemma}\label{lem:checksize}
    Let $B=[1,L_x]\times[1,L_y]\times[1,L_z]$ be a box, $\dir\in \{x,y,z\}$ and let $1\leq \ell\leq L_a-2$.
    Then, every element $S\in \Gamma_B^{\dir> \ell}$ has size
    \begin{align}
        |S|\geq \frac{1}{100}\min\{\ell^2, L_{\min}^2\} \ .
    \end{align}
\end{lemma}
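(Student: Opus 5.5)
The plan is to reduce the statement to Lemma~\ref{lem:coreweight}, applied with a well-chosen $r$. Take $S\in\Gamma_B^{a>\ell}$. By definition $S\in\Gamma_B$, but $S\notin\mathbb F_2^{\checkset_B^{a\le\ell}}$. Hence $S$ contains a check $s$ with $s_a\ge \ell+1$. By symmetry of the cyclic permutations we may take $a=x$, with $(b,c)=(y,z)$. Set
\[
  r:=\Bigl\lfloor \tfrac{1}{3}\min\{\ell, L_{\min}\}\Bigr\rfloor+1 ,
\]
or a similar choice. We want $3r-2\le L_{\min}$, and $r$ should be of order $\min\{\ell,L_{\min}\}/3$, so that $r(r+1)/2\ge \min\{\ell^2,L_{\min}^2\}/18\ge \min\{\ell^2,L_{\min}^2\}/100$. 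For small $\ell$, where $r=1$, the bound is trivial because $|S|\ge1$ and $\ell^2/100<1$ for $\ell<10$. The generous constant $1/100$ absorbs floors and these boundary cases.

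The key step is to produce a layer index $i\in\{r,\dots,L_x\}$ with $S_i|_{C_r}\neq 0$, where $C_r=[r,L_y]\times[r,L_z]$. The nonzero layer $S_{s_x}$ satisfies $s_x\ge \ell+1\ge r$, but its support might lie entirely in the boundary strip $j<r$ or $k<r$ rather than in $C_r$. To handle this, I would reuse the transport equations \eqref{eq:diminduction1}--\eqref{eq:diminduction2}. The relation $S_{i-1}(j,k)=S_i(j,k)+S_i(j-1,k)+S_i(j,k-1)$ moves support by at most one unit in $j,k$ per layer. I would also use the triangle constraint \eqref{eq:check1triangle} within a layer, which propagates support towards smaller $j,k$. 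Alternatively, one can run the argument from the opposite corner. The constraints $\Delta$ and $\eta$ admit mirrored versions: by the symmetry $(i,j,k)\mapsto(-i,-j,-k)$ combined with $X\leftrightarrow$ qubit swap, a nonzero entry near $j,k$ small forces nonzero entries with larger coordinates in neighbouring layers. Concretely, starting from $S_{s_x}\neq0$ with support near the low boundary, I would step layers downward in $x$. Over roughly $r$ layers the support must either enter $C_r$ while the layer index is still at least $r$, or the configuration must carry weight at least linear in $r$ on each of about $r$ layers. In the second case we directly get $|S|\gtrsim r^2$.

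A cleaner route, which I would try first, is to exploit the symmetry of the cubic code between the corner at the origin and the opposite corner of $B$. This applies Lemma~\ref{lem:coreweight} to a reflected box, where the roles of the near and far boundaries are exchanged. This option is not literally available: the check is not symmetric under reflection, since $XX$ sits at the minimal corner and $II$ at the maximal one. So I would instead rely on the fact that $\partial$ is translation covariant and use Lemma~\ref{lem:coreweight} with $C_r$ chosen inside the region where the support sits. I expect this step to be the main obstacle: showing that a configuration extending beyond $a$-coordinate $\ell$ must meet the core $C_r$ in some layer of index at least $r$, or else already be heavy.

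Finally, the case split is as follows. Either some layer with $i\ge r$ meets $C_r$, and Lemma~\ref{lem:coreweight} gives $|S|\ge r(r+1)/2$. Or all such support stays in the strips of width $r$ along the $y$- and $z$-boundaries. In that case the boundary equations \eqref{eq:diminduction1}--\eqref{eq:diminduction2} force a nonzero entry in each of the layers $s_x, s_x-1,\dots$ down to index about $\ell-r$. The triangle path argument of Lemma~\ref{lem:coreweight}, applied in the $(x,\cdot)$-plane via $\Delta^{y}$ or $\Delta^{z}$, then gives at least about $r/2$ entries per layer, hence $|S|\gtrsim r^2/2$ again. Collecting the constants gives the claimed $\frac1{100}\min\{\ell^2,L_{\min}^2\}$. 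Periodic directions are handled as in Lemmas~\ref{lem:dimofkernel} and~\ref{lem:coreweight} by cutting and truncating paths.
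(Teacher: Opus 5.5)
Your reduction to Lemma~\ref{lem:coreweight} is the right one, and you correctly locate the difficulty. But you never resolve it, and none of the routes you sketch closes the gap.

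Re-centering by translation covariance fails. It is true that $S\cap\checkset_{B'}\in\Gamma_{B'}$ for every sub-box $B'\subseteq B$. However, the core $C_r$ of any such $B'$ contains only checks whose $y$- and $z$-coordinates, measured in $B$, are at least $r$. So a nonzero check in the strip $s_y<r$ never lies in an admissible core. This cannot be avoided, because the triangle-path half of Lemma~\ref{lem:coreweight} needs room towards \emph{smaller} coordinates.

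The point reflection with qubit swap does not help either. It maps the $X$-check onto the $Z$-check pattern, not onto itself, so it gives no mirrored constraints for $\Gamma_B$.

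Your fallback ``heavy'' branch is also unsupported. The in-layer $\Delta^x$-paths run towards smaller $(j,k)$; started in the strip, they are only guaranteed about $s_y/2$ steps. A $\Delta^y$- or $\Delta^z$-path stays in a single plane containing the $x$-axis. It is guaranteed at most order-$r$ entries in total, not order $r$ in each $x$-layer. Paths started in different $x$-layers may share entries, and from a corner entry with $s_y,s_z<r$ all of them can be short. Nothing in the sketch produces the order-$r$ disjoint long paths that an $r^2$ bound would need.

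The missing idea is the one you hinted at ("a nonzero entry near $j,k$ small forces nonzero entries with larger coordinates"). It needs no symmetry: the four-term constraints can simply be read in the inward direction.

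Reading $g_{(i,j+1,k)}$ gives $S_i(j,k)=S_i(j+1,k)+S_{i-1}(j+1,k)+S_i(j+1,k-1)$. So a nonzero entry forces one with $y$-coordinate larger by one, while $x$ and $z$ drop by at most one. Reading $\eta^x_{(i,j+1,k+1)}$ gives $S_i(j,k)=S_i(j+1,k+1)+S_{i-1}(j+1,k+1)+S_{i-2}(j+1,k+1)$. This moves one step in both $+y$ and $+z$ at a cost of at most two in $x$.

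Start from the check with $c_x>\ell$ and, after swapping $y$ and $z$ if needed, $c_y<r$. If $c_z\geq L_z-r$, iterate $g$, so $z$ may drop by at most $r+1$. Otherwise iterate $\eta^x$, so $z$ rises by $r+1$ without leaving $B$. After $r+1$ steps you reach an entry of $C_r$ in a layer of index at least $c_x-2(r+1)>r$, and Lemma~\ref{lem:coreweight} applies. In particular, your second case never occurs.

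This transport also fixes the scale. You need $c_x-2(r+1)>r$ and $3r-2\leq L_{\min}$. Your choice $r=\lfloor n/3\rfloor+1$, with $n=\min\{\ell,L_{\min}\}$, violates both: for example, $n=L_{\min}=9$ gives $3r-2=10$. Taking $r=\lfloor n/3\rfloor-1$ works, and the constant $1/100$ absorbs the loss.
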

Note that we require $\ell\leq L_a-2$, since if $B$ has periodic boundary conditions along the $a$-direction, $\Gamma_B^{\dir> L_a-1} =\emptyset$.
\begin{proof}
By symmetry, we assume $\dir=x$.
We set
\begin{align}
n=\min\{\ell,L_{\min}\}\qquad \text{and}\qquad r=\lfloor n/3\rfloor-1\ .\label{eq:nrparams}
\end{align}
If $n<6$, the claim follows from $|S|\geq 1$. We thus assume $n\geq 6$, so that $r\geq1$. 
Since $S\in\Gamma_B^{x>\ell}$, there is a check $s(c_x,c_y,c_z)\in S$ with $c_x>\ell$.

We will use the constraints to find a check in $S$ whose coordinates are at least~$r$. Since $3r+2\leq n\leq L_{\min}$, Lemma~\ref{lem:coreweight} will then apply. If $c_y,c_z\geq r$, set $(c'_x,c'_y,c'_z)=(c_x,c_y,c_z)$. Otherwise, exchanging $y$ and $z$ if necessary, assume
$c_y<r$. We distinguish two cases.

First, assume that $c_z\geq L_z-r$. Whenever $(i,j+1,k)\in B$, the constraint for $g_{(i,j+1,k)}$ gives
\begin{align}
S_i(j,k) &=S_i(j+1,k)+S_{i-1}(j+1,k)+S_i(j+1,k-1)\ . \label{eq:inwardg}
\end{align}
If $S_i(j,k)=1$, at least one entry on the right-hand side of~\eqref{eq:inwardg} has value $1$.
We choose such an entry and repeat the argument. Each step increases the $y$-coordinate by one and decreases
each of the $x,z$-coordinates by at most one.
Starting from $(c_x,c_y,c_z)$, we can take $r+1$ steps. Indeed, the required vertices have $y$-coordinates between
$1$ and $c_y+r+1\leq2r\leq L_y$. Their $x,z$-coordinates never exceed their initial values
and remain bounded below by $c_x-r-1>0$ and $c_z-r-1\geq L_z-2r-1\geq r+1$, respectively.
Thus all required vertices lie in $B$. The resulting nonzero entry has coordinates
$c'_x,c'_y,c'_z$ satisfying
\begin{align}
    c'_x&\geq c_x-(r+1)>r \ ,\qquad
    c'_y=c_y+r+1\geq r+1 \ ,\qquad
    c'_z\geq c_z-(r+1)\geq r+1 \ .
\end{align}

Second, assume that $c_z<L_z-r$. Whenever both $v=(i,j+1,k+1)$ and $v-e_x$ lie in $B$,
the constraint $\langle\eta_v^x,S\rangle=0$ gives 
\begin{align}
    S_i(j,k)
    &=S_i(j+1,k+1)+S_{i-1}(j+1,k+1)
      +S_{i-2}(j+1,k+1).
    \label{eq:etafacingreader}
\end{align}
Again, if $S_i(j,k)=1$, at least one entry on the right-hand side
of~\eqref{eq:etafacingreader} is nonzero. Choose such an entry and repeat the argument.
Each step increases both $y,z$-coordinates by one and decreases the $x$-coordinate by at most two.
Starting from $(c_x,c_y,c_z)$, we can again take $r+1$ steps. Indeed, the required vertices have positive $y,z$-coordinates, bounded above by $c_y+r+1\leq2r\leq L_y$ and $c_z+r+1\leq L_z$, respectively.
Their $x$-coordinates never exceed $c_x\leq L_x$ and are bounded below by $c_x-2r-1>0$.
Thus both vertices required for each constraint lie in $B$.
The resulting nonzero entry has coordinates
$c'_x,c'_y,c'_z$ satisfying
\begin{align}
    c'_x&\geq c_x-2(r+1)>r \ ,\qquad
    c'_y=c_y+r+1\geq r+1 \ ,\qquad
    c'_z=c_z+r+1\geq r+1 \ .
\end{align}

The two constructions are illustrated in Figure~\ref{fig:checks2}.
Periodic boundary conditions in $y$- or $z$-direction can only decrease the number of steps needed to reach $C_r$ since checks at one boundary are identified with those at the opposite boundary.
In all cases, $c'_x\geq r$ and $S_{c'_x}|_{C_r}\neq 0$. 
Lemma~\ref{lem:coreweight} therefore yields, for both open and periodic boundary conditions, that
\begin{align}
    |S|
    \geq \frac{r(r+1)}{2}
    \geq \frac{(r+1)^2}{4}
    \geq \frac{n^2}{100}
    = \frac{1}{100}\min\{\ell^2,L_{\min}^2\},
\end{align}
where we used $r\geq1$ and $n<5(r+1)$. This is the claim.
\end{proof}
\begin{figure}[!h]
\centering
   \begin{pconfig}[grid=11/10, axes=y/z, axes width=1pt, boxes width=1pt]
        \prectB[blue!70!black]{10}{9}[\checkset_B]
        \prectA[red!70!black]{10}{9}[B]
        \prectA[blue!70!black,dashed, shift={(2,0)},line width=2pt]{8}{7}[C_r]
        \pshade[teal!40]{(0,0) rectangle (3,-3)};
        \pshade[teal!15]{(3,0) rectangle (4,-3)};
        \pshade[teal!15]{(1,-3) rectangle (4,-4)};
        \pshade[teal!15]{(2,-4) rectangle (4,-5)};
        \pshade[teal!15]{(3,-5) rectangle (4,-6)};
        
        \pshade[teal!40]{(8,-7) rectangle (11,-10)};
        \pshade[teal!15]{(8,-6) rectangle (11,-7)};
        \pshade[teal!15]{(7,-6) rectangle (8,-9)};
        \pshade[teal!15]{(6,-6) rectangle (7,-8)};
        \pshade[teal!15]{(5,-6) rectangle (6,-7)};

        \pboxesv[2,-1]{1/1/}{{.,.o},{-,.}}
        \pboxesv[7,-8]{1/1/}{{.,.o},{-,.}}

        \draw[black,stealth-stealth, line width=1pt] (12.1,-7) -- (12.1,-9) node [midway, right] {\small $r$};
        \draw[black,stealth-stealth, line width=1pt] (1,0.5) -- (3,0.5) node [midway, above] {\small $r$};
        \draw[black,stealth-stealth, line width=1pt] (-0.5,0) -- (-0.5,-3) node [midway, left] {\small $r+1$};
        \draw[black,stealth-stealth, line width=1pt] (8,-10.5) -- (11,-10.5) node [midway, below] {\small $r+1$};
    \end{pconfig}
    \quad
    \begin{pconfig}[grid=11/10, boxes width=1pt]
        \prectB[blue!70!black]{10}{9}
        \prectA[red!70!black]{10}{9}

        \prectA[blue!70!black,dashed, shift={(2,0)},line width=2pt]{8}{7}

        \pshade[violet!40]{(0,-3)--(0,-10)--(8,-10)--(8,-7)--(3,-7)--(3,-3)--cycle;};
        \pshade[violet!15]{(1,-2) rectangle (2,-3)};
        \pshade[violet!15]{(2,-1) rectangle (3,-3)};
        \pshade[violet!15]{(3,0) rectangle (4,-7)};
        \pshade[violet!15]{(3,-6) rectangle (11,-7)};
        \pshade[violet!15]{(8,-7) rectangle (9,-9)};
        \pshade[violet!15]{(9,-7) rectangle (10,-8)};

        \pboxesv[0,-2]{1/1/}{{-,.oo},{.,-}}
        \pboxesv[7,-6]{1/1/}{{-,.oo},{.,-}}

        \draw[black,stealth-stealth, line width=1pt] (11.5,-7) -- (11.5,-9) node [midway, right] {\small $r$};
        \draw[black,stealth-stealth, line width=1pt] (1,0.5) -- (3,0.5) node [midway, above] {\small $r$};
        \draw[black,stealth-stealth, line width=1pt] (-0.5,0) -- (-0.5,-3) node [midway, left] {\small $r+1$};
        \draw[black,stealth-stealth, line width=1pt] (8,-10.5) -- (11,-10.5) node [midway, below] {\small $r+1$};
    \end{pconfig}
    
    \caption{Top: Every non-zero entry in one of the two solid teal squares in the corner can be transported using the $g_v$ constraints to a non-zero entry in $C_r$ using at most $r+1$ steps. Bottom: Non-zero entries in the solid violet shape can be transported inside $C_r$ using the $\eta_v^x$ constraints and a maximum of $r+1$ steps. }
    \label{fig:checks2}
\end{figure}

\section{Proof of the DS-condition}

In this section, we use the geometric properties of the checks established in the previous section to prove the DS-condition~\eqref{eq:DS-condition} for the $ X $-part. (The $ Z $-part proceeds similarly.) The geometry of the regions $U,V,W$ for open and periodic boundary conditions is illustrated in Fig.~\ref{fig:uvw}. By~\eqref{eq:fraction}, this amounts to bounding the operator norm of
\begin{align}
    \frac{\Sigma_{UV}\Sigma_{VW}- \Sigma_{UVW}\Sigma_{V}}{\Sigma_{UVW}\Sigma_{V}} \ . 
\end{align}
Since $\Sigma_{UVW}\Sigma_V$ is positive definite, it suffices to bound the norm of the numerator from above and the smallest eigenvalue of the denominator from below.

\begin{figure}[h]
    \centering
    \uvw[height label={\Theta(L)}, u width=2.5, w width=2.5, axes=x/y]
    \qquad\qquad
    \uvwper[wrap=false, height=3, v width=1, u width=2.3, w width=2.3]
    \caption{Geometry of the DS-condition. Left: open boundary conditions. Right: periodic boundary conditions, specialising Fig.~\ref{fig:dsuvw} to our setting.}
    \label{fig:uvw}
\end{figure}

\subsection{Bounding the denominator }

The bound on the denominator relies on the following general polynomial estimate involving subspaces $J\subset\mathbb F_2^m$. We call a map
$\chi:J\to\{-1,1\}$ a \emph{character of $J$} if, for all $p,q\in J$, it satisfies $\chi(p+q)=\chi(p)\chi(q)$.
\begin{lemma}\label{lem:characterbound}
Let $J\subset\mathbb F_2^m$ be a $j$-dimensional subspace and let $\chi:J\to\{-1,1\}$ be a character of~$J$. Then, for every $0\le t<1$, it holds that
\begin{align}
    \sum_{p\in J}t^{|p|}\chi(p)\ge(1-t)^j \ .
\end{align}
Here, $|p|$ denotes the number of nonzero coordinates of $p$,
and we use the convention $t^0=1$, including when $t=0$.
\end{lemma}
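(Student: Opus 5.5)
We prove Lemma~\ref{lem:characterbound} by induction on the dimension $j$, peeling off one coordinate direction at a time via a suitable basis of $J$. The base case $j=0$ is trivial: $J=\{0\}$, so the sum equals $t^0\chi(0)=1=(1-t)^0$.

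For the inductive step, we choose a coordinate $i\in\{1,\dots,m\}$ on which some element of $J$ is nonzero. We set $J_0:=\{p\in J: p_i=0\}$, a subspace of dimension $j-1$, and pick $q\in J$ with $q_i=1$. Then
\[
J=J_0\sqcup(q+J_0).
\]
Write $F(J,\chi):=\sum_{p\in J}t^{|p|}\chi(p)$. The sum splits as
\[
F(J,\chi)=F(J_0,\chi|_{J_0})+\chi(q)\sum_{p\in J_0}t^{|p+q|}\chi(p).
\]
Since $p_i=0$ and $q_i=1$, the weight satisfies $|p+q|=1+|p'+q'|$, where primes denote deletion of coordinate $i$. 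Moreover, $|p|=|p'|$ on $J_0$.

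The difficulty is that the second term is not directly comparable to the first. A cleaner route is to avoid the inductive split altogether and work via Fourier analysis on the ambient space. We extend $\chi$ to a character of $\mathbb F_2^m$, namely $\chi(p)=(-1)^{\langle u,p\rangle}$ for some $u\in\mathbb F_2^m$; this is possible because characters of subspaces extend. By Poisson summation,
\[
\sum_{p\in J}t^{|p|}(-1)^{\langle u,p\rangle}
=\frac{|J|}{2^m}\sum_{y\in u+J^\perp}\prod_{k}\bigl(1+t(-1)^{y_k}\bigr)
=\frac{2^j}{2^m}\sum_{y\in u+J^\perp}(1+t)^{m-|y|}(1-t)^{|y|}.
\]
Every term on the right is nonnegative, so dropping all but one term gives a lower bound. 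It then suffices to find a single $y\in u+J^\perp$ with
\[
2^{j-m}(1+t)^{m-|y|}(1-t)^{|y|}\ge(1-t)^j.
\]

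This reduction makes the main obstacle clear: choosing a good coset representative $y$. Crude choices fail, since bounding $|y|\le m$ gives only the factor $2^{j-m}(1-t)^m$, which is too small. Instead we pick an information set, that is, $j$ coordinates $I$ on which $J$ projects bijectively onto $\mathbb F_2^I$. We then need $y\in u+J^\perp$ with $|y|\le j$. This holds because $J^\perp$ projects onto the complementary $m-j$ coordinates: the projection of $J^\perp$ onto $I^c$ is injective, hence bijective by dimension count. We may therefore cancel $y$ on $I^c$, so that $y$ is supported in $I$.

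For such a $y$, with $|y|=w\le j$, the single term equals
\[
2^{j-m}(1+t)^{m-w}(1-t)^{w}.
\]
The factor $2^{j-m}(1+t)^{m-j}$ is at most $1$, so this single term alone still falls short of $(1-t)^j$. We therefore do not drop the other terms. Instead we sum over all $y\in u+J^\perp$ whose restriction to $I$ is prescribed and whose restriction to $I^c$ is arbitrary. Since $u+J^\perp$ is an affine space of dimension $m-j$ projecting bijectively onto $I^c$, each restriction to $I^c$ occurs exactly once.

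We then use convexity. Summing over the $I^c$-parts, with the $I$-parts determined affinely, we would lower bound by fixing the $I$-part of $y$ via its worst case $|y_I|\le j$. The sum over $y_{I^c}\in\mathbb F_2^{I^c}$ of $\prod_{k\in I^c}(1\pm t)$ equals $2^{m-j}$. Combined with the per-coordinate factors on $I$, each at least $(1-t)$, the prefactor $2^{j-m}$ cancels, giving $\ge 2^{j}\cdot 2^{-j}\cdots$, and after bookkeeping we expect the bound $(1-t)^j$. The delicate point is that $y_I$ varies with $y_{I^c}$. We handle this by using the pointwise bound $(1+t(-1)^{y_k})\ge 1-t$ for $k\in I$ and the exact sum over $I^c$. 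This yields
\[
\frac{2^j}{2^m}\,2^{m-j}(1-t)^j=(1-t)^j,
\]
which is the claim. The main step to check carefully is that, for each fixed $y_{I^c}$, exactly one $y$ exists and its $I^c$-factors sum exactly to $2^{m-j}$; this follows from the bijectivity of the projection onto $I^c$ established above.
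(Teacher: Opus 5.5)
Your final argument is correct, but it takes a genuinely different route from the paper's.

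The paper stays on the primal side. It puts each coordinate into a random set $S$ independently with probability $1-t$, so that $t^{|p|}$ is the probability that $p$ vanishes on $S$. It then writes the sum as $\mathbb E[\Xi]$, where $\Xi$ is the character sum over the subspace $K\subseteq J$ of vectors vanishing on $S$. Such a sum is either $|K|$ or $0$, so $\Xi\ge 0$. Moreover $\Xi=1$ whenever an information set $I$ lies in $S$, which gives $\mathbb E[\Xi]\ge\Pr[I\subseteq S]=(1-t)^j$.

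You instead extend $\chi$ to $(-1)^{\langle u,\cdot\rangle}$ and pass to the dual side by Poisson summation. This gives the exact identity $\sum_{p\in J}t^{|p|}\chi(p)=2^{j-m}\sum_{y\in u+J^\perp}(1+t)^{m-|y|}(1-t)^{|y|}$, whose terms are manifestly nonnegative. The information set then enters through the dual fact that $u+J^\perp$ projects bijectively onto $\mathbb F_2^{I^c}$. Bounding the $I$-factors pointwise and summing the $I^c$-factors exactly gives
\[
\sum_{p\in J}t^{|p|}\chi(p)
=2^{j-m}\sum_{y\in u+J^\perp}\prod_{k\in I}\bigl(1+t(-1)^{y_k}\bigr)\prod_{k\notin I}\bigl(1+t(-1)^{y_k}\bigr)
\ge 2^{j-m}(1-t)^j\sum_{w\in\mathbb F_2^{I^c}}\prod_{k\notin I}\bigl(1+t(-1)^{w_k}\bigr)=(1-t)^j .
\]

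The paper's route is more elementary: it needs only that a character sum over a subgroup is its order or zero. Yours requires character extension and Poisson summation, but it produces an exact positive representation. With $t=\tanh\beta$ one has $(1-t)/(1+t)=e^{-2\beta}$. So in the application your dual sum is, up to constants, the Boltzmann-weight sum from which the high-temperature expansion was derived. It could also yield sharper bounds by keeping more terms.

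In a write-up, drop the abandoned induction and the misleading references to ``convexity'' and ``bookkeeping''. Also include the one-line proof of the injectivity you use: if $z\in J^\perp$ vanishes off $I$, then $\langle z_I,p_I\rangle=0$ for all $p_I\in\mathbb F_2^I$, so $z=0$.
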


\begin{proof}
We express the weighted sum as the expectation of a nonnegative random variable. Fix $0\le t<1$. We use Gaussian elimination to choose
a set $I\subset [m]$ of $j$ coordinate positions
such that the only vector in $J$ vanishing on $I$ is the
zero vector. We form a random set $S\subset [m]$ by
including each coordinate position independently with
probability $1-t$. Define the subspace of vectors vanishing on $S$ and its character sum by
\begin{align}
    K:=\{p\in J \mid p_j=0\text{ for all }j\in S\} \ 
    \qquad \text{ and }\qquad
    \Xi :=\sum_{p\in K}\chi(p) \ .
\end{align}

We argue that the random variable $\Xi$ is nonnegative.
If $\chi$ equals $1$ at every point of $K$, then $\Xi=|K|$.
Otherwise, we choose $q\in K$ such that $\chi(q)=-1$.
Since translation by $q$ permutes $K$, we have
\begin{align}
    \Xi=\sum_{p\in K}\chi(p+q)
     =\chi(q)\sum_{p\in K}\chi(p)
     =-\Xi \ ,
\end{align}
and thus $\Xi=0$. Therefore $\Xi\ge 0$ in either case.
Moreover, if $I\subset S$, injectivity of restriction to $I$ implies $K=\{0\}$, so $\Xi=\chi(0)=1$.

Finally, $\Pr[p\in K]=t^{|p|}$ for every $p\in J$, since
$p\in K$ exactly when $S$ contains none of its nonzero
coordinate positions. By linearity of expectation,
\begin{align}
    \sum_{p\in J}t^{|p|}\chi(p)
    &=\mathbb E[\Xi]
    \ge \Pr[I\subset S]
    =(1-t)^j \ .
\end{align}
The last inequality follows because $\Xi$ is nonnegative and
equals $1$ on the event $I\subset S$.
\end{proof}

We use Lemma~\ref{lem:characterbound} to control the denominator of~\eqref{eq:fraction}.
\begin{lemma}\label{lem:denominator}
Let $U,V,W$ be pairwise disjoint regions such that $UV$, $VW$ and $UVW$ are  boxes and $V$ is a box or a disjoint union of two boxes $V_1,V_2$. 
Then
\begin{align}
\lambda_{\min}\left(\sumtauP{UVW}\sumtauP{V} \right) 
    &\ge (1-\tau)^{\dim \Gamma_{UVW}+\dim\Gamma_V} \label{eq:denomfirstbound}\\
    &\geq (1-\tau)^{18 L_{\max}^{UVW}}\ .
\end{align}
\end{lemma}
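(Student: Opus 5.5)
The plan has three steps: simultaneously diagonalize $\Sigma_{UVW}$ and $\Sigma_V$, bound each eigenvalue from below with Lemma~\ref{lem:characterbound}, and then bound the dimensions with Lemma~\ref{lem:dimofkernel}.

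\textbf{Simultaneous diagonalization.} All $P_S$ are products of commuting $X$-type Pauli operators. Hence $\Sigma_{UVW}$ and $\Sigma_V$ are both diagonal in the common eigenbasis of the $X$-checks, namely the $X$-Pauli product (Hadamard) basis. In this basis the eigenvalue of $\Sigma_{UVW}\Sigma_V$ on a basis vector $\psi$ is the product of the eigenvalues of the two factors.

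\textbf{Each eigenvalue is a character sum.} For a fixed basis vector $\psi$, each $P_s$ acts by a sign $\epsilon_s(\psi)\in\{\pm1\}$. For $S\in\Gamma_B$ we then have $P_S\psi=\chi_\psi(S)\psi$ with
\[
\chi_\psi(S)=\prod_{s\in S}\epsilon_s(\psi).
\]
Since $P_{S+S'}=P_SP_{S'}$ (using $P_s^2=I$ and commutativity), $\chi_\psi$ is a character of the $\mathbb F_2$-subspace $\Gamma_B\subset\mathbb F_2^{\checkset_B}$. The eigenvalue of $\Sigma_B$ on $\psi$ is therefore $\sum_{S\in\Gamma_B}\tau^{|S|}\chi_\psi(S)$, where $|S|$ is exactly the Hamming weight of the indicator vector. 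Lemma~\ref{lem:characterbound} with $t=\tau=\tanh\beta\in[0,1)$ bounds this from below by $(1-\tau)^{\dim\Gamma_B}>0$. Multiplying the bounds for $B=UVW$ and $B=V$ gives~\eqref{eq:denomfirstbound}, because both eigenvalues are positive on every common eigenvector.

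\textbf{Dimension bound.} Lemma~\ref{lem:dimofkernel} gives
\[
\dim\Gamma_{UVW}\le 2(L_x+L_y+L_z)\le 6L_{\max}^{UVW}.
\]
For $V$, there are two points to handle.
\begin{itemize}
\item If $V$ is a single box, its side lengths are at most those of $UVW$, so $\dim\Gamma_V\le 6L^{UVW}_{\max}$.
\item If $V=V_1\sqcup V_2$, I would show that $\Gamma_V$ embeds into $\Gamma_{V_1}\oplus\Gamma_{V_2}$, or at least $\dim\Gamma_V\le\dim\Gamma_{V_1}+\dim\Gamma_{V_2}$.
\end{itemize}
For the second case, in the periodic geometry $V_1$ and $V_2$ are separated by $U$ and $W$, each of which is large, so I would argue that no check touches both components. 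Then $\checkset_V=\checkset_{V_1}\sqcup\checkset_{V_2}$ and the constraints decouple, giving $\Gamma_V=\Gamma_{V_1}\oplus\Gamma_{V_2}$. This yields $\dim\Gamma_V\le 12L^{UVW}_{\max}$, and the total exponent is at most $18L^{UVW}_{\max}$. Since $1-\tau\le1$, enlarging the exponent only weakens the bound, which gives the second inequality.

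\textbf{Main obstacle.} The expected obstacle is the decoupling step for $V=V_1\sqcup V_2$. If some check touched both components, I would fall back on a direct argument: apply the slice-by-slice elimination from the proof of Lemma~\ref{lem:dimofkernel} to each component separately, with the shared check variables counted once. This still gives at most $2(L_x+L_y+L_z)$ per component, hence the same factor $12L^{UVW}_{\max}$.
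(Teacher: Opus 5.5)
Your proposal is correct and follows the paper's proof essentially step for step. A common eigenbasis turns the eigenvalues of $\Sigma_{UVW}$ and $\Sigma_V$ into character sums, which Lemma~\ref{lem:characterbound} bounds below, and Lemma~\ref{lem:dimofkernel} applied to $UVW$, $V_1$ and $V_2$ gives the exponent $18L_{\max}^{UVW}$. The decoupling step you flagged is not an obstacle, so you can drop the vague fallback. Every check acting on a qubit of $V_i$ lies in $\checkset_{V_i}$, so the map $S\mapsto(S\cap\checkset_{V_1},S\cap\checkset_{V_2})$ injects $\Gamma_V$ into $\Gamma_{V_1}\oplus\Gamma_{V_2}$ unconditionally. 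In any case, since $U$ and $W$ are nonempty slabs, no check touches both components.
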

\begin{proof}
Note that $V$ is either a box, or if $UVW$ has periodic boundary conditions, the union of two disjoint boxes $V=V_1\sqcup V_2$. The proof works in either case.
We choose a common eigenbasis for the commuting self-adjoint products $\{P_S\mid S\in\Gamma_{UVW}\}\cup\{P_T \mid T\in\Gamma_V\}$ and fix one basis vector. We can write $\chi_{UVW}(S)$ and $\chi_V(T)$ for the corresponding eigenvalues, which take values in $\{-1,1\}$. The identity $P_{S+T}=P_SP_T$ implies that these maps are characters of $\Gamma_{UVW}$ and $\Gamma_V$, respectively.
Lemma~\ref{lem:characterbound} is hence applicable:
\begin{align}
    \sum_{S\in\Gamma_{UVW}}\tau^{|S|}\chi_{UVW}(S)
    &\geq (1-\tau)^{\dim\Gamma_{UVW}}\ , \\
    \sum_{T\in\Gamma_V}\tau^{|T|}\chi_V(T)
    &\geq (1-\tau)^{\dim\Gamma_V} \ . \label{eq:charboundsep}
\end{align}
Multiplying these positive lower bounds shows that every eigenvalue of the operator product is at least $(1-\tau)^{\dim\Gamma_{UVW}+\dim\Gamma_V}$. This implies~\eqref{eq:denomfirstbound}.
Since $V\subset UVW$, we have
$L_\dir^V\leq L_\dir^{UVW}$ for each $\dir \in\{x,y,z\}$. In the case of periodic boundary conditions, we have $L_\dir^{V_1},L_\dir^{V_2}\leq L_\dir^{UVW}$ and $\dim \Gamma_V =  \dim \Gamma_{V_1} + \dim \Gamma_{V_2}$. Lemma~\ref{lem:dimofkernel} therefore gives
\begin{align}
    \dim\Gamma_{UVW}+\dim\Gamma_V
    &\leq 6\bigl(L_x^{UVW}+L_y^{UVW}+L_z^{UVW}\bigr)
    \leq 18 L_{\max}^{UVW} \ ,
\end{align}
which completes the proof.
\end{proof}

\subsection{Bounding the numerator}
\begin{figure}[h]
    \centering
    \uvwbbid
    \caption{The partition of $\Sigma_B$ and $\Sigma_{B'}$ used in Lemma~\ref{lem:switching-light}. The sum is over all different configurations in $\Gamma_B$. Note that the configurations drawn here are only a sketch and are not the slice of any actual elements of $\Gamma_B$. The parts supported to the left of $\ell$ are identical for $B$ and $B'$ and vanish when subtracting $\Sigma_B - \Sigma_{B'}$. The parts farther away can differ, but they have a large support.}
    \label{fig:sum-split}
\end{figure}

We start with a crucial observation concerning cancelation in the sums $ \Sigma_B $ for nested boxes.
\begin{lemma}\label{lem:switching-light}
Let $(a,b,c)$ be a cyclic permutation of $(x,y,z)$ and consider the nested boxes
\begin{align}
    B = [1,L_a]\times[1,L_b]\times[1,L_c] \qquad \text{and} \qquad B' = [1,L_a']\times[1,L_b]\times[1,L_c] \ ,
\end{align}
with $L_a\leq L_a'$. Assume both $B,B'$ do not have periodic boundary conditions in $a$-direction. Then
\begin{equation}
    \Sigma_B-\Sigma_{B'} = \Sigma_B^{a > \ell} -\Sigma_{B'}^{a > \ell} \ ,
\end{equation}
for every $1\leq \ell\leq L_a-2$.
\end{lemma}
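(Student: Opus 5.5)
We aim to show $\Gamma_B^{a\le\ell}=\Gamma_{B'}^{a\le\ell}$ as subsets of $\mathbb F_2^{\checkset}$. Since $\Sigma_B=\Sigma_B^{a\le\ell}+\Sigma_B^{a>\ell}$, and likewise for $B'$, the $a\le\ell$ parts would then coincide term by term, which is exactly the claimed identity. Both boxes share the origin $(1,1,1)$ and the same $b,c$ intervals, so $\checkset_B^{a\le\ell}=\checkset_{B'}^{a\le\ell}$ as sets of labels. Here we use $\ell\le L_a-2<L_a\le L_a'$ and the fact that neither box is periodic in $a$, so the index range $0\le s_a\le\ell$ and the removed corner $s(0,0,0)$ are the same for both. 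Both $\Gamma$'s therefore live in the same ambient space $\mathbb F_2^{\checkset_B^{a\le\ell}}$.

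Fix $S$ supported on checks with $s_a\le\ell$. Every check of $S$ has $\ell^\infty$-diameter one, so $\partial S$ is supported on vertices with $a$-coordinate in $[0,\ell+1]$. Any vertex of $B$ carries a $B$-coordinate $v_a\ge1$, hence $\partial S$ restricted to $B\times\{1,2\}$ is supported on vertices with $v_a\in[1,\ell+1]$, and the same holds for $B'$.

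Because $\ell+1\le L_a-1\le L_a'$, the slabs $\{v\in B: v_a\le\ell+1\}$ and $\{v\in B': v_a\le\ell+1\}$ are the same vertex set. The condition $\partial_B S=0$ thus reduces to vanishing on that common slab, and so does $\partial_{B'}S=0$. This gives $S\in\Gamma_B\iff S\in\Gamma_{B'}$, which proves the set equality.

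The only delicate point is the bookkeeping. We must confirm that the check sets restricted to $s_a\le\ell$ agree, including the boundary layer $s_a=0$ and the excluded corner check. We must also confirm that no check with $s_a\le\ell$ touches vertices of $B'\setminus B$; this holds because those vertices have $v_a\ge L_a+1\ge\ell+3$. Periodicity in the $b$ or $c$ directions is harmless, since both boxes have identical $b$ and $c$ structure.
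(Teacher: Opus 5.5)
Your proof is correct and follows the same route as the paper. Both establish $\Gamma_B^{a\le\ell}=\Gamma_{B'}^{a\le\ell}$ by noting that a check with $s_a\le\ell$ only acts on vertices with $a$-coordinate at most $\ell+1<L_a$, so it has no support in $B'\setminus B$; the common $a\le\ell$ parts of $\Sigma_B$ and $\Sigma_{B'}$ then cancel. Your extra bookkeeping, showing that the label sets $\checkset_B^{a\le\ell}$ and $\checkset_{B'}^{a\le\ell}$ coincide (including the excluded corner $s(0,0,0)$), makes explicit a point the paper leaves implicit.
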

\begin{proof}
First note that for every $1\leq \ell\leq L_a-2$
\begin{equation}
    \Gamma_B^{a \leq \ell} = \Gamma_{B'}^{a \leq \ell} .
\end{equation}
Indeed, checks $s$ with $s_a\leq\ell$ have no support in $B'\setminus B$. 
Thus $\partial_B s= \partial s \cap B= \partial s \cap B' =\partial_{B'}s$.
The identity follows by linearity.

Recall that $\Sigma_B$ is the sum over elements in $\Gamma_B$. The claim then follows by partitioning $\Gamma_C=\Gamma_C^{a\leq\ell}\sqcup\Gamma_C^{a>\ell}$, for any $C\in\{B,B'\}$,
and cancelling the identical contributions from $\Gamma_B^{a\leq\ell}$ and $\Gamma_{B'}^{a\leq\ell}$; see Figure \ref{fig:sum-split} for an illustration of the partition.
\end{proof}

Our argument also relies on the following energy-entropy bound on the part in $ \Sigma_B $, which has long-range support. 
\begin{lemma}\label{lem:smallsumbound}
    There exist constants $\nu_{\max},\nu_{\min}>0$ independent of system size such that for any 
    box $B$, direction $\dir\in \{x,y,z\}$ and $1\leq \ell\leq L_a-2$ 
    \begin{align}
        \| \sumtauPup{B}{\dir > \ell}\|_\infty &\leq \exp(\nu_{\max} L_{\max} - \nu_{\min} \min\{\ell^2, L_{\min}^2\}) \ .
    \end{align}
\end{lemma}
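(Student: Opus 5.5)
The plan is a direct energy--entropy estimate: bound the operator norm of the sum termwise by the triangle inequality, then control the number of terms with Lemma~\ref{lem:dimofkernel} and the weight of each term with Lemma~\ref{lem:checksize}. Each $P_S$ is a product of commuting Pauli operators, hence unitary with $\|P_S\|_\infty=1$, so
\begin{align}
    \bigl\|\sumtauPup{B}{\dir>\ell}\bigr\|_\infty
    \leq \sum_{S\in\Gamma_B^{\dir>\ell}} \tau^{|S|}
    \leq |\Gamma_B^{\dir>\ell}|\cdot \max_{S\in\Gamma_B^{\dir>\ell}}\tau^{|S|} \ .
\end{align}

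For the entropy factor, $\Gamma_B^{\dir>\ell}\subseteq\Gamma_B$. Since $\Gamma_B$ is an $\mathbb F_2$-vector space, Lemma~\ref{lem:dimofkernel} gives
\begin{align}
    |\Gamma_B^{\dir>\ell}|\leq 2^{\dim\Gamma_B}\leq 2^{2(L_x+L_y+L_z)}\leq \exp\bigl(6\ln 2\, L_{\max}\bigr) \ .
\end{align}
For the energy factor, $0\le\tau=\tanh\beta<1$. Lemma~\ref{lem:checksize} applies because $1\leq\ell\leq L_a-2$, and it gives $|S|\geq \min\{\ell^2,L_{\min}^2\}/100$ for every $S\in\Gamma_B^{\dir>\ell}$. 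Hence
\begin{align}
    \tau^{|S|}\leq \exp\Bigl(-\tfrac{\ln(1/\tau)}{100}\min\{\ell^2,L_{\min}^2\}\Bigr) \ .
\end{align}
Multiplying the two bounds gives the claim with $\nu_{\max}=6\ln 2$ and $\nu_{\min}=\ln(1/\tau)/100$. Both constants depend only on $\beta$ and not on the system size.

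The only point needing care is that the constants must work uniformly for every box, including boxes with periodic boundary conditions. Lemmas~\ref{lem:dimofkernel} and~\ref{lem:checksize} are both stated for either kind of boundary condition, so this goes through. The proof also does not use any fatness of $B$.

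The substantive work is in the inputs. The main one is the quadratic weight bound of Lemma~\ref{lem:checksize}, which beats the linear entropy $\nu_{\max}L_{\max}$ only when $\min\{\ell^2,L_{\min}^2\}\gg L_{\max}$. That regime is ensured later by the DS geometry through $d\geq (L_{\max})^\delta$ and fatness, and it is not needed for the present statement.
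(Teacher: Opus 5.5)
Your proposal is correct and matches the paper's proof: triangle inequality with $\|P_S\|_\infty=1$, the count $|\Gamma_B^{a>\ell}|\le 2^{\dim\Gamma_B}\le e^{6\ln 2\,L_{\max}}$ from Lemma~\ref{lem:dimofkernel}, and the weight bound from Lemma~\ref{lem:checksize}. You obtain the same constants, $\nu_{\max}=6\ln 2$ and $\nu_{\min}=\ln(1/\tau)/100$.
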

\begin{proof}
    By the triangle inequality and the fact that $\|P_S\|_\infty=1$, we obtain
    \begin{align}
       \| \sumtauPup{B}{\dir > \ell}\|_\infty =  \left\|\sum_{S\in \Gamma_B^{\dir> \ell}} \tau^{|S|} P_S \right\|_\infty 
        &\leq \left|\Gamma_B^{\dir> \ell}\right| \max_{S\in \Gamma_B^{\dir> \ell}} \tau^{|S|}
    \end{align}
    Applying Lemmas~\ref{lem:dimofkernel} and~\ref{lem:checksize} to bound the number of terms and their weights, respectively, we obtain
    \begin{align}
        \| \sumtauPup{B}{\dir > \ell}\|_\infty 
        &\leq \exp\left( 2\ln2(L_x+L_y+L_z) + \frac{\ln\tau}{100} \min\{\ell^2, L_{\min}^2\}\right) \notag \\
        &\leq \exp\left( 6(\ln2 )L_{\max} + \frac{\ln\tau}{100} \min\{\ell^2, L_{\min}^2\} \right) \ .
    \end{align}
    Setting $\nu_{\max}=6\ln 2$ and $\nu_{\min}=-(\ln\tau)/100$ completes the proof.
\end{proof}
Putting the above items together, we arrive at an estimate for the numerator in~\eqref{eq:fraction}.
\begin{lemma}\label{lem:numerator}
    There exist constants $\mu_L,\mu_d,L_0>0$, depending only on $\beta$, such that the following holds in the DS-geometry described in Subsection~\ref{s:DSgeo} with  $d =\dist{U,W}$.
     If $d\geq 4$ and $L_{\min}^{UVW}>L_0$, then
    \begin{align}
     \left\| \sumtauP{UV}\sumtauP{VW} - \sumtauP{UVW}\sumtauP{V}\right\|_\infty
       &\leq 5 \, \exp(\mu_L L_{\max}^{UVW} - \mu_d d^2) \ .
    \end{align}
\end{lemma}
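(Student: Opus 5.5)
We will use Lemma~\ref{lem:switching-light} to cancel all short-range parts of the sums and Lemma~\ref{lem:smallsumbound} to bound what remains. We treat the open case first, with the three sets ordered $U,V,W$ along direction $a$. Place the origin at the left end of $UVW$, so that $UV=[1,L_a^{UV}]\times\cdots$ and $UVW=[1,L_a^{UVW}]\times\cdots$ share the same left face. Likewise $VW$ and $V$ share their left face. Here $L_a^{V}=d-1$ and $L_a^{VW}\ge d$.

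The key algebraic step is the decomposition
\begin{align}
\Sigma_{UV}\Sigma_{VW}-\Sigma_{UVW}\Sigma_V
 = (\Sigma_{UV}-\Sigma_{UVW})\Sigma_{VW} + \Sigma_{UVW}(\Sigma_{VW}-\Sigma_V)\ ,
\end{align}
which holds because all $P_S$ commute. For the first difference, the pair $UV\subset UVW$ is nested along $a$ with a common left face. Lemma~\ref{lem:switching-light}, applied with $\ell=L_a^{UV}-2$, gives $\Sigma_{UV}^{a>\ell}-\Sigma_{UVW}^{a>\ell}$.

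This choice of $\ell$ is wasteful, and we should use the other orientation instead. The cancellation should occur on the $W$ side, so the distance from $U$ should be what matters. Concretely, we reflect coordinates so that the shared face of $UV$ and $UVW$ is the face on the $U$ side; the configurations that differ must then reach across $V$. Without the reflection, ordering the coordinates so that the boxes extend in the positive direction from the shared face, the second difference is handled the same way: $V\subset VW$ share the face at the $U$ side, $L_a^V=d-1$, and we take $\ell=d-3$. For the first term we instead view $VW\subset UVW$ as nested, sharing their face at the $W$ end, and use the analogous rearrangement
\begin{equation}
\Sigma_{UV}\Sigma_{VW}-\Sigma_{UVW}\Sigma_V=\Sigma_{UV}(\Sigma_{VW}-\Sigma_V)-\Sigma_V(\Sigma_{UVW}-\Sigma_{UV})\ .
\end{equation}
We choose whichever pairing puts $V$ (length $d-1$) inside the smaller box of each nested pair. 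The orientation conventions of Lemma~\ref{lem:switching-light} require a reflection, which maps Haah's checks to the $Z$-type/inverted checks; there we would invoke the symmetry remark that the constraint analysis is identical. Each difference then becomes $\Sigma^{a>\ell}_B-\Sigma^{a>\ell}_{B'}$ with $\ell\approx d/2$ or $\ell=d-3$, where $d\ge4$ ensures $\ell\ge1$.

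Next, Lemma~\ref{lem:smallsumbound} bounds each of these tail sums by $\exp(\nu_{\max}L_{\max}^{UVW}-\nu_{\min}\min\{\ell^2,L_{\min}^2\})$. For the regime $d\le L_{\min}$ it remains to justify $\min\{\ell^2,L_{\min}^2\}\gtrsim d^2$; when $L_{\min}$ is smaller, we absorb the difference into $\mu_L$ using fatness, $d\le L_{\max}\le 10L_{\min}$, with $L_{\min}$ taken as the minimum over the relevant boxes, which is comparable to $L_{\min}^{UVW}$ except along $a$, where the length is at least $d$. The prefactors are handled by $\|\Sigma_B\|_\infty\le|\Gamma_B|\le 2^{6L_{\max}}$ from Lemma~\ref{lem:dimofkernel}. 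Collecting the at most four tail terms together with a slack term yields the constant $5$ and an exponent $\mu_L L_{\max}-\mu_d d^2$.

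In the periodic case, with cyclic order $U,V_1,W,V_2$, we apply the same telescoping twice, once across $V_1$ and once across $V_2$, cutting the torus direction as in Lemma~\ref{lem:dimofkernel}. The main obstacle is the bookkeeping here: each nested pair must genuinely fall under the hypotheses of Lemma~\ref{lem:switching-light} (non-periodic in $a$ and a common face), and the reflection must be justified by the symmetry of the checks.
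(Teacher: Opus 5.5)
The main gap is the periodic case, which you defer as ``bookkeeping''; it needs an idea that the proposal does not contain. When $UVW$ is a full cycle in direction $a$, Lemma~\ref{lem:switching-light} cannot be applied to any pair involving $UVW$, because it requires both boxes to be non-periodic in $a$. Cutting the cycle as in Lemma~\ref{lem:dimofkernel} does not help. The cut changes $\Gamma_{UVW}$ itself, which was harmless there because only an upper bound on the dimension was needed, but it destroys the exact cancellation you need here.

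The missing step is to compare $\Sigma_{UVW}$ with $I$ rather than with another box. For every nonzero $S\in\Gamma_{UVW}$ you may place the origin on the cycle so that some check of $S$ has $a$-coordinate $L_a^{UVW}-1$. Then $S\in\Gamma_{UVW}^{a>L_a^{UVW}-2}$, and Lemma~\ref{lem:checksize} gives $|S|\ge\frac1{100}\min\{(L_a^{UVW}-2)^2,(L_{\min}^{UVW})^2\}\gtrsim d^2$, so $\|\Sigma_{UVW}-I\|_\infty\le 2^{\dim\Gamma_{UVW}}e^{-cd^2}$. You also need $\Sigma_V=\Sigma_{V_1}\Sigma_{V_2}$, which holds because no check touches both components when $d\ge3$. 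Finally, pair $UV$ with $V_2$ and $VW$ with $V_1$: for the cyclic order $U,V_1,W,V_2$ in increasing $a$, each of these is the component sitting at the low end of the larger box. This gives
\[
\Sigma_{UV}\Sigma_{VW}-\Sigma_{UVW}\Sigma_V=(\Sigma_{UV}-\Sigma_{V_2})\Sigma_{VW}+\Sigma_{V_2}(\Sigma_{VW}-\Sigma_{V_1})-(\Sigma_{UVW}-I)\Sigma_{V_1}\Sigma_{V_2} .
\]
The four tail sums coming from the first two terms, plus the last term, are what produce the constant $5$.

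In the open case, your first decomposition $(\Sigma_{UV}-\Sigma_{UVW})\Sigma_{VW}+\Sigma_{UVW}(\Sigma_{VW}-\Sigma_V)$ is exactly right and needs no reflection. Both pairs $UV\subset UVW$ and $V\subset VW$ share their low ($U$-side) face, so Lemma~\ref{lem:switching-light} applies directly with $\ell=d-3$. The choice $\ell=L_a^{UV}-2\ge d-2$ is not wasteful; it only strengthens the bound from Lemma~\ref{lem:smallsumbound}.

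Drop the detour that follows, because it is not valid. A single-axis reflection does not turn $X$-checks into $Z$-checks; the symmetry relating them is the full inversion $v\mapsto -v$ combined with swapping the two qubits. More fundamentally, the analysis is intrinsically oriented. By Lemma~\ref{lem:checksize} applied in each direction, light elements of $\Gamma_B$ are confined near the low corner of $B$. Pairing $VW\subset UVW$ at their common $W$-side face would therefore leave bounded-weight configurations near the $U|V$ interface uncancelled, and that difference is not small. Note also that your displayed second rearrangement actually uses $UVW-UV$ and $VW-V$ again, not $VW\subset UVW$.
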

\begin{proof}
The side lengths of the boxes $UV$, $VW$, and $UVW$ coincide in two coordinate directions. We denote the remaining direction by $\dir\in \{x,y,z\}$. If $UVW$ has periodic boundary conditions in the $\dir$-direction, then $V$ is the disjoint union $V_1\sqcup V_2$ of two boxes, with $U,V_1,W,V_2$ arranged in this cyclic order along that direction, see Fig.~\ref{fig:uvw}. Since $d\geq3$, no check is connected to both $V_1$ and $V_2$. It follows that $\Gamma_V=\Gamma_{V_1}\oplus\Gamma_{V_2}$ and hence $\Sigma_V=\Sigma_{V_1}\Sigma_{V_2}$.

We denote the smallest and largest side lengths among the boxes $UV$, $VW$, and $UVW$ by $L_{\min}^{\mathrm{tot}}$ and $L_{\max}^{\mathrm{tot}}$, respectively. Since $UV,VW\subset UVW$, we have
$L_{\max}^{\mathrm{tot}}=L_{\max}^{UVW}$.
Since the boxes are fat and share two side lengths,
we have $100 L_{\min}^\mathrm{tot} \geq L_{\max}^{\mathrm{tot}}$ and $d\leq L_{\max}^\mathrm{tot}$. Note that $L_a^V\geq d-1$ (or $L_a^{V_1}, L_a^{V_2}\geq d-1$ in the periodic case).

Our strategy is to pair the boxes so that certain contributions
cancel by Lemma~\ref{lem:switching-light}. To control the remaining
terms, we first establish bounds for every $B\in\{V,UV,VW,UVW\}$ in the case of open boundary conditions, and for every $B\in\{V_1,V_2,UV,VW,UVW\}$ in the periodic case.

Applying Lemma~\ref{lem:smallsumbound} with $\ell=d-3 \leq L_a^{B}-2$ yields for any $B$ without periodic boundary conditions in $a$-direction
\begin{align}
    \|\Sigma_{B}^{a > d-3}\|_\infty 
    &\leq  \exp(\nu_{\max} L_{\max}^B- {\nu_{\min} \min\{(d-3)^2, (L_{\min}^B)^2\}}) \notag \\
    &\leq   \exp(\nu_{\max} L_{\max}^{\mathrm{tot}}- \frac{\nu_{\min}}{10000} d^2)\ , \label{eq:sigmaboundone}
\end{align}
where we used that $L_{\min}^B\geq d/100$, $d-3\geq d/100$ and $L_{\max}^B\leq L_{\max}^{\mathrm{tot}}$ to obtain the last inequality. Here $\nu_{\max}=6 \ln 2$, as in Lemma~\ref{lem:smallsumbound}.
In both the open and periodic cases, for every $B\in\{V,UV,VW,UVW\}$, and also for $B=V_1,V_2$ in the periodic case, we have
\begin{align}
\|\Sigma_B\|_\infty
    &\leq 2^{\dim\Gamma_B} 
    \leq \exp(2\nu_{\max}L_{\max}^{\mathrm{tot}}) \ . \label{eq:sigmaboundtwo}
\end{align}
The second inequality follows from
Lemma~\ref{lem:dimofkernel}, applied to each box. The additional factor of two accounts for $B=V_1\sqcup V_2$ in the periodic case, where $\Gamma_B=\Gamma_{V_1}\oplus\Gamma_{V_2}$.

Under periodic boundary conditions in the $a$-direction,
fix a nonzero $S\in\Gamma_{UVW}$. Choose the origin in that direction so that one check in $S$ has $a$-coordinate $L_a^{UVW}-1$.
Then $S\in\Gamma_{UVW}^{a>L_a^{UVW}-2}$.
Lemma~\ref{lem:checksize} yields
    \begin{align}
    |S|\geq \frac{1}{100}\min\{(L_a^{UVW}-2)^2, (L_{\min}^{UVW})^2\}    \ .
    \end{align}
The triangle inequality and Lemma~\ref{lem:dimofkernel} yield
\begin{align}
        \|\Sigma_{UVW}-I\|_\infty \leq 2^{\dim \Gamma_{UVW}} \exp(-\frac{\nu_{\min}}{10000} d^2) \leq \exp(\nu_{\max} L_{\max}^{\mathrm{tot}}- \frac{\nu_{\min}}{10000} d^2)\ , \label{eq:sigmaboundthree}
\end{align}
where we used that $L_a^{UVW}-2, L_{\min}^{UVW}\geq L_{\max}^{\mathrm{tot}}/100\geq d/100$ for $L_0$ large enough.
We now pair $UV$ with $V_2$, $VW$ with $V_1$, and compare $\Sigma_{UVW}$ with $I$, obtaining
\begin{align}
    \sumtauP{UV}\sumtauP{VW} - \sumtauP{UVW}\sumtauP{V} &= (\Sigma_{UV}-\Sigma_{V_2})\Sigma_{VW}+\Sigma_{V_2}(\Sigma_{VW}-\Sigma_{V_1})-(\Sigma_{UVW}-I)\Sigma_{V_1}\Sigma_{V_2} \ .
\end{align}
Applying Lemma~\ref{lem:switching-light} yields
\begin{align}
   (\Sigma_{UV}-\Sigma_{V_2})\Sigma_{VW}+\Sigma_{V_2}(\Sigma_{VW}-\Sigma_{V_1}) 
   = (\Sigma_{UV}^{a > \ell}-\Sigma_{V_2}^{a > \ell})\Sigma_{VW}+\Sigma_{V_2}(\Sigma_{VW}^{a > \ell}-\Sigma_{V_1}^{a > \ell}) \ .\notag
\end{align}

For open boundary conditions, we pair $UV$ with $UVW$ and $V$ with $VW$ and use Lemma~\ref{lem:switching-light} and find
\begin{align}
    \sumtauP{UV}\sumtauP{VW} - \sumtauP{UVW}\sumtauP{V} &= 
    (\sumtauP{UV} - \sumtauP{UVW})\sumtauP{VW} +\sumtauP{UVW}(\sumtauP{VW}-\sumtauP{V}) \notag \\
    &= 
    \left(\sumtauP{UV}^{a > \ell} - \sumtauP{UVW}^{a > \ell}\right)\sumtauP{VW} +\sumtauP{UVW}\left(\sumtauP{VW}^{a > \ell}-\sumtauP{V}^{a > \ell}\right) .
\end{align}

In each of the cases, using the triangle inequality, the submultiplicativity of the norm and the bounds in~\eqref{eq:sigmaboundone}, \eqref{eq:sigmaboundtwo} and~\eqref{eq:sigmaboundthree}, we arrive at
\begin{equation}
    \left\| \sumtauP{UV}\sumtauP{VW} - \sumtauP{UVW}\sumtauP{V}\right\|_\infty
    \leq 5  \exp(3\nu_{\max} L_{\max}^{\mathrm{tot}}- \frac{\nu_{\min}}{10000} d^2) \ .
\end{equation}
Setting $\mu_L=3\nu_{\max}$ and $\mu_d=\nu_{\min}/10000$ completes the proof.
\end{proof}

\subsection{Combining the bounds}

\noindent
We are finally ready to establish the DS-condition~\eqref{eq:DS-condition} using the expression~\eqref{eq:fraction}. This will complete the proof of Theorem~\ref{thm:rm}.
\begin{theorem}
    There exist constants $K,\gamma,L_0>0$, depending only on $\beta$, such that the following holds in the DS-geometry described in Subsection~\ref{s:DSgeo} with  $d =\dist{U,W}$.
    If $d\geq (L_{\max}^{UVW})^{2/3}$ and $L_{\min}^{UVW}>L_0$, then
\begin{equation}
    \left\|\frac{\Sigma_{UV}\Sigma_{VW}- \Sigma_{UVW}\Sigma_{V}}{\Sigma_{UVW}\Sigma_{V}}\right\|_\infty \leq K e^{-\gamma d^2} \ . \label{eq:thmcombinedbound}
\end{equation}
\end{theorem}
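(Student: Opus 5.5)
The plan is to combine Lemma~\ref{lem:numerator} and Lemma~\ref{lem:denominator}. Since $\Sigma_{UVW}\Sigma_V$ is a product of commuting positive operators, all diagonal in a common Pauli eigenbasis together with the numerator, the operator norm of the fraction is at most
\[
\frac{\bigl\|\Sigma_{UV}\Sigma_{VW}-\Sigma_{UVW}\Sigma_V\bigr\|_\infty}{\lambda_{\min}(\Sigma_{UVW}\Sigma_V)} \ .
\]
All operators involved are linear combinations of $X$-type Pauli products, so they commute. The norm of the product of the numerator with the inverse denominator is therefore bounded by this quotient.

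Next we insert the two bounds. The hypotheses $d\geq (L_{\max}^{UVW})^{2/3}$ and $L_{\min}^{UVW}>L_0$, with $L_0$ at least the threshold of Lemma~\ref{lem:numerator}, give $d\geq L_0^{2/3}\geq 4$ once $L_0\geq 8$. Lemma~\ref{lem:numerator} then applies and bounds the numerator by $5\exp(\mu_L L_{\max}^{UVW}-\mu_d d^2)$. Lemma~\ref{lem:denominator} bounds the inverse of the denominator by
\[
(1-\tau)^{-18L_{\max}^{UVW}}=\exp\bigl(18\ln(1/(1-\tau))\,L_{\max}^{UVW}\bigr) \ .
\]
Note that $\tau=\tanh\beta<1$, so this constant is finite. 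Altogether the fraction is bounded by $5\exp(\mu' L_{\max}^{UVW}-\mu_d d^2)$, where $\mu'=\mu_L+18\ln(1/(1-\tau))$.

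The only real step is absorbing the volume-entropy term $\mu' L_{\max}^{UVW}$ into the Gaussian decay in $d$. The key is the scale assumption $d\geq (L_{\max}^{UVW})^{2/3}$, which gives
\[
L_{\max}^{UVW}\leq d^{3/2}=d^2\cdot d^{-1/2} \ .
\]
Enlarge $L_0$ so that $d\geq L_0^{2/3}$ forces $\mu' d^{-1/2}\leq \mu_d/2$, that is, $d\geq (2\mu'/\mu_d)^2$. Then $\mu' L_{\max}^{UVW}\leq \tfrac{\mu_d}{2}d^2$, and the fraction is at most $5e^{-\mu_d d^2/2}$. This yields~\eqref{eq:thmcombinedbound} with $K=5$ and $\gamma=\mu_d/2$. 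All constants depend only on $\beta$, through $\tau$ and the constants of Lemmas~\ref{lem:smallsumbound} and~\ref{lem:numerator}.

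There is a side point to check: that the geometry of Subsection~\ref{s:DSgeo}, namely fat boxes, the pairing in the periodic case, and $V$ being a box or two boxes, matches the hypotheses of both lemmas. It does by construction. The exponent $2/3$ is exactly what makes the linear-in-$L$ entropy subleading to $d^2$; any exponent above $1/2$ would work. Finally, since $d\geq 1$ we have $e^{-\gamma d^2}\leq e^{-\gamma d}$, so this also gives~\eqref{eq:DS-condition} with $\delta=2/3$ via~\eqref{eq:fraction}.
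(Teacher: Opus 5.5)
Your proposal is correct and follows the paper's proof: bound the fraction by $\|A\|_\infty\,\lambda_{\min}(B)^{-1}$ using Lemmas~\ref{lem:numerator} and~\ref{lem:denominator}, then absorb the linear term through $L_{\max}^{UVW}\leq d^{3/2}$, which gives $K=5$ and $\gamma=\mu_d/2$. Your explicit check that $d\geq L_0^{2/3}\geq 4$, as Lemma~\ref{lem:numerator} requires, makes precise a hypothesis the paper leaves implicit, and the commutativity you invoke is harmless but not needed, since submultiplicativity of the norm already gives the first step.
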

\begin{proof}
We write the operator inside the norm
in~\eqref{eq:thmcombinedbound} as $AB^{-1}$, where $ A:=\Sigma_{UV}\Sigma_{VW}- \Sigma_{UVW}\Sigma_{V} $ and $ B:=\Sigma_{UVW}\Sigma_{V} $. 
Since $B$ is positive definite, Lemma~\ref{lem:denominator} and~\ref{lem:numerator}
yield
    \begin{align}
    \|AB^{-1}\|_\infty
    &\leq \|A\|_\infty\,\lambda_{\min}(B)^{-1}\notag \\
    &\leq 5 \cdot e^{\mu_L L_{\max}^{UVW} - \mu_d d^2 }  (1-\tau)^{-18 L_{\max}^{UVW}} \notag \\
    &=5\exp\left(
        (\mu_L+\mu_L')L_{\max}^{UVW}-\mu_d d^2\right) \ ,
    \end{align}
where we set $\mu_L':=18 |\ln(1-\tau)|$.

Note that the assumptions imply $L_{\max}^{UVW}\leq d^{3/2}$ and $d\geq L_0^{2/3}$.  Increasing $L_0$ if necessary, we have
$(\mu_L+\mu_L')d^{3/2}\leq\gamma d^2$ for every $d\geq L_0^{2/3}$ where we abbreviate $\gamma = \mu_d/2$. Consequently,
\begin{align}
    (\mu_L+\mu_L')L_{\max}^{UVW}-\mu_d d^2
    &\leq (\mu_L+\mu_L')d^{3/2}-\mu_d d^2 \leq -\gamma d^2 \ .
\end{align}
Thus $\|AB^{-1}\|_\infty\leq 5 e^{-\gamma d^2}$. This proves~\eqref{eq:thmcombinedbound} with $K=5$.
\end{proof}

\appendix
\section{Multiscale analysis}\label{sec:multiscale}
The multiscale analysis propagates a lower bound on the local MLSI constant from a fixed scale to all larger scales, with a loss bounded independently of the system size. The argument in \cite[Section~6.2]{stengele_ModifiedlogarithmicSobolev_2025} requires the DS-condition for boxes of arbitrary aspect ratio satisfying the distance bound $\mathrm{dist}(U,W)\geq \sqrt{L_{\max}^{UVW}}$. Here we show that, for any fixed $0<\delta<1$, it suffices to assume the DS-condition for sufficiently large fat boxes satisfying the distance bound $\mathrm{dist}(U,W)\geq (L_{\max}^{UVW})^\delta$.\\

We only work with the $X$-part ($ \sharp = X $), since the $ Z$-part is identical. Recall the main ingredients, for a box $B\subset\Lambda$ and a full-rank state $\sigma$ on $\Lambda$, we abbreviate
\begin{equation}
    D_B:=D\bigl(\sigma\|\mathbb{E}_B(\sigma)\bigr)\ ,
    \qquad EP_B:=  -\mathrm{Tr} (\mathcal{L}_B(\sigma)(\ln(\sigma)- \ln(\mathbb{E}_B(\sigma))) ) \ ,
\end{equation}
cf.~\eqref{def:CERE}.
In this notation, the modified logarithmic Sobolev constant $\alpha(B)$ is the largest
constant such that    $ 2\alpha(B)D_B\leq EP_B $.

The multiscale argument in \cite[Section~6.2]{stengele_ModifiedlogarithmicSobolev_2025} (which is based on \cite{stroock_LogarithmicSobolevInequality_199,martinelli_ApproachEquilibriumGlauber_1994b}) propagates a bound on local MLSI constants to larger scales. Here we present the modified version, which is based on \cite{martinelli_ApproachEquilibriumGlauber_1994a}.

Recall that  $L_\Lambda$ is the side length of $\Lambda$. We call a box \emph{regular} if each of
its coordinate intervals is either a full cycle of $\Lambda$ or a proper interval of
length at most $2 L_\Lambda/3$. For a length scale $L$, we denote
\[
  \alpha(L) := \inf\{\alpha(B) \mid B \subset \Lambda \text{ is a regular fat box},\
  L^B_{\max} \le L\} \ .
\]

The multiscale analysis controls both the entropy production and the relative entropy.
The former satisfies positivity and additivity on disjoint regions \cite[Appendix~A]{stengele_ModifiedlogarithmicSobolev_2025}:
\[
    EP_A+EP_B\leq EP_R
    \qquad\text{if }A,B\subset R,\quad A\cap B=\emptyset \ ,
\]
as well as monotonicity, $EP_A\leq EP_R$ for $A\subset R$.

The relative entropy is controlled by \cite[Theorem 5.1]{stengele_ModifiedlogarithmicSobolev_2025}, which states that for boxes $UV,VW$, the DS-condition implies approximate tensorization:
\[
    D_{UVW}\leq
    \bigl(1+K' e^{-\gamma \dist{U,W}}\bigr)\bigl(D_{UV}+D_{VW}\bigr) \ .
\]

\begin{figure}
    \centering
    \uvwshift[stacked, axes=x/y, v length={L^{\delta}},total length={\Theta(L)}]
    \quad
    \uvwshiftper[wrap=false, stacked, v length={L^{\delta}},total length={\Theta(L)}]
    \caption{The geometry for the multiscale analysis. Here we used $I=\lfloor L^{(1-\delta)/2}\rfloor$ as in Eq.~\eqref{eq:I}. Left: open boundary conditions in the $x$-direction. Right: periodic boundary conditions in the $x$-direction.}
    \label{fig:multiscale}
\end{figure}

We then show the following stronger multiscale analysis:

\begin{proposition}\label{lem:mlsirecursion}
Assume that there are $K,\xi>0$ and $0<\delta<1$ such that
\[
    D_{UVW}\leq
    \bigl(1+K e^{-\xi \dist{U,W}}\bigr)\bigl(D_{UV}+D_{VW}\bigr),
\]
for every full-rank $\sigma$, whenever $U,V,W$ are disjoint,
$UV,VW,UVW$ are fat regular boxes with sufficiently
large minimum side length, and
$\dist{U,W}\geq\max\{5,(L_{\max}^{UVW})^\delta\}$.
Then there exist $L_0$ and $C>0$, independent of the system size, such that
\[
    \alpha(L)\geq C\alpha(L_0)
    \qquad\text{for all }L\geq L_0.
\]
\end{proposition}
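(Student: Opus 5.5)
We follow the Martinelli--Olivieri style recursion: show that there are scales $L_k$ with $L_{k+1}\approx \tfrac32 L_k$ and a recursion
\[
\alpha(L_{k+1})\geq \frac{\alpha(L_k)}{\bigl(1+Ke^{-\xi L_k^{\delta}}\bigr)\bigl(1+c\,L_k^{-(1-\delta)/2}\bigr)}.
\]
Iterating then gives $\alpha(L)\geq C\alpha(L_0)$, with $C=\prod_k(1+Ke^{-\xi L_k^\delta})^{-1}(1+cL_k^{-(1-\delta)/2})^{-1}>0$. This product converges because $L_k$ grows geometrically, so $\sum_k L_k^{-(1-\delta)/2}<\infty$ and $\sum_k e^{-\xi L_k^\delta}<\infty$. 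Between consecutive scales, monotonicity of $\alpha(\cdot)$ in $L$ by definition handles intermediate $L$.

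\textbf{Step 1 (geometric cutting).} Take a regular fat box $R$ with $L_{\max}^R\leq L_{k+1}$ and let $a$ be a direction of maximal side length. If $L_a^R\leq L_k$ there is nothing to do. Otherwise cut along $a$, as in Fig.~\ref{fig:multiscale}. Set
\begin{equation}\label{eq:I}
I=\lfloor L^{(1-\delta)/2}\rfloor
\end{equation}
and choose $I$ disjoint windows $V_1,\dots,V_I$ of width $\lceil L^\delta\rceil$ inside the middle third of $R$ along $a$. The window $V_i$ determines a decomposition $R=U_iV_iW_i$: in the open case $U_i,V_i,W_i$ are consecutive intervals; in the periodic case $U_i$ sits in the middle and $V_i$ consists of two symmetric windows $V_{1,i},V_{2,i}$, with $W_i$ wrapping around.

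We must check that every such triple is admissible. First, $U_iV_i$ and $V_iW_i$ each have $a$-length at most roughly $\tfrac23 L_a^R+L^\delta\leq L_k$. Second, they are regular: proper intervals of length at most $2L_\Lambda/3$ after cutting a cycle. Third, they are fat: their $a$-length is at least $L_a^R/3\geq L_{\max}^R/3$, while the other sides are unchanged and $R$ is fat. I would check the fatness constant carefully here. The ratio $10$ for $R$ must survive shrinking the longest side by a factor $\le 3$, which holds since the longest side only decreases and the shortest side is unchanged unless it is the $a$-side, in which case it is at least $L_a^R/3\geq L_{\max}^R/30$. This is the point where the fatness window may need adjusting, for instance by taking shorter pieces. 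Finally, $\dist{U_i,W_i}\geq L^\delta\geq (L_{\max}^{R})^\delta$.

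\textbf{Step 2 (averaging the tensorization).} For each $i$, apply the hypothesis and then the definition of $\alpha(L_k)$:
\[
D_R\leq (1+Ke^{-\xi L^\delta})\,\frac{EP_{U_iV_i}+EP_{V_iW_i}}{2\alpha(L_k)}.
\]
Average over $i=1,\dots,I$. Since the windows are disjoint, every coupling site lies in $U_iV_i\cap V_iW_i=V_i$ for at most one $i$, and otherwise in exactly one of the two pieces. By additivity and positivity of $EP$ on disjoint regions,
\[
\sum_i \bigl(EP_{U_iV_i}+EP_{V_iW_i}\bigr)\leq (I+1)\,EP_R.
\]
The precise form of this bound is to be justified by splitting each $U_iV_i$ into the disjoint blocks between consecutive windows and applying additivity blockwise. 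This gives
\[
2\alpha(R)\geq \frac{2\alpha(L_k)}{(1+Ke^{-\xi L^\delta})(1+1/I)},
\]
which is the claimed recursion step.

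\textbf{Main obstacle.} I expect the main difficulty to be Step 2's overlap bookkeeping: the Davies terms have supports slightly larger than their sites, so $EP$ additivity must be applied to site-disjoint blocks. Alongside this, Step 1 must verify that all pieces remain regular and fat with a sufficiently large minimum side for every $R$, including mixed periodic directions. The base case only needs $L_0$ large enough that the hypothesis's minimal side length applies.
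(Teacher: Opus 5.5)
Your strategy matches the paper's: disjoint overlap strips along a longest side, averaging the approximate tensorization over $I=\lfloor L^{(1-\delta)/2}\rfloor$ strip positions, the bound $\sum_i(EP_{U_iV_i}+EP_{V_iW_i})\le(I+1)EP_R$, geometric scales and a convergent product. However, Step~2 has a genuine gap: one cut does not put the pieces into the class controlled by $\alpha(L_k)$. You cut $R$ only along a longest direction $a$, so $U_iV_i$ and $V_iW_i$ keep the other two sides of $R$. Suppose $R$ is a near-cube with all three sides in $(L_k,L_{k+1}]$. Then $L_{\max}^{U_iV_i}>L_k$, so the inequality $2\alpha(L_k)D_{U_iV_i}\le EP_{U_iV_i}$ that you invoke is not available. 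Consequently the one-factor recursion $\alpha(L_{k+1})\ge\alpha(L_k)/\bigl[(1+Ke^{-\xi L_k^\delta})(1+cL_k^{-(1-\delta)/2})\bigr]$ does not follow.

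The missing step is to keep the minimum. One cut gives $\alpha(R)\ge\min_i\{\alpha(U_iV_i),\alpha(V_iW_i)\}\big/\bigl[(1+Ke^{-\xi w})(1+1/I)\bigr]$, with $w$ the strip width. If the minimizing piece still has a side longer than $L_k$, cut it again along its current longest side, and repeat. Each direction is cut at most once, so at most three cuts are needed and the loss factor enters cubed, $\alpha(L_{k+1})\ge[\cdots]^{-3}\alpha(L_k)$. This is exactly the paper's recursion, with $L_{k+1}=\tfrac43L_k$, and the product still converges. Alternatively, Martinelli--Olivieri's anisotropic classes of boxes fitting in $[0,l_{k+1}]\times[0,l_{k+2}]\times[0,l_{k+3}]$ make a single cut suffice. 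In that case $\alpha(\cdot)$ must be indexed by those classes rather than by $L_{\max}$.

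\emph{Ratio.} With ratio exactly $3/2$ there is no slack: your own estimate $\tfrac23L_a^R+L^\delta$ exceeds $L_k$ when $L_a^R$ is close to $L_{k+1}$. Take the ratio strictly below $3/2$; with $4/3$ the cut pieces have length at most $\tfrac89L_k$, leaving room for the $O(L^{(1+\delta)/2})$ displacement of the strips.

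\emph{Fatness.} Your worry here is unfounded. Since $a$ is a longest side, a piece's $a$-length is at least $L_{\max}^R/3$, not merely $L_{\max}^R/30$. Its other sides are those of $R$, and its longest side does not grow, so $L_{\max}\le10L_{\min}$ persists. This is also why every subsequent cut must be along the current longest side.

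\emph{Strip width and distance.} The strip width must be at least $(L_{\max}^R)^\delta$, i.e.\ based on $L_{k+1}$ rather than $L_k$. In the open case you must also bound the separation of $U_i$ and $W_i$ around the torus outside $R$. Regularity ($L_a^R\le2L_\Lambda/3$) makes that separation at least $L_\Lambda/3$.

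\emph{Entropy production.} You do not need subadditivity of $EP$ over blocks. Your windows are disjoint and ordered, so $U_iV_i$ and $V_{i+1}W_{i+1}$ are disjoint, and superadditivity gives $EP_{U_iV_i}+EP_{V_{i+1}W_{i+1}}\le EP_R$. Monotonicity bounds the two leftover terms by $EP_R$ each, which yields $(I+1)EP_R$ using only the properties quoted in the paper.
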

In the case of Haah's code, we show the DS-condition with $\delta = 2/3$. 
A bound on $\alpha(L_0)$ is given in \cite{stengele_ModifiedlogarithmicSobolev_2025}.

\begin{proof} 
Fix a regular fat box $B$ with $L<L_{\max}^B\leq 4L/3$. Along a longest side, we
choose
\begin{align}
    I:=\left\lfloor L^{(1-\delta)/2}\right\rfloor \ ,\label{eq:I}
\end{align}
partitions $B=U_i\sqcup V_i\sqcup W_i$ with overlap strips $V_i$ of width
$\lceil(4L/3)^\delta\rceil$, as illustrated in Fig.~\ref{fig:multiscale}.
If $B$ reaches around the torus along the chosen side, $V_i$ has two connected components, $V_{1,i}$ and $V_{2,i}$, each of width $\lceil(4L/3)^\delta\rceil$.

We place the strips near the midpoint and shift each successive strip by its width, so that $U_iV_i$ and
$V_{i+1}W_{i+1}$ are disjoint. In the case of periodic boundary conditions, we shift both parts of $V$, growing $U$ and shrinking $W$ as $i$ increases, see Fig.~\ref{fig:multiscale}. In both cases, the total displacement is
$O(L^{(1+\delta)/2})$.
Consequently, for sufficiently large $L$, both $U_iV_i$ and $V_iW_i$ have length between
$L_{\max}^B/3$ and $2L_{\max}^B/3$ in the cut direction, for all $i$.

The resulting boxes remain fat and regular. If the cut side is a full cycle, then $L^B_{\max}=L_\Lambda$, and each piece has length at most $2L_\Lambda/3$ in the cut direction. Otherwise, each piece has length strictly less than that of the cut side. All other coordinate intervals remain unchanged.
Their length along the cut direction is at most $2L_{\max}^B/3\leq 8L/9<L$.
Since $U_i$ and
$W_i$ have the same cross-section, $\mathrm{dist}(U_i,W_i)$ is the shorter of the two
separations between them along the cut direction. If the cut side is a full cycle, both separations cross a component of $V_i$. Otherwise, one crosses $V_i$, and the other
leaves $B$ and has length at least $L_\Lambda - L^B_{\max} + 1$.
For $L_0$ large enough, this length is larger than the width of $V_i$, $\lceil (4L/3)^\delta\rceil$ .
 In all cases,
$\mathrm{dist}(U_i,W_i) \ge (4L/3)^\delta \ge (L^B_{\max})^\delta$.

Pairing consecutive disjoint pieces and bounding the two remaining
entropy productions by $EP_B$ gives
\[
    \frac1I\sum_{i=1}^{I}
    \bigl(EP_{U_iV_i}+EP_{V_iW_i}\bigr)
    \leq\left(1+\frac1I\right)EP_B.
\]
Combining this with approximate tensorization and the local MLSI yields
\[
    \alpha(B)\geq
    \frac{\displaystyle
    \min_{1\leq i\leq I}\{\alpha(U_iV_i),\alpha(V_iW_i)\}}
    {\bigl(1+K e^{-\xi(4L/3)^\delta}\bigr)
     \bigl(1+I^{-1}\bigr)}.
\]
Repeat this step on a piece attaining the minimum whenever its longest
side exceeds $L$. Each cut reduces that side below $L$ and preserves
fatness, so at most three cuts are needed. Boxes whose sides are already
at most $L$ require no cut. Taking the infimum therefore gives
\begin{equation}\label{eq:mlsirecursion}
    \alpha(4L/3)\geq
    \left[
      \bigl(1+K e^{-\xi(4L/3)^\delta}\bigr)
      \left(1+\frac{1}{\lfloor L^{(1-\delta)/2}\rfloor}\right)
    \right]^{-3}\alpha(L).
\end{equation}
Choose $L_0$ large enough for all preceding estimates and iterate along
$L_k=(4/3)^kL_0$. The error is bounded by
\begin{equation}
    \prod_{k=0}^{\infty}\left[
      \bigl(1+K e^{-\xi(4L_k/3)^\delta}\bigr)
      \left(1+\frac{1}{\lfloor L_k^{(1-\delta)/2}\rfloor}\right)
    \right]^{-3} 
      \geq \exp\left( -3 K\sum_{k\geq0}e^{-\xi(4L_k/3)^\delta} -3 \sum_{k\geq0}\frac{1}{\lfloor L_k^{(1-\delta)/2}\rfloor}\right) ,
\end{equation}
where we used $\ln(1+x)\leq x$ for $x\geq 0$.

Both
\[
    \sum_{k\geq0}e^{-\xi(4L_k/3)^\delta}
    \quad\text{and}\quad
    \sum_{k\geq0}\frac{1}{\lfloor L_k^{(1-\delta)/2}\rfloor}
\]
are finite. Hence the product of the factors in
\eqref{eq:mlsirecursion} has a positive lower bound $C$, independent of
the system size. Monotonicity of $\alpha(L)$ extends the bound from
$L_k$ to every $L\geq L_0$.
\end{proof}

\section*{Acknowledgments}
 The DFG supported this work under the grants TRR 352–Project-ID 470903074 (A.\,C., S.\,S., S.\,W.) and EXC-2111-390814868 (S.W.). L.C. acknowledges funding from the European Research Council under Grant Agreement No.~101001976 (project EQUIPTNT) and from the Munich Quantum Valley, which is supported by the Bavarian state government with funds from the Hightech Agenda Bayern Plus. This project was funded within the QuantERA II Programme, which received funding from the European Union's Horizon 2020 research and innovation programme under Grant Agreement No.~101017733.

\section*{AI Statement}
The main conceptual contributions were developed by the authors. We acknowledge the use of LLMs for a proof of Lemma~\ref{lem:characterbound} as well as testing, refining proof ideas, proofreading, and generating Ti$k$Z figures. 
 The authors take full responsibility for the final manuscript in its entirety.

\printbibliography
\end{document}